\newtheorem{observation}{Observation}
\newcommand{\LF}{\ensuremath{\mathit{LF}}}
\newcommand{\BWT}{\ensuremath{\mathit{BWT}}}
\newcommand{\DISA}{\ensuremath{\mathit{DISA}}}
\newcommand{\ISA}{\ensuremath{\mathit{ISA}}}
\newcommand{\SA}{\ensuremath{\mathit{SA}}}
\newcommand{\DSA}{\ensuremath{\mathit{DSA}}}
\newcommand{\LCP}{\ensuremath{\mathit{LCP}}}
\newcommand{\DLCP}{\ensuremath{\mathit{DLCP}}}
\newcommand{\PLCP}{\ensuremath{\mathit{PLCP}}}
\newcommand{\TDE}{\ensuremath{\mathit{TDE}}}
\newcommand{\PTDE}{\ensuremath{\mathit{PTDE}}}
\newcommand{\RMQ}{\ensuremath{\mathrm{RMQ}}}
\newcommand{\PSV}{\ensuremath{\mathrm{PSV}}}
\newcommand{\NSV}{\ensuremath{\mathrm{NSV}}}
\newcommand{\SV}{\ensuremath{\mathrm{SV}}}
\newcommand{\rank}{\ensuremath{\mathrm{rank}}}
\begin{document}

\title{Fully-Functional Suffix Trees and Optimal Text Searching \\ in BWT-runs Bounded Space
%\title{Optimal-Time Text Indexing in BWT-runs Bounded Space
\thanks{Partially funded by Basal Funds FB0001, Conicyt, by Fondecyt Grants
1-171058 and 1-170048, Chile, and by the Danish Research Council DFF-4005-00267.
An early partial version of this article appeared in {\em Proc. SODA 2018} \cite{GNP18}.}}
\author{Travis Gagie\inst{1,2} \and Gonzalo Navarro\inst{2,3} \and Nicola Prezza\inst{4}}
\institute{EIT, Diego Portales University, Chile 
\and 
Center for Biotechnology and Bioengineering (CeBiB), Chile
\and 
Department of Computer Science, University of Chile, Chile 
\and 
Department of Computer Science, University of Pisa, Italy}

\maketitle

\begin{abstract}
Indexing highly repetitive texts --- such as genomic databases, software 
repositories and versioned text collections --- has become an important problem
since the turn of the millennium. A relevant compressibility measure for 
repetitive texts is $r$, the number of runs in their Burrows-Wheeler Transforms
(BWTs). One of the earliest indexes for repetitive collections, the Run-Length
FM-index, used $O(r)$ space and was able to efficiently count the number of
occurrences of a pattern of length $m$ in the text (in loglogarithmic time per 
pattern symbol, with current techniques). However, it was unable to locate the
positions of those occurrences efficiently within a space bounded in terms of
$r$. Since then, a number of other indexes with space bounded by other measures
of repetitiveness --- the number of phrases in the Lempel-Ziv parse, the size
of the smallest grammar generating (only) the text, the size of the smallest automaton
recognizing the text factors --- have been proposed for efficiently locating,
but not directly counting, the occurrences of a pattern. In this paper we close
this long-standing problem, showing how to extend the Run-Length FM-index
so that it can locate the $occ$ occurrences efficiently within $O(r)$ space (in
loglogarithmic time each), and reaching optimal time, $O(m+occ)$, within 
$O(r\log\log_w(\sigma+n/r))$ space, for a text of length $n$ over an alphabet
of size $\sigma$ on a RAM machine with words of $w=\Omega(\log n)$ bits. 
Within that space, our index can also count in optimal time, $O(m)$. 
Multiplying the space by $O(w/\log\sigma)$, we support count and locate in 
$O(\lceil m\log(\sigma)/w\rceil)$ and $O(\lceil m\log(\sigma)/w\rceil+occ)$ 
time, which is optimal in the packed setting and had not been obtained before 
in compressed space. We also 
describe a structure using $O(r\log(n/r))$ space that replaces the text and 
extracts any text substring of length $\ell$ in almost-optimal time
$O(\log(n/r)+\ell\log(\sigma)/w)$. Within that space, we similarly provide 
direct access to suffix array, inverse suffix array, and longest common prefix 
array cells, and extend these capabilities to full suffix tree functionality, 
typically in $O(\log(n/r))$ time per operation. 
Our experiments show that our $O(r)$-space index outperforms the
space-competitive alternatives by 1--2 orders of magnitude.
\end{abstract}

%!TEX root = paper.tex

\section{Introduction}

The data deluge has become a pervasive problem in most organizations that aim
to collect and process data. %, even in relatively modest and focused scenarios.
We are concerned about string (or text, or sequence) data, formed by collections
of symbol sequences. This includes natural language text collections, DNA and
protein sequences, source code repositories, semistructured text, and many 
others. The rate at which those sequence collections are growing is daunting, 
in some cases outpacing Moore's Law by a significant margin \cite{Plos15}. 
%One of the
%key technologies to handle those growing datasets is {\em compact data
%structures}, which aim to handle the data directly in compressed form, without
%ever decompressing it \cite{Nav16}. In general, however, compact data 
%structures do not compress the data by orders of magnitude,
%but rather offer complex functionality within the space required by the
%raw data, or a moderate fraction of it. As such, they do not seem to offer the
%significant space reductions that are required to curb the sharply growing
%sizes of today's datasets.
%
%What makes a fundamental difference, however, is that the fastest-growing
A key to handle this growth is the fact that the amount of {\em unique} material
does not grow at the same pace of the sequences. Indeed, the fastest-growing
string collections are in many cases {\em highly repetitive}, that is, most 
of the strings can be obtained from others with a few modifications.
For example, most genome sequence collections store many genomes from the
same species, which in the case of, say, humans differ by 0.1\% \cite{PHDR00}
(there is some discussion about the exact percentage). The 1000-genomes 
project\footnote{{\tt http://www.internationalgenome.org}} uses a 
Lempel-Ziv-like compression mechanism that reports compression ratios around 1\%
\cite{FLCB11} (i.e., the compressed space is about two orders of magnitude less
than the uncompressed space). Versioned document collections and software 
repositories are another natural source of repetitiveness. For example, 
Wikipedia reports that, by June 2015, there were over 20 revisions (i.e., 
versions) per article in its 10 TB content, and that {\tt p7zip}\footnote{\tt
http://p7zip.sourceforge.net} compressed it to about 1\%. They also report 
that what grows the fastest today are the revisions rather than the new 
articles, which increases repetitiveness.%
\footnote{{\tt https://en.wikipedia.org/wiki/Wikipedia:Size\_of\_Wikipedia}}
A study of GitHub (which surpassed 20 TB in 2016)%
\footnote{{\tt https://blog.sourced.tech/post/tab\_vs\_spaces}}
reports a ratio of {\em commit} (new versions) over {\em create} (brand new
projects) around 20.%
\footnote{{\tt http://blog.coderstats.net/github/2013/event-types}, see the
ratios of {\em push}/{\em create} and {\em commit}.{\em push}.}
%Repetitiveness also arises in other less obvious
%scenarios: it is estimated that about 50\% of (non-versioned) software sources 
%\cite{KG05},
%40\% of the Web pages \cite{Hen06}, 50\% of emails \cite{EO06}, and 80\% of 
%tweets \cite{TAHHG13}, are near-duplicates.

Version management systems offer a good solution to the problem of
providing efficient {\em access} to the documents of a versioned collection, at
least when the versioning structure is known. They factor out repetitiveness 
by storing the first version of a document in plain form and then the edits of 
each version of it. It is much more challenging, however, to provide more
advanced functionalities, such as {\em counting} or {\em locating} the positions
where a string pattern occurs across the collection. 

An application field where this need is most pressing is bioinformatics.
The {\em FM-index}~\cite{FM05,FMMN07} was extremely successful in reducing the
size of classical data structures for pattern searching, such as suffix 
trees \cite{Wei73} or suffix arrays \cite{MM93}, to the {\em statistical} 
entropy of the sequence while emulating a significant part of their 
functionality. The FM-index has had a surprising impact far beyond the
boundaries of theoretical computer science: if someone now sends his or her
genome to be analyzed, it will almost certainly be sequenced on a machine
built by Illumina\footnote{{\tt https://www.illumina.com}. More than 94\% of the
human genomes in SRA \cite{KSL12} were sequenced by Illumina.}, which will 
produce a huge collection of quite short substrings of that genome, called 
reads. Those reads' closest matches will then be sought in a reference genome, 
to determine where they most likely came from in the newly-sequenced target 
genome, and finally a list of the likely differences between the target and 
the reference genomes will be reported. The searches in the reference genome 
will be done almost certainly using software such as {\tt Bowtie}\footnote{\tt
http://bowtie-bio.sourceforge.net}, {\tt BWA}\footnote{\tt
http://bio-bwa.sourceforge.net}, or {\tt Soap2}\footnote{\tt 
http://soap.genomics.org.cn}, all of them based on the FM-index.%
\footnote{Ben Langmead, personal communication.}

Genomic analysis is already an important field of research, and a rapidly
growing industry \cite{SL13}. As a result of dramatic advances in sequencing 
technology, we now have datasets of tens of thousands of genomes, and bigger 
ones are on their way (e.g., there is already a 100,000-human-genomes project%
\footnote{\tt https://www.genomicsengland.co.uk/the-100000-genomes-project}). 
Unfortunately, current software based on FM-indexes cannot handle such
massive datasets: they use 2 bits per base at the very least \cite{KS18}.
%this is  because of one of the components of a standard FM-index
%--- the suffix-array sample used to calculate the matches' positions in the
%reference from the lexicographic ranks of the suffixes they prefix --- the
%product of its size and its query time is at least linear in the size of the
%reference.  That is, if we want to index thousands of genomes instead of
%only one or two, then the suffix-array sample must be either much larger,
%much slower or both.  A few software packages based on alternatives to
%FM-indexes have been proposed for handling massive datasets
%(see,e.g.,~\cite{ht>
%nih.gov/pmc/articles/PMC5954285/ [europepmc.org]}) but they lack the full
%functionality packages based on FM-indexes would have if they could scale to
%those datasets.
Even though the FM-index can represent the sequences within their statistical 
entropy \cite{FMMN07}, this measure is insensitive to the repetitiveness of 
those datasets \cite[Lem.~2.6]{KN13}, and thus the FM-indexes would grow 
proportionally to the sizes of the sequences. Using current tools, indexing
a set of 100{,}000 human genomes would require 75 TB of storage at the very 
least, and the index would have to reside in main memory to operate efficiently.
To handle such a challenge 
we need, instead, compressed text indexes {\em whose size is proportional to 
the amount of unique material} in those huge datasets.

\subsection{Related work}

M\"akinen et al.~\cite{MN05,MNSV08,MNSVrecomb09,MNSV09} pioneered the research 
on indexing and searching repetitive collections. They regard the collection 
as a single concatenated text $T[1..n]$ with separator symbols, and note that 
the number $r$ of {\em runs} (i.e., maximal substrings formed by a single 
symbol) in the {\em Burrows-Wheeler Transform} \cite{BW94} of the text is 
relatively very 
low on repetitive texts. Their index, {\em Run-Length FM-Index (RLFM-index)}, 
uses $O(r)$ words and can count the number of occurrences of a pattern 
$P[1..m]$ in time $O(m\log n)$ and even less. However, they are unable to 
locate where those positions are in $T$ unless they add a set of samples 
that require $\Theta(n/s)$ words in order to offer $O(s\log n)$ time
to locate each occurrence. On repetitive texts, either this sampled structure is
orders of magnitude larger than the $O(r)$-size basic index, or the locating
time is extremely high.

Many proposals since then aimed at reducing the locating time by building on
other compression methods that perform well on repetitive texts: indexes based 
on the Lempel-Ziv parse \cite{LZ76} of $T$, with size bounded in terms of
the number $z$ of phrases 
\cite{KN13,gagie2014lz77,NIIBT15,BCGPR15,Nav17,BEGV18,CE18};
indexes based on the smallest context-free grammar (or an approximation thereof)
that generates $T$ and only $T$ \cite{KY00,CLLPPSS05}, with size bounded in 
terms of the size $g$ of the grammar \cite{CNfi10,CNspire12,GGKNP12,Nav18}; and
indexes based on the size $e$ of the smallest automaton (CDAWG) \cite{BBHMCE87}
recognizing the substrings of $T$ 
\cite{BCGPR15,TGFIA17,BCspire17}. 
Table~\ref{tab:related} summarizes the pareto-optimal achievements.
We do not consider in this paper indexes based on other repetitiveness 
measures that only apply in restricted scenarios, such as those based on
Relative Lempel-Ziv \cite{KPZ10,DJSS14,BGGMS14,FGNPS17} or on alignments
\cite{NPCHIMP13,NPLHLMP13}.

\begin{table}[p]
\begin{center}
\begin{tabular}{l|c|c}
Index & Space & {\bf Count time} \\
\hline
Navarro~\cite[Thm.~6]{Nav18}~~
	& ~~$O(z\log(n/z))$~~
	& ~~$O(m\log n + m\log^{2+\epsilon} (n/z))$ \\
\hline
Navarro~\cite[Thm.~5]{Nav18}~~
	& ~~$O(g)$~~
	& ~~$O(m^2+m\log^{2+\epsilon} g)$ \\
\hline
M\"akinen et al.~\cite[Thm.~17]{MNSV09}~~
	& ~~$O(r)$~~
	& ~~$O(m(\frac{\log\sigma}{\log\log r} + (\log\log n)^2))$ \\
%M\"akinen et al.~\cite[Thm.~19]{MNSV09} 
%	& $O(r)$ 
%	& $O(m\log n)$ \\
{\bf This paper (Lem.~\ref{lem:rlfm})}
	& $O(r)$ 
	& $O(m\log\log_w(\sigma+n/r))$ \\
{\bf This paper (Thm.~\ref{thm:optimal})}
& $O(r\log\log_w (\sigma+n/r))$
& $O(m)$ \\
{\bf This paper (Thm.~\ref{thm:optimal packed})}
& $O(r w\log_\sigma\log_w(\sigma+n/r))$ 
& $O(\lceil m\log(\sigma)/w\rceil)$ \\
\end{tabular}
\end{center}

\begin{center}
\begin{tabular}{l|c|c}
Index & Space & {\bf Locate time} \\
\hline
Kreft and Navarro \cite[Thm.~4.11]{KN13} 
	& $O(z)$ 
	& $O(m^2 h+(m+occ)\log z)$ \\
%Navarro~\cite[Thm.~4]{Nav17}
%	& $O(z\log(n/z))$
%	& $O(m^2 \log n + occ \log^\epsilon n)$ \\
Gagie et al.~\cite[Thm.~4]{gagie2014lz77} 
	& $O(z\log(n/z))$ 
	& $O(m\log m + occ \log\log n)$ \\
Bille et al.~\cite[Thm.~1]{BEGV18}
& $O(z\log (n/z))$
& ~~$O(m(1+\log^\epsilon z / \log(n/z))+occ(\log^\epsilon z+\log\log n))$ \\
Christiansen and Ettienne~\cite[Thm.~2(3)]{CE18}
& $O(z\log (n/z))$
& ~~$O(m+\log^\epsilon z + occ(\log^\epsilon z + \log\log n))$ \\
Christiansen and Ettienne~\cite[Thm.~2(1)]{CE18}
& $O(z\log (n/z)+z\log\log z)$
& ~~$O(m + occ(\log^\epsilon z + \log\log n))$ \\
%Nishimoto et al.~\cite[Thm.~1]{NIIBT15}
%	& $O(z\log n\log^* n)$
%	& ~~$O(m\log\log n\log\log z+\log z \log m \log n(\log^*n)^2$ \\
%        & & ~~~~$+occ\log n)$ \\
%%	& $O(m\log\log n\log\log z+\log z \log m \log^*n(\log n+\log m\log^* n)+occ\log n)$ \\
Bille et al.~\cite[Thm.~1]{BEGV18}
& $O(z\log (n/z)\log\log z)$
& ~~$O(m+occ\log\log n)$ \\
\hline
%Claude and Navarro~\cite[Thm.~4]{CNfi10} 
%	& $O(g)$ 
%	& $O(m(m+\log n)\log n + occ \log^2 n)$ \\
Claude and Navarro~\cite[Thm.~1]{CNspire12} 
	& $O(g)$ 
	& $O(m^2\log\log_g n + (m+occ) \log g)$ \\
Gagie et al.~\cite[Thm.~4]{GGKNP12}
	& $O(g+z\log\log z)$
	& $O(m^2+(m+occ)\log\log n)$ \\
\hline
M\"akinen et al.~\cite[Thm.~20]{MNSV09} 
	& $O(r+n/s)$ 
	& $O((m+s\cdot occ)(\frac{\log\sigma}{\log\log r} + (\log\log n)^2))$ \\
%M\"akinen et al.~\cite[Thm.~20]{MNSV09} 
%	& $O(r+n/s)$ 
%	& $O((m+s\cdot occ)\log n)$ \\
Belazzougui et al.~\cite[Thm.~3]{BCGPR15}
	& $O(\overline{r}+z)$ 
	& $O(m(\log z + \log\log n) + occ (\log^\epsilon z + \log\log n))$ \\
{\bf This paper (Thm.~\ref{thm:locating})}
	& $O(r)$
	& $O((m+occ)\log\log_w(\sigma+n/r))$ \\
%{\bf This paper (Thm.~\ref{thm:locating})}
%	& ~~$O(r\log\log_w(n/r))~~$
%	& $O(m\log\log_w(\sigma+n/r)+occ)$ \\
{\bf This paper (Thm.~\ref{thm:optimal})}
	& $O(r\log\log_w (\sigma+n/r))$
	& $O(m+occ)$ \\
{\bf This paper (Thm.~\ref{thm:optimal packed})}
& $O(r w \log_\sigma\log_w(\sigma+n/r))$
& $O(\lceil m\log(\sigma)/w\rceil+occ)$ \\
\hline
%Belazzougui et al.~\cite[Thm.~4]{BCGPR15}
%	& $O(e)$
%	& $O(m\log\log n + occ)$ \\
%Takagi et al.~\cite[Thm.~9]{TGFIA17}
%	& $O(\overline{e})$ 
%	& $O(m+occ)$ \\
Belazzougui and Cunial \cite[Thm.~1]{BCspire17}~~
	& $O(e)$ 
	& $O(m+occ)$ \\
\end{tabular}
\end{center}

\begin{center}
\begin{tabular}{l|c|c}
Structure & Space & {\bf Extract time} \\
\hline
Kreft and Navarro \cite[Thm.~4.11]{KN13} 
	& $O(z)$ 
	& $O(\ell\, h)$ \\
%Gagie et al.~\cite[Thm.~1--2]{GGP15} 
%	& $O(z\log n)$ 
%	& $O(\log n + \ell)$ \\
%Rytter \cite{Ryt03}, Charikar et al.~\cite{CLLPPSS05}
%	& $O(z\log(n/z))$ 
%	& $O(\log n + \ell)$ \\
%Bille et al.~\cite[Lem.~5]{BEGV18}
%	& $O(z\log(n/z))$ 
%	& $O(\log(n/z) + \ell)$ \\
Gagie et al.~\cite[Thm.~2]{BGGKOPT15}   
	& $O(z\log(n/z))$ 
	& $O((1+\ell/\log_\sigma n)\log(n/z))$ \\
\hline
%Bille et al.~\cite[Thm.~1.1]{BLRSRW15} 
%	& $O(g)$ 
%	& $O(\log n + \ell)$ \\
Belazzougui et al.~\cite[Thm.~1]{BPT15} 
	& $O(g)$ 
	& $O(\log n + \ell/\log_\sigma n)$ \\
Belazzougui et al.~\cite[Thm.~2]{BPT15} 
	& $O(g\log^\epsilon n\log(n/g))$ 
	& $O(\log n/\log\log n + \ell/\log_\sigma n)$ \\
\hline
M\"akinen et al.~\cite[Thm.~20]{MNSV09} 
	& $O(r+n/s)$ 
	& ~~$O((s+\ell)(\frac{\log\sigma}{\log\log r} + (\log\log n)^2))$ \\
%M\"akinen et al.~\cite[Thm.~20]{MNSV09} 
%	& $O(r+n/s)$ 
%	& $O((\ell+s)\log n)$ \\
{\bf This paper (Thm.~\ref{thm:extract})}
	& ~~$O(r\log(n/r))$~~
	& $O(\log(n/r) + \ell\log(\sigma)/w)$ \\
\hline
%Takagi et al.~\cite[Thm.~9]{TGFIA17} 
%	& $O(\overline{e})$ 
%	& $O(\log n + \ell)$ \\
Belazzougui and Cunial \cite[Thm.~1]{BCspire17}~~
	& $O(e)$ 
	& $O(\log n+\ell)$ \\
\end{tabular}
\end{center}

\begin{center}
\begin{tabular}{l|c|c}
Structure & Space & {\bf Typical suffix tree operation time} \\
\hline
M\"akinen et al.~\cite[Thm.~30]{MNSV09} 
	& $O(r+n/s)$ 
	& ~~$O(s(\frac{\log\sigma}{\log\log r} + (\log\log n)^2))$ \\
{\bf This paper (Thm.~\ref{thm:stree})}
	& ~~$O(r\log(n/r))$~~
	& $O(\log(n/r))$ \\
\hline
Belazzougui and Cunial \cite[Thm.~1]{BC17}~~
	& $O(\overline{e})$ 
	& $O(\log n)$ \\
\end{tabular}
\end{center}
\caption{Previous and our new results on counting, locating, extracting, and
supporting suffix tree functionality. We simplified some formulas with tight 
upper bounds. The variables are the text size $n$, pattern length $m$, number 
of occurrences $occ$ of the pattern, alphabet size $\sigma$, extracted length
$\ell$, Lempel-Ziv parsing size $z$, smallest grammar size $g$, $\BWT$ 
runs $r$, CDAWG size $e$, and machine word length in bits $w$. Variable 
$h\le z$ is the depth of the dependency chain in the Lempel-Ziv parse, and 
$\epsilon>0$ is an arbitrarily small constant. Symbols $\overline{r}$ or 
$\overline{e}$ mean $r$ or $e$ of $T$ plus $r$ or $e$ of its reverse. 
%The $z$ in Nishimoto et al.~\cite{NIIBT15} refers to the Lempel-Ziv variant 
%that does not allow overlaps between sources and targets (Kreft and Navarro 
%\cite{KN13} claim the same but their index actually works in either variant).
The $z$ of Kreft and Navarro \cite{KN13} refers to the Lempel-Ziv variant that 
does not allow overlaps between sources and targets, but their index actually 
works in either variant.}
%Takagi et al.~\cite{TGFIA17} claim time $O(m\log\sigma+occ)$ but
%they can reach $O(m+occ)$ by using perfect hashing.}
%Rytter \cite{Ryt03} and Charikar et al.~\cite{CLLPPSS05} enable the given
%extraction time because they produce balanced grammars of the given size
%(as several others that came later).}
\label{tab:related}
\end{table}

%\cite{MNKS13} uses ESP but do not give bounds in terms of $occ$.

There are a few known asymptotic bounds between the repetitiveness measures 
$r$, $z$, $g$, and $e$: $z \le g = O(z\log(n/z))$ \cite{Ryt03,CLLPPSS05,Jez16}
and $e = \Omega(\max(r,z,g))$ \cite{BCGPR15,BC17}. 
Examples of string families are known that show that $r$ is not
comparable with $z$ and $g$ \cite{BCGPR15,Pre16}. Experimental results
\cite{MNSV09,KN13,BCGPR15,CFMPN16}, on the other hand,
suggest that in typical repetitive texts it holds $z < r \approx g \ll e$.

For highly repetitive texts, one hopes to have a compressed index not only
able to count and locate pattern occurrences, but also to {\em replace} the
text with a compressed version that nonetheless can efficiently {\em extract}
any substring $T[i..i+\ell]$. Indexes that, implicitly or not, contain a
replacement of $T$, are called {\em self-indexes}. As can be seen in
Table~\ref{tab:related}, self-indexes with $O(z)$ space require up to $O(z)$
time per extracted character, and none exists within $O(r)$ space. Good
extraction times are instead obtained with $O(g)$, $O(z\log(n/z))$, or $O(e)$
space. A lower bound for grammar-based representations \cite{VY13} shows that 
$\Omega((\log n)^{1-\epsilon}/\log g)$ time, for
any constant $\epsilon>0$, is needed to access one random position within 
$O(\mathrm{poly}(g))$ space. This bound shows that various current techniques
using structures bounded in terms of $g$ or $z$ 
\cite{BLRSRW15,BPT15,GGP15,BGGKOPT15} are nearly optimal (note that 
$g =\Omega(\log n)$, so the space of all these structures is 
$O(\mathrm{poly}(g))$). In an extended article \cite[Thm.~6]{CVY13}, the 
authors give a lower bound in terms of $r$, for binary texts on a RAM machine
of $w=\Theta(\log n)$ bits: $\Omega((\log n)^{1-\epsilon})$ for some constant
$\epsilon$ when using $O(\mathrm{poly}(r\log n))$ space.

In more sophisticated applications, especially in bioinformatics, it is 
desirable to support a more complex set of operations, which constitute a 
full suffix tree functionality \cite{Gus97,Ohl13,MBCT15}. While M\"akinen et 
al.~\cite{MNSV09} offered suffix tree functionality, they had the same problem
of needing $\Theta(n/s)$ space to achieve $O(s\log n)$ time for most suffix tree
operations. Only recently a suffix tree of size $O(\overline{e})$ supports
most operations in time $O(\log n)$ \cite{BCGPR15,BC17}, where $\overline{e}$
refers to the $e$ measure of $T$ plus that of $T$ reversed.

Summarizing Table~\ref{tab:related} and our 
discussion, the situation on repetitive text indexing is as follows.

\begin{enumerate}
\item The RLFM-index is the only structure able to count the occurrences of 
$P$ in $T$ in time $O(m\log n)$. However, it 
does not offer efficient locating within $O(r)$ space.
\item The only structure clearly smaller than the RLFM-index, using $O(z)$
space \cite{KN13}, has unbounded locate time. Structures using about the 
same space, $O(g)$, have an $\Omega(m^2)$ one-time overhead in the locate
time \cite{CNfi10,CNspire12,GGKNP12,Nav18}.
\item Structures offering lower locate times require 
$\Omega(z\log(n/z))$ space \cite{gagie2014lz77,NIIBT15,BEGV18,CE18,Nav18}, 
$\Theta(\overline{r}+z)$ space \cite{BCGPR15} (where $\overline{r}$ is the sum 
of $r$ for $T$ and its reverse), or 
$\Omega(e)$ space \cite{BCGPR15,TGFIA17,BCspire17}.
\item Self-indexes with efficient extraction require $\Omega(z\log(n/z))$ space
\cite{Ryt03,CLLPPSS05,GGP15,BGGKOPT15,BEGV18}, $\Omega(g)$ space 
\cite{BLRSRW15,BPT15}, or $\Omega(e)$ space \cite{TGFIA17,BCspire17}.
\item The only efficient compressed suffix tree requires $\Theta(\overline{e})$
space \cite{BC17}.
\item Only a few of all these indexes have been implemented, as far as we know
\cite{MNSV09,CNfi10,KN13,BCGPR15}.
\end{enumerate}

\subsection{Contributions}

Efficiently locating the occurrences of $P$ in $T$ within $O(r)$ space has been
a bottleneck and an open problem for almost a decade.
In this paper we give the first solution to this problem.
Our precise contributions, largely detailed in
Tables~\ref{tab:related} and \ref{tab:contrib}, are the 
following:

\begin{enumerate}
\item We improve the counting time of the RLFM-index to 
$O(m\log\log_w(\sigma+n/r))$, where $\sigma\le r$ is the alphabet size of $T$,
while retaining the $O(r)$ space.
\item We show how to locate each occurrence in time $O(\log\log_w(n/r))$,
within $O(r)$ space. We reduce the locate time to $O(1)$ per 
occurrence by using slightly more space, $O(r\log\log_w (n/r))$.
\item By using $O(r\log\log_w(\sigma+n/r))$ space, we obtain optimal locate 
time in the general setting, $O(m+occ)$, as well as optimal counting time,
$O(m)$. This had been obtained before only with space
bounds $O(e)$ \cite{BCspire17} or $O(\overline{e})$
\cite{TGFIA17}. 
\item By increasing the space to $O(r w\log_\sigma\log_w(\sigma+n/r))$, 
we obtain optimal locate time, $O(\lceil m\log(\sigma)/w\rceil +occ)$,
and optimal counting time, $O(\lceil m\log(\sigma)/w\rceil)$, in the packed 
setting (i.e., the pattern symbols come packed in blocks of $w/\log\sigma$ 
symbols per word). This had not been achieved so far by any compressed 
index, but only by uncompressed ones \cite{NN17}.
\item We give the first structure built on $\BWT$ runs that 
replaces $T$ while retaining direct access. It extracts any substring of 
length $\ell$ in time $O(\log(n/r)+\ell\log(\sigma)/w)$, using $O(r\log(n/r))$ 
space. As discussed, even the additive penalty is near-optimal
\cite[Thm.~6]{CVY13}. Within the same space, we also obtain optimal locating 
and counting time, as well as accessing subarrays of length $\ell$ of
the suffix array, inverse suffix array, and longest common prefix array of $T$,
in time $O(\log(n/r)+\ell)$.
\item We give the first compressed suffix tree whose space is bounded in 
terms of $r$, $O(r\log(n/r))$ words. It implements most navigation
operations in time $O(\log(n/r))$. There exist only comparable suffix trees
within $O(\overline{e})$ space \cite{BC17}, taking $O(\log n)$ time for most
operations.
\item We provide a proof-of-concept implementation of the most basic index
(the one locating within $O(r)$ space), and show that it outperforms all the
other implemented alternatives by orders of magnitude in space or in time to
locate pattern occurrences.
%\item We uncover new relations between $r$, $g$, $g_{rl} \le g$ (the 
%smallest {\em run-length } context-free grammar \cite{NIIBT16}, which allows 
%size-1 rules of the form $X \rightarrow Y^t$), $z$, $z_{no}$ (the Lempel-Ziv
%parse that does not allow overlaps), and $b$ (the smallest
%bidirectional parsing of $T$ \cite{SS82}). Namely, we show that $b \le 2r$,
%$z \le 2g_{rl} = O(b\log(n/b))$, $\max(g,z_{no}) = O(b\log^2(n/b))$,
%and that $z$ can be $\Omega(b\log n)$.
\end{enumerate}

\begin{table}[t]
\begin{center}
\begin{tabular}{l|c|c}
Functionality & Space (words) & Time \\
\hline
%Count (Lem.~\ref{lem:rlfm})      
%	& $O(r)$        & $O(m\log\log_w(\sigma+n/r))$ \\
%Count (Thm.~\ref{thm:optimal})      
%& $O(r\log\log_w (\sigma+n/r))$        & $O(m)$ \\
%Count (Thm.~\ref{thm:optimal packed})      
%& $O(r w \log_\sigma\log_w(\sigma+n/r))$        & $O(\lceil m\log(\sigma)/w\rceil)$ \\
Count $+$ Locate (Thm.~\ref{thm:locating})
	& $O(r)$        & ~~$O(m\log\log_w(\sigma+n/r)+occ\log\log_w(n/r))$ \\
Count $+$ Locate (Lem.~\ref{lemma: general locate})
	& $O(r\log\log_w(n/r))$ & $O(m\log\log_w(\sigma+n/r)+occ)$ \\
Count $+$ Locate (Thm.~\ref{thm:optimal})
	& $O(r\log\log_w(\sigma+n/r))$  & $O(m+occ)$ \\
Count $+$ Locate (Thm.~\ref{thm:optimal packed})
& $O(r w \log_\sigma\log_w(\sigma+n/r))$  & $O(\lceil m\log(\sigma)/w\rceil+occ)$ \\
Extract (Thm.~\ref{thm:extract}) 
& $O(r\log(n/r))$ & $O(\log(n/r)+\ell\log(\sigma)/w)$ \\
Access $\SA$, $\ISA$, $\LCP$ (Thm.~\ref{thm:dsa}--\ref{thm:dlcp})~~ 
	& ~~$O(r\log(n/r))$~~ & $O(\log(n/r)+\ell)$ \\
Count $+$ Locate (Thm.~\ref{thm:optimal 2})
	& $O(r\log(n/r))$  & $O(m+occ)$ \\
%Count $+$ Locate (Thm.~\ref{thm:optimal packed 2})
%& $O(r w \log_\sigma(n/r))$  & $O(\lceil m\log(\sigma)/w\rceil+occ)$ \\
Suffix tree (Thm.~\ref{thm:stree})
	& ~~$O(r\log(n/r))$~~ & $O(\log(n/r))$ for most operations \\
\end{tabular}
\end{center}
\caption{Our contributions. For any ``Count $+$ Locate'', we can do only
``Count'' in the time given by setting $occ=0$.}
\label{tab:contrib}
\end{table}

Contribution 1 is a simple update of the RLFM-index \cite{MNSV09} with newer 
data structures for rank and predecessor queries \cite{BN14}.
We present it in Section~\ref{sec:basics}, together with a review of the basic
concepts needed to follow the paper.

Contribution 2 is one of the central parts of the paper, and is obtained in
Section~\ref{sec:locate} in two steps. 
The first uses the fact that we can carry out the classical RLFM-index
counting process for $P$ in a way that we always know the position of one 
occurrence in $T$ \cite{Pre16,PP16}; we give a simpler proof of this fact
in Lemma~\ref{lem:find_one}.
The second shows that, if we know the position in $T$ of one occurrence of
$\BWT$, then we can quickly 
obtain the preceding and following ones with an $O(r)$-size
sampling. This is achieved by using the $\BWT$ runs to induce {\em phrases} in 
$T$ (which are somewhat analogous to the Lempel-Ziv phrases \cite{LZ76}) and 
showing that the positions of occurrences within phrases can be obtained from 
the positions of their preceding phrase start. The time $O(1)$ is obtained
by using an extended sampling. 

For Contributions 3 and 4, we use in Section~\ref{sec:optimal} the fact that the
RLFM-index on a text regarded as a sequence of overlapping metasymbols of 
length $s$ has at most $rs$ runs, so that we can process the pattern by chunks
of $s$ symbols. The optimal packed time is obtained by enlarging the samplings.

In Section~\ref{sec:sa}, Contribution 5 uses an analogue of the Block Tree 
\cite{BGGKOPT15} built on the $\BWT$-induced phrases, which satisfies the 
property that any distinct string has an occurrence overlapping a border 
between phrases. Further, it is shown that direct access to the
suffix array $\SA$, inverse suffix array $\ISA$, and array $\LCP$ of $T$,
can be supported in a similar way because they inherit the same repetitiveness 
properties of the text.

Contribution 6 needs, in addition to accessing those arrays, some sophisticated
operations on the $\LCP$ array \cite{FMN09} that are not well supported by 
Block Trees. In Section~\ref{sec:stree}, we implement suffix trees by showing
that a run-length context-free grammar \cite{NIIBT16} of size $O(r\log(n/r))$ 
can be built on the differential $\LCP$ array, and 
then implement the required operations on it.

The results of Contribution 7 are shown in Section~\ref{sec:experiments}. Our
experimental results show that our simple $O(r)$-space index outperforms the 
alternatives by orders of magnitude when locating the occurrences of a pattern,
while being simultaneously smaller or nearly as small. The only compact
structure outperforming our index, the CDAWG, is an order of magnitude larger.

Further, in Section~\ref{sec:construction} we describe construction algorithms 
for all our data structures, achieving construction spaces bounded in terms
of $r$ for the simpler and most practical structures. We finally conclude in
Section~\ref{sec:conclusion}.

This article is an extension of a conference version presented in {\em SODA
2018} \cite{GNP18}. The extension consists, on the one hand, in a significant 
improvement in Contributions 3 and 4: in Section~\ref{sec:optimal}, optimal 
time locating is now obtained in a much simpler way and in significantly less 
space. Further, optimal time is obtained as well, which is new.
On the other hand, Contribution 6, that is, the 
machinery to support suffix tree functionality in Section~\ref{sec:stree}, is
new. We also present an improved implementation in 
Section~\ref{sec:experiments}, with better experimental results. Finally, the 
construction algorithms in Section~\ref{sec:construction} are new as well.

%OJO Y este? github.com/hferrada/HydridSelfIndex

% !TEX root = paper.tex

\section{Basic Concepts}  \label{sec:basics}

A string is a sequence $S[1..\ell] = S[1] S[2] \ldots S[\ell]$, of length
$\ell = |S|$, of symbols (or characters, or letters) chosen from an alphabet 
$[1..\sigma] = \{ 1,2,\ldots,\sigma\}$, that is, $S[i] \in [1..\sigma]$ for all
$1\le i\le\ell$. We use $S[i..j] = S[i] \ldots S[j]$, with $1\le i,j\le\ell$, to
denote a substring of $S$, which is the empty string $\varepsilon$ if $i>j$.
A prefix of $S$ is a substring of the form $S[1..i]$ (also written $S[..i]$)
and a suffix is a substring of the form $S[i..\ell]$ (also written $S[i..]$). 
The juxtaposition of strings and/or symbols represents
their concatenation.

We will consider indexing a {\em text} $T[1..n]$, which is a string over
alphabet $[1..\sigma]$ terminated by the special symbol $\$ = 1$, that is, the
lexicographically smallest one, which appears only at $T[n]=\$$. This makes any
lexicographic comparison between suffixes well defined.

Our computation model is the transdichotomous RAM, with a word of 
$w=\Omega(\log n)$ bits, where all the standard arithmetic and logic
operations can be carried out in constant time. In this article we generally
measure space in words.

\subsection{Suffix Trees and Arrays}

The {\em suffix tree} \cite{Wei73} of $T[1..n]$ is a compacted trie where all
the $n$ suffixes of $T$ have been inserted. By compacted we mean that chains
of degree-1 nodes are collapsed into a single edge that is labeled with the
concatenation of the individual symbols labeling the collapsed edges. The
suffix tree has $n$ leaves and less than $n$ internal nodes. By representing
edge labels with pointers to $T$, the suffix tree uses $O(n)$ space, and can 
be built in $O(n)$ time \cite{Wei73,McC76,Ukk95,FFM00}.

The {\em suffix array} \cite{MM93} of $T[1..n]$ is an array $\SA[1..n]$ storing
a permutation of $[1..n]$ so that, for all $1 \le p < n$, the suffix
$T[\SA[p]..]$ is lexicographically smaller than the suffix
$T[\SA[p+1]..]$. Thus $\SA[p]$ is the starting position in $T$ of the
$p$th smallest suffix of $T$ in lexicographic order. This can be regarded as
an array collecting the leaves of the suffix tree. The suffix array uses
$n$ words and can be built in $O(n)$ time without building the suffix tree
\cite{KSPP05,KA05,KSB06}. 

All the occurrences of a pattern string $P[1..m]$ in $T$ can be easily spotted 
in the suffix tree or array. In the suffix tree, we descend from the root
matching the successive symbols of $P$ with the strings labeling the edges.
If $P$ is in $T$, the symbols of $P$ will be exhausted at a node $v$ or inside
an edge leading to a node $v$; this node is called the {\em locus} of $P$,
and all the $occ$ leaves descending from $v$ are the suffixes starting with $P$,
that is, the starting positions of the occurrences of $P$ in $T$. By using
perfect hashing to store the first characters of the edge labels descending
from each node of $v$, we reach the locus in optimal time $O(m)$ and the space 
is still $O(n)$. If $P$ comes packed using $w/\log\sigma$ symbols per computer 
word, we can descend in time $O(\lceil m \log(\sigma)/w\rceil)$ \cite{NN17},
which is optimal in the packed model. In the suffix array, all the suffixes 
starting with $P$ form a
range $\SA[sp..ep]$, which can be binary searched in time $O(m\log n)$, or
$O(m+\log n)$ with additional structures \cite{MM93}.

The inverse permutation of $\SA$, $\ISA[1..n]$, is called the {\em inverse 
suffix array}, so that $\ISA[i]$ is the lexicographical position of the 
suffix $T[i..]$ among all the suffixes of $T$.

Another important concept related to suffix arrays and trees is the longest
common prefix array. Let $lcp(S,S')$ be the length of the longest common prefix
between two strings $S \not=S'$, that is, $S[1..lcp(S,S')]=S'[1..lcp(S,S')]$ but
$S[lcp(S,S')+1]\not=S'[lcp(S,S')+1]$. Then we define the {\em longest
common prefix array} $\LCP[1..n]$ as $\LCP[1]=0$ and $\LCP[p] =
lcp(T[\SA[p-1]..],T[\SA[p]..])$. The $\LCP$ array uses $n$ words and can be
built in $O(n)$ time \cite{KLAAP01}.

\subsection{Self-indexes}

A {\em self-index} is a data structure built on $T[1..n]$ that provides at 
least the following functionality:
\begin{description}
	\item[Count:] Given a pattern $P[1..m]$, compute the number $occ$ of 
occurrences of $P$ in $T$.
	\item[Locate:] Given a pattern $P[1..m]$, return the $occ$ positions where $P$
occurs in $T$.
	\item[Extract:] Given a range $[i..i+\ell-1]$, return $T[i..i+\ell-1]$.
\end{description}

The last operation allows a self-index to act as a replacement of $T$, that is,
it is not necessary to store $T$ since any desired substring can be extracted
from the self-index. This can be trivially obtained by including a copy of $T$
as a part of the self-index, but it is challenging when the self-index 
must use little space.

In principle, suffix trees and arrays can be regarded as self-indexes that can
count in time $O(m)$ or $O(\lceil m\log(\sigma)/w\rceil)$ (suffix tree, by storing $occ$ in
each node $v$) and $O(m\log n)$ or $O(m+\log n)$ (suffix array, with $occ =
ep-sp+1$), locate each occurrence in $O(1)$ time, and extract
in time $O(\lceil\ell\log(\sigma)/w\rceil)$. However, they use $O(n\log n)$ bits, much more
than the $n\log\sigma$ bits needed to represent $T$ in plain form.
We are interested in {\em compressed self-indexes} \cite{NM07,Nav16}, which use
the space required by a compressed representation 
of $T$ (under some compressibility measure) plus some redundancy (at worst
$o(n\log\sigma)$ bits). We describe later the FM-index, a particular self-index
of interest to us.

\subsection{Burrows-Wheeler Transform} \label{sec:bwt}

The {\em Burrows-Wheeler Transform} of $T[1..n]$, $\BWT[1..n]$ \cite{BW94}, is a
string defined as $\BWT[p] = T[\SA[p]-1]$ if $\SA[p]>1$, and $\BWT[p]=T[n]=\$$ 
if $\SA[p]=1$. That is, $\BWT$ has the same symbols of $T$ in a different order,
and is a reversible transform.

The array $\BWT$ is obtained from $T$ by first building $\SA$, although it can
be built directly, in $O(n)$ time and within $O(n\log\sigma)$ bits of space
\cite{MNN17}. To obtain $T$ from $\BWT$ \cite{BW94}, one considers two arrays,
$L[1..n] = \BWT$ and $F[1..n]$, which contains all the symbols of $L$ (or $T$) 
in ascending order. Alternatively, $F[p]=T[\SA[p]]$, so $F[p]$ follows $L[p]$ 
in $T$. We need a function that maps any $L[p]$ to the position $q$ of that 
same character in $F$. The formula is $\LF(p) = C[c] + \rank[p]$, where
$c=L[p]$, $C[c]$ is the number of occurrences of symbols less than $c$ in $L$,
and $\rank[p]$ is the number of occurrences of symbol $L[p]$ in $L[1..p]$.
A simple $O(n)$-time pass on $L$ suffices to compute arrays $C$ and 
$\rank$ using $O(n\log\sigma)$ bits of space. Once they are computed, we
reconstruct $T[n]=\$$ and $T[n-k] \leftarrow L[\LF^{k-1}(1)]$ for $k=1,\ldots,
n-1$, in $O(n)$ time as well. Note that $\LF$ is a permutation formed by a
single cycle.

\subsection{Compressed Suffix Arrays and FM-indexes} \label{sec:fmindex}

Compressed suffix arrays \cite{NM07} are a particular case of self-indexes 
that simulate
$\SA$ in compressed form. Therefore, they aim to obtain the suffix array
range $[sp..ep]$ of $P$, which is sufficient to count since $P$ then appears
$occ = ep-sp+1$ times in $T$. For locating, they need to access the content of
cells $\SA[sp],\ldots,\SA[ep]$, without having $\SA$ stored.

The FM-index \cite{FM05,FMMN07} is a compressed suffix array that exploits the
relation between the string $L=\BWT$ and the suffix array $\SA$. It 
stores $L$ in compressed form (as it can be easily compressed to the 
high-order empirical entropy of $T$ \cite{Man01}) and adds sublinear-size data 
structures to compute (i) any desired position $L[p]$, (ii) the generalized
{\em rank function} $\rank_c(L,p)$, which is the number of times symbol $c$
appears in $L[1..p]$. Note that these two operations permit, in particular, 
computing $\rank[p] = \rank_{L[p]}(L,p)$, which is called {\em partial rank}.
Therefore, they compute
$$ \LF(p) ~~=~~ C[L[p]] + \rank_{L[p]}(L,p).$$

For counting, the FM-index resorts to {\em backward search}. This procedure
reads $P$ backwards and at any step knows the range $[sp_j,ep_j]$ of $P[j..m]$
in $T$. Initially, we have the range $[sp_{m+1}..ep_{m+1}]=[1..n]$ for
$P[m+1..m]=\varepsilon$. Given the range $[sp_{j+1}..ep_{j+1}]$, one obtains
the range $[sp_j..ep_j]$ from $c=P[j]$ with the operations 
\begin{eqnarray*}
sp_j &=& C[c]+\rank_c(L,sp_{j+1}-1)+1,\\
ep_j &=& C[c]+\rank_c(L,ep_{j+1}).
\end{eqnarray*}
Thus the range $[sp..ep]=[sp_1..ep_1]$ is obtained with $O(m)$ computations
of $\rank$, which dominates the counting complexity.

For locating, the FM-index (and most compressed suffix arrays) stores sampled
values of $\SA$ at regularly spaced text positions, say multiples of $s$. 
Thus, to retrieve $\SA[p]$, we find the smallest $k$ for which 
$\SA[\LF^k(p)]$ is sampled, and then the answer is $\SA[p] = \SA[\LF^k(p)]+k$. 
This is because
function $\LF$ virtually traverses the text backwards, that is, it drives us
from $L[p]$, which precedes suffix $\SA[p]$, to its position $F[q]$, where the
suffix $\SA[q]$ starts with $L[p]$, that is, $\SA[q]=\SA[p]-1$: 
$$\SA[\LF(p)] ~=~ \SA[p]-1.$$
Since it is guaranteed that $k < s$, each occurrence
is located with $s$ accesses to $L$ and computations of $\LF$, and the extra 
space for the sampling is $O((n\log n)/s)$ bits, or $O(n/s)$ words.

For extracting, a similar sampling is used on $\ISA$, that is, we sample the
positions of $\ISA$ that are multiples of $s$. To extract $T[i..i+\ell-1]$ we
find the smallest multiple of $s$ in $[i+\ell..n]$, $j=s \cdot \lceil
(i+\ell)/s \rceil$, and extract $T[i..j]$. Since $\ISA[j]=p$ is sampled, we
know that $T[j-1] = L[p]$, $T[j-2] = L[\LF(p)]$, and so on. In total we
require at most $\ell+s$ accesses to $L$ and computations of $\LF$ to extract
$T[i..i+\ell-1]$. The extra space is also $O(n/s)$ words.

For example, using a representation \cite{BN14} that accesses $L$ and computes 
partial ranks in constant time (so $\LF$ is computed in $O(1)$ time), and
computes $\rank$ in the optimal $O(\log\log_w \sigma)$ time, an FM-index can
count in time $O(m\log\log_w \sigma)$, locate each occurrence in $O(s)$ time,
and extract $\ell$ symbols of $T$ in time $O(s+\ell)$, by using $O(n/s)$ space
on top of the empirical entropy of $T$ \cite{BN14}. There exist even faster
variants \cite{BN13}, but they do not rely on backward search.

\subsection{Run-Length FM-index} \label{sec:rlfmi}

One of the sources of the compressibility of $\BWT$ is that symbols are
clustered into $r \le n$ {\em runs}, which are maximal substrings formed by the
same symbol. M\"akinen and Navarro \cite{MN05} proved a (relatively weak)
bound on $r$ in terms of the high-order empirical entropy of $T$ and, more
importantly, designed an FM-index variant that uses $O(r)$ words of space,
called {\em Run-Length FM-index} or {\em RLFM-index}. They later experimented
with several variants of the RLFM-index, where the one called RLFM+ 
\cite[Thm.~17]{MNSV09} corresponds to the original RLFM-index \cite{MN05}.

The structure stores the {\em run heads}, that is, the first positions of the
runs in $\BWT$, in a data structure $E = \{ 1 \} \cup \{ 1<p\le n, 
\BWT[p] \not= \BWT[p-1]\}$ that supports predecessor searches. Each element 
$e \in E$ has associated the value $e.v = |\{ e' \in E, e' \le e\}|$, which is
its position in a string $L'[1..r]$ that stores the run symbols.
Another array, $D[0..r]$, stores the cumulative lengths of the runs after stably
sorting them lexicographically by their symbols (with $D[0]=0$). Let array
$C'[1..\sigma]$ count the number of {\em runs} of symbols smaller than $c$ in
$L$. One can then simulate
$$\rank_c(L,p) ~=~ D[C'[c]+\rank_c(L',q.v-1)] 
+[\textrm{if}~L'[q.v] = c ~\textrm{then}~ p-q+1~\textrm{else}~0],
$$
where $q=pred(E,p)$, at the cost of a predecessor search ($pred$) in $E$ and 
a $\rank$ on $L'$. By using up-to-date data structures, the counting 
performance of the RLFM-index can be stated as follows.

\begin{lemma} \label{lem:rlfm}
The Run-Length FM-index of a text $T[1..n]$ whose $\BWT$ has $r$ runs
can occupy $O(r)$ words and count the number of occurrences of a pattern 
$P[1..m]$ in time $O(m\log\log_w(\sigma + n/r))$. It also computes $\LF$ and
access to any $\BWT[p]$ in time $O(\log\log_w(n/r))$.
\end{lemma}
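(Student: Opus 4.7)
The plan is to instantiate the M\"akinen--Navarro RLFM-index recipe already recalled in Section~\ref{sec:rlfmi} with the up-to-date rank/predecessor data structures of Belazzougui and Navarro~\cite{BN14}, and then to read off the three claimed bounds from the formula
\[
\rank_c(L,p) ~=~ D[\,C'[c]+\rank_c(L',q.v-1)\,]
+ [\textrm{if}~L'[q.v]=c~\textrm{then}~p-q+1~\textrm{else}~0],
\qquad q=pred(E,p).
\]
All the components $C$, $C'$, $D$ are plain arrays of $O(\sigma)+O(r)$ words and contribute nothing beyond $O(1)$ time; the cost comes from the predecessor search on $E\subseteq[1..n]$ with $|E|=r$ and the generalized rank on $L'\in[1..\sigma]^r$.

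First, I would represent $E$ by the $O(r)$-word predecessor structure of~\cite{BN14}, which answers $pred$ in $O(\log\log_w(n/r))$ time on a set of $r$ elements from a universe of size $n$. In parallel I would represent $L'$ by the $O(r)$-word string data structure of the same paper, which provides access in $O(1)$ and generalized $\rank_c$ in the optimal $O(\log\log_w\sigma)$ time. Plugging these into the displayed identity, each backward-search step costs $O(\log\log_w(n/r)+\log\log_w\sigma)=O(\log\log_w(\sigma+n/r))$, so counting takes $O(m\log\log_w(\sigma+n/r))$ as claimed, and the whole index occupies $O(r)$ words.

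Next I would handle $\LF$ and $\BWT$ access separately, since both benefit from the fact that no \emph{general} $\rank_c$ query is needed: for $\LF(p)$ only the \emph{partial} rank $\rank_{L[p]}(L,p)$ matters. For this, I would store with every run head $e\in E$ the precomputed value $\rank_{L[e]}(L,e)$; these are $r$ numbers, fitting in $O(r)$ words. Given $p$, I locate the run $q=pred(E,p)$ in $O(\log\log_w(n/r))$ time, recover $L[p]=L'[q.v]$ in $O(1)$ (also yielding $\BWT[p]$), and then return the stored partial rank at $q$ plus the in-run offset $p-q$, combined with $C[L[p]]$ to finish $\LF$ in $O(1)$ additional time. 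Both $\LF$ and $\BWT$-access therefore run in $O(\log\log_w(n/r))$, matching the statement.

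The one step that deserves care, and which I expect to be the only real subtlety, is making sure the quoted $\rank_c$ time on $L'$ is truly $O(\log\log_w\sigma)$ and not $O(\log\log_w(\sigma+r))$ or similar, since this is what lets the two terms fuse into $O(\log\log_w(\sigma+n/r))$. I would cite the precise theorem from~\cite{BN14} delivering a monotone minimum perfect hash plus a rank structure on a string of length $r$ over alphabet $[1..\sigma]$ with $O(r)$ words and $O(\log\log_w\sigma)$ rank time, and observe that when $\sigma\ge r$ this bound is already subsumed by the universe bound used for $E$, while when $\sigma<r$ it dominates; either way the sum is $\Theta(\log\log_w(\sigma+n/r))$. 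With this observation the three bounds in the lemma follow immediately.
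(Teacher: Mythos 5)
Your proposal is correct and follows essentially the same route as the paper: the RLFM+ layout with the predecessor structure of \cite[Thm.~14]{BN14} on $E$ and the $O(\log\log_w\sigma)$-rank structure of \cite[Thm.~10]{BN14} on $L'$, plus $O(r)$ precomputed partial-rank answers at run boundaries so that $\LF$ needs only a predecessor search. The paper stores the precomputed values as $\rank_{L'[q.v]}(L',q.v)$ and converts through $D$ and $C'$, whereas you store $\rank_{L[e]}(L,e)$ directly, but this is an immaterial variation of the same idea.
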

\begin{proof}
We use the RLFM+ \cite[Thm.~17]{MNSV09}, using the structure of Belazzougui 
and Navarro \cite[Thm.~10]{BN14} for the sequence $L'$ (with constant access
time) and the predecessor data structure described by Belazzougui and Navarro 
\cite[Thm.~14]{BN14} to implement $E$ (instead of the bitvector used in the
original RLFM+).
The RLFM+ also implements $D$ with a bitvector, but we use a plain array.
The sum of both operation times is $O(\log\log_w \sigma + \log\log_w(n/r))$, 
which can be written as $O(\log\log_w (\sigma+n/r))$. To access $\BWT[p]=L[p]
=L'[pred(E,p).v]$ we only need a predecessor search on $E$, which takes time 
$O(\log\log_w(n/r))$, and a constant-time access to $L'$.
Finally, we compute $\LF$ faster than a general rank query, as we only 
need the partial rank query 
$$\rank_{L[p]}(L,p) ~=~ D[C'[L'[q.v]]+\rank_{L'[q.v]}(L',q.v)-1] + (p-q+1),$$
which is correct since $L[p]=L'[q.v]$. The operation $\rank_{L'[q.v]}(L',q.v)$
can be supported in constant time using $O(r)$ space, by just recording all 
the answers, and therefore the time for $\LF$ on $L$ is also dominated by the 
predecessor search on $E$ (to compute $q$), of $O(\log\log_w(n/r))$ time.
\qed
\end{proof}

We will generally assume that $\sigma$ is the {\em effective} alphabet of $T$,
that is, the $\sigma$ symbols appear in $T$. This implies that $\sigma \le r
\le n$. If this is not the case, we can map $T$ to an effective alphabet
$[1..\sigma']$ before indexing it. A mapping of $\sigma' \le r$ words then
stores the actual symbols when extracting a substring of $T$ is necessary. For
searches, we have to map the $m$ positions of $P$ to the effective alphabet.
By storing a perfect hash or a deterministic dictionary \cite{Ruz08} of 
$O(\sigma')=O(r)$ words, we map each symbol of $P$ in constant time. On the
other hand, the results on packed symbols only make sense if $\sigma$ is
small, and thus no alphabet mapping is necessary. Overall,
we can safely use the assumption $\sigma \le r$ without affecting any of our
results, including construction time and space.

To provide locating and extracting functionality, M\"akinen et al.~\cite{MNSV09}
use the sampling mechanism we described for the FM-index. Therefore, although
they can efficiently count within $O(r)$ space, they need a much larger
$O(n/s)$ space to support these operations in time proportional to $s$.
Despite various efforts \cite{MNSV09}, this has been a bottleneck in theory and
in practice since then.

\subsection{Compressed Suffix Trees}

Suffix trees provide a much more complete functionality than self-indexes,
and are used to solve complex problems especially in bioinformatic applications
\cite{Gus97,Ohl13,MBCT15}. A {\em compressed suffix tree} is regarded as an
enhancement of a compressed suffix array (which, in a sense, represents only
the leaves of the suffix tree). Such a compressed representation must be able
to simulate the operations on the classical suffix tree (see 
Table~\ref{tab:streeops} later in the article), while using little
space on top of the compressed suffix array. The first such compressed suffix
tree \cite{Sad07} used $O(n)$ extra bits, and there are several variants using 
$o(n)$ extra bits \cite{FMN09,Fis10,RNO11,GO13,ACN13}.

Instead, there are no compressed suffix trees using $O(r\,\mathrm{polylog}(n))$ 
space. An extension
of the RLFM-index \cite{MNSV09} still needs $O(n/s)$ space to carry out
most of the suffix tree operations in time $O(s\log n)$. Some variants that
are designed for repetitive text collections \cite{ACN13,NO16} are heuristic
and do not offer worst-case guarantees. Only recently a compressed suffix tree
was presented \cite{BC17} that uses $O(\overline{e})$ space and carries out
operations in $O(\log n)$ time.

%!TEX root = paper.tex

\section{Locating Occurrences} \label{sec:locate}

In this section we show that, if the $\BWT$ of a text $T[1..n]$ has $r$ runs, 
then we can have an index using $O(r)$ space that not only efficiently finds the
interval $\SA[sp..ep]$ of the occurrences of a pattern $P[1..m]$ (as was
already known in the literature, see Section~\ref{sec:rlfmi}) but that can 
locate
each such occurrence in time $O(\log\log_w(n/r))$ on a RAM machine of $w$ bits.
Further, the time per occurrence becomes constant if the space is raised to
$O(r\log\log_w(n/r))$.

We start with Lemma~\ref{lem:find_one}, which shows that the typical backward
search process can be enhanced so that we always know the position of one of
the values in $\SA[sp..ep]$. We give a simplification of the previous proof
\cite{Pre16,PP16}. Lemma~\ref{lem:find_neighbours} then shows how
to efficiently obtain the two neighboring cells of $\SA$ if we know the value
of one. This allows us to extend the first known cell in both directions, until
obtaining the whole interval $\SA[sp..ep]$. 
Theorem~\ref{thm:locating} summarizes the main result of this section.

Later, Lemma~\ref{lemma: general locate} shows how this process can be 
accelerated by using more space. We extend the idea in Lemma~\ref{lemma: lcp}, 
obtaining $\LCP$ values in the same
way we obtain $\SA$ values. While not of immediate use for locating,
this result is useful later in the article and also has independent interest.

\begin{definition}\label{def: sampled position}
        We say that a text character $T[i]$ is \emph{sampled} if and only if 
$T[i]$ is the first or last character in its $\BWT$ run. That is, 
$T[\SA[1]-1] = T[n-1]$ is sampled and, if $p>1$ and $\BWT[p] \not= \BWT[p-1]$, 
then $T[\SA[p-1]-1]$ and $T[\SA[p]-1]$ are sampled. In general,
$T[i]$ is {\em $s$-sampled} if it is at distance at most $s$ from a $\BWT$ run
border, where sampled characters are at distance 1.
\end{definition}

\begin{lemma}
	\label{lem:find_one}
	We can store $O (r)$ words such that, given $P [1..m]$, in time
$O (m \log \log_w (\sigma + n/r))$ we can compute the interval $\SA[sp,ep]$ 
of the occurrences of $P$ in $T$, and also return the position $p$ and content
$\SA[p]$ of at least one cell in the interval $[sp,ep]$.
\end{lemma}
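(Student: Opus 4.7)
I would augment the RLFM-index of Lemma~\ref{lem:rlfm} so that the standard backward search additionally maintains the following invariant: after processing the suffix $P[j..m]$ and obtaining its interval $[sp_j..ep_j]$, we keep some particular position $p\in[sp_j..ep_j]$ together with the value $\SA[p]$. After the $m$ steps this yields the pair $(p,\SA[p])$ required by the statement. The counting interval itself is produced by the standard backward search of the RLFM-index, so it only remains to show how to cheaply carry this single known cell along.

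On top of the RLFM-index I would add: (i) the $\SA$-values at every run head and every run tail of $\BWT$ — the $O(r)$ sampled positions of Definition~\ref{def: sampled position}; and (ii) for each alphabet symbol $c$, one predecessor/successor structure of Belazzougui and Navarro~\cite[Thm.~14]{BN14} over the run-head positions where $\BWT=c$, and another over the run-tail positions where $\BWT=c$. Across all symbols these structures contain $O(r)$ keys in total and answer queries in $O(\log\log_w(n/r))$ time. Initialization sets $p=1$ with known $\SA[p]=n$, matching the empty-prefix interval $[1..n]$.

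At the step extending the search by $c=P[j]$ I would branch on $\BWT[p]$. If $\BWT[p]=c$, then $\LF(p)\in[sp_j..ep_j]$ and $\SA[\LF(p)]=\SA[p]-1$, so I set $p\leftarrow\LF(p)$ and decrement the stored value; both computing $\LF(p)$ and reading $\BWT[p]$ cost $O(\log\log_w(n/r))$ by Lemma~\ref{lem:rlfm}. If $\BWT[p]\neq c$, I would locate a sampled position $q\in[sp_{j+1}..ep_{j+1}]$ with $\BWT[q]=c$ using one successor query on the run-head structure for $c$ and one predecessor query on the run-tail structure for $c$, read $\SA[q]$ from the sample, and then replace $p$ by $\LF(q)$ with stored value $\SA[q]-1$.

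The one delicate step is verifying that such a sampled $q$ always exists in the second case. Since the new range is nonempty, some $q^{*}\in[sp_{j+1}..ep_{j+1}]$ has $\BWT[q^{*}]=c$; walking from $p$ toward $q^{*}$ we must first leave the run of $\BWT[p]$ and then enter some $c$-run whose head (if $q^{*}>p$) or tail (if $q^{*}<p$) lies strictly between $p$ and $q^{*}$, hence inside the range. The case in which a single $c$-run covered the entire range is ruled out because it would force $\BWT[p]=c$. Adding the costs gives $O(\log\log_w(\sigma+n/r))$ per backward step and the claimed total $O(m\log\log_w(\sigma+n/r))$ time within $O(r)$ words of space.
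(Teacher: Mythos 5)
Your proposal follows essentially the same route as the paper's proof: maintain a ``toehold'' $(p,\SA[p])$ through backward search, advancing it by $\LF$ when $\BWT[p]=P[j]$, and otherwise jumping to a sampled run boundary labelled $P[j]$ inside the current interval, whose existence you justify by the same two-distinct-symbols argument the paper uses. Your initialization at $p=1$, $\SA[1]=n$ and your split into separate run-head/run-tail structures are cosmetic variations.

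The one point that does not hold as written is the query time of your per-symbol predecessor structures. The structure of Belazzougui and Navarro over $t_c$ keys in a universe of size $n$ answers in $O(\log\log_w(n/t_c))$ time, and $t_c$ (the number of $c$-run heads, say) can be as small as $1$ even when $r$ is large, giving $O(\log\log_w n)$ rather than the $O(\log\log_w(n/r))$ you claim; the total key count $O(r)$ controls the space but not the per-structure query time. The paper resolves exactly this by concatenating all the per-symbol universes into a single predecessor structure with $O(r)$ keys over a universe of size $\sigma n$, which yields $O(\log\log_w(\sigma n/r))=O(\log\log_w(\sigma+n/r))$ per query. With that standard fix your argument goes through unchanged.
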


\begin{proof}
We store a RLFM-index and predecessor structures $R_c$ storing the position in 
$\BWT$ of all the sampled characters equal to $c$, for each $c \in [1..\sigma]$. Each element $p \in R_c$ is associated with its corresponding text position, that is, we store pairs $\langle p,\SA[p]-1 \rangle$ sorted by their first component. These structures take a total of $O(r)$ words. 

The interval of characters immediately preceding occurrences of the empty 
string is the entire $\BWT[1..n]$, which clearly includes $P[m]$ as the last 
character in some run (unless $P$ does not occur in $T$). It follows that we 
find an occurrence of $P[m]$ in predecessor time by querying $pred(R_{P[m]},n)$.

Assume we have found the interval $\BWT[sp,ep]$ containing the 
characters immediately preceding all the occurrences of some (possibly empty) 
suffix $P[j + 1..m]$ of $P$, and we know the position and content of some 
cell $\SA[p]$ in the corresponding interval, $sp \le p \le ep$. Since 
$\SA [\LF (p)] = \SA [p] - 1$, if $\BWT [p] = P [j]$ then, after the next 
application of $\LF$-mapping, we still know the position and value of some cell 
$\SA[p']$ corresponding to the interval $\BWT[sp',ep']$ for $P [j..m]$, namely 
$p' = \LF(p)$ and $\SA[p'] = \SA[p]-1$.

On the other hand, if $\BWT [p] \neq P [j]$ but $P$ still occurs somewhere in 
$T$ (i.e., $sp' \le ep'$), then there is at least one $P[j]$ and one non-$P[j]$
in $\BWT[sp,ep]$, and therefore the interval intersects an extreme of a run 
of copies of $P[j]$, thus holding a sampled character. Then, a predecessor query $pred(R_{P[j]},ep)$ gives us
the desired pair $\langle p', \SA[p']-1 \rangle$ with $sp \leq p' \leq ep$ and
$\BWT[p']=P[j]$.

Therefore, by induction, when we have computed the $\BWT$ interval for $P$, we
know the position and content of at least one cell in the corresponding
interval in $\SA$. 

To obtain the desired time bounds, we concatenate all the universes of the $R_c$
structures into a single one of size $\sigma n$, and use a single structure 
$R$ on that universe: each $\langle p,\SA[p-1] \rangle \in R_c$ becomes $\langle (c-1)n+p, \SA[p]-1\rangle$ in $R$, and a search 
$pred(R_c,q)$ becomes $pred(R,(c-1)n+q)-(c-1)n$. Since $R$ contains $2r$ elements on
a universe of size $\sigma n$, we can have predecessor searches in time
$O(\log\log_w (n\sigma/r))$ and $O(r)$ space \cite[Thm.~14]{BN14}. This is the
same $O(\log\log_w(\sigma + n/r))$ time we obtained in Lemma~\ref{lem:rlfm} to
carry out the normal backward search operations on the RLFM-index.
\qed
\end{proof}

Lemma~\ref{lem:find_one} gives us a toehold in the suffix array, and we show
in this section that a toehold is all we need.  We first show that, 
given the position and contents of one cell of the suffix array $\SA$ of a
text $T$, we can compute the contents of the neighbouring cells in $O (\log
\log_w(n/r))$ time.  It follows that, once we have counted the occurrences of a
pattern in $T$, we can locate all the occurrences in $O (\log \log_w(n/r))$ time each.

\begin{definition}(\cite{KMP09}) \label{def:phi}
Let permutation $\phi$ be defined as $\phi(i)=\SA[\ISA[i]-1]$ if
$\ISA[i]>1$ and $\phi(i)=\SA[n]$ otherwise.
\end{definition}

That is, given a text position $i=\SA[p]$ pointed from suffix array position
$p$, $\phi(i)=\SA[\ISA[\SA[p]]-1]= \SA[p-1]$ gives the value of the preceding
suffix array cell. Similarly, $\phi^{-1}(i)=\SA[p+1]$.

\begin{definition} \label{def:phrase}
We parse $T$ into {\em phrases} such that $T[i]$ is the first character in
a phrase if and only if $T[i]$ is sampled.
\end{definition}

\begin{lemma}
	\label{lem:find_neighbours}
	We can store $O (r)$ words such that functions $\phi$ and $\phi^{-1}$
are evaluated in $O (\log \log_w (n/r))$ time.
%, given $\SA [p]$, we can
%compute $\SA [p - 1]$ and $\SA [p + 1]$ in $O (\log \log_w (n/r))$ time.
\end{lemma}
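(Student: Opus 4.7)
The plan is to exploit the $\BWT$ run structure to sample $\phi$ (and, symmetrically, $\phi^{-1}$) at $O(r)$ text positions, so that any query reduces to a predecessor search followed by a constant-time linear extension. Concretely, I would maintain a predecessor dictionary over $[1,n]$ holding the chosen sample positions together with their $\phi$-values, and answer $\phi(i)$ by extending the value at the nearest sample to the left of $i$.

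The main step, and the only nontrivial one, is an \emph{extension property}: writing $p=\ISA[i]$, if $\BWT[p-1]=\BWT[p]$ (equivalently, $T[i-1]$ is not the first character of its $\BWT$ run), then $\phi(i)=\phi(i-1)+1$. I would prove it from the fact that $\LF$ preserves order inside a maximal run of equal characters: when $\BWT[p-1]=\BWT[p]$, $\rank_{\BWT[p]}(\BWT,p)=\rank_{\BWT[p-1]}(\BWT,p-1)+1$, so $\LF(p-1)=\LF(p)-1=\ISA[i-1]-1$. Applying $\SA$ to both sides yields
\[
\phi(i-1) \;=\; \SA[\ISA[i-1]-1] \;=\; \SA[\LF(p-1)] \;=\; \SA[p-1]-1 \;=\; \phi(i)-1.
\]
A symmetric argument with $p+1$ in place of $p-1$ shows that $\phi^{-1}(i)=\phi^{-1}(i-1)+1$ whenever $T[i-1]$ is not the last character of its $\BWT$ run.

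I would then take the sample set $S_\phi=\{1\}\cup\{\,i\in[2,n] : T[i-1]\text{ is at the start of a }\BWT\text{ run}\,\}$, which has at most $r+1$ elements since each of the $r$ runs has exactly one start, and store the pairs $\langle i,\phi(i)\rangle$ for $i\in S_\phi$. Indexing the first components with the predecessor structure of Belazzougui and Navarro \cite[Thm.~14]{BN14} gives $O(r)$ total space and $O(\log\log_w(n/r))$ query time. To answer $\phi(i)$, I locate the largest $i'\in S_\phi$ with $i'\le i$; by the choice of $S_\phi$, none of $T[i'],\ldots,T[i-1]$ starts a $\BWT$ run, so iterating the extension property $i-i'$ times yields $\phi(i)=\phi(i')+(i-i')$ in constant additional time. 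A completely analogous sample set, this time at $\BWT$ run-ends, supports $\phi^{-1}$ within the same bounds. The only delicate ingredient is the extension property; once it is in hand, everything else reduces to routine bookkeeping around the predecessor structure.
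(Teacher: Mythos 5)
Your proposal is correct and follows essentially the same approach as the paper: a predecessor structure over the $O(r)$ text positions induced by $\BWT$ run boundaries, with stored $\phi$/$\phi^{-1}$ values that are extended linearly by the offset $i-i'$, justified by the fact that $\LF$ maps adjacent positions within a run to adjacent positions. The only (cosmetic) difference is that you state the invariance as a one-step lemma and iterate it, and keep two separate sample sets for $\phi$ and $\phi^{-1}$, whereas the paper tracks the triple $p-1,p,p+1$ through $\LF^t$ directly and stores both neighbours in a single structure over all run starts and ends.
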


\begin{proof}
We store an $O(r)$-space predecessor data structure $P^\pm$ with 
$O(\log \log_w (n/r))$ query time \cite[Thm.~14]{BN14} for the starting phrase
positions $i$ of $T$ (i.e., the sampled text positions).
We also store, associated with such values $i \in P^\pm$,
the positions in $T$ next to the characters immediately preceding and following 
the corresponding position $\BWT[q]$, that is, 
$N[i] = \langle \SA[q-1],\SA[q+1] \rangle$ for $i=\SA[q]-1$.
	
	Suppose we know $\SA [p] = k + 1$ and want to know $\SA [p - 1]$ and $\SA [p + 1]$.  This is equivalent to knowing the position $\BWT[p]=T[k]$ and wanting to know the positions in $T$ of $\BWT [p - 1]$ and $\BWT [p + 1]$.  To compute these positions, we find in $P^\pm$ the position $i$ in $T$ of the first character of the phrase containing $T [k]$, take the associated positions $N[i]=\langle x,y\rangle$, and return 
$\SA[p-1] = x+k-i$ and $\SA[p+1] =y+k-i$.

	To see why this works, let $\SA[p-1] = j+1$ and $\SA[p+1]=l+1$,
that is, $j$ and $l$ are the positions in $T$ of $\BWT[p-1]=T[j]$ and 
$\BWT[p+1]=T[l]$. Note that, for all $0 \le t < k-i$, $T[k-t]$ is not the 
first nor the last character of a run in $\BWT$. Thus, by definition of 
$\LF$, $\LF^t(p-1)$, $\LF^t(p)$, and $\LF^t(p+1)$, that is, the $\BWT$
positions of $T[j-t]$, $T[k-t]$, and $T[l-t]$, are
contiguous and within a single run, thus $T[j - t] = T[k - t] = T[l - t]$.
Therefore, for $t=k-i-1$, $T[j-(k-i-1)] = T[i+1] = T[l-(k-i+1)]$ are 
contiguous in $\BWT$, and thus a further $\LF$ step yields that
$\BWT[q]=T[i]$ is immediately preceded and followed by 
$\BWT[q-1]=T[j-(k-i)]$ and $\BWT[q+1]=T[l-(k-i)]$. 
That is, $N[i]=\langle \SA[q-1],\SA[q+1]\rangle = 
\langle j-(k-i)+1,l-(k-i)+1\rangle$
and our answer is correct.
Figure~\ref{fig:locate} illustrates the proof.
\qed
\end{proof}

\begin{figure}[t]
\begin{center}
\includegraphics[width=7cm]{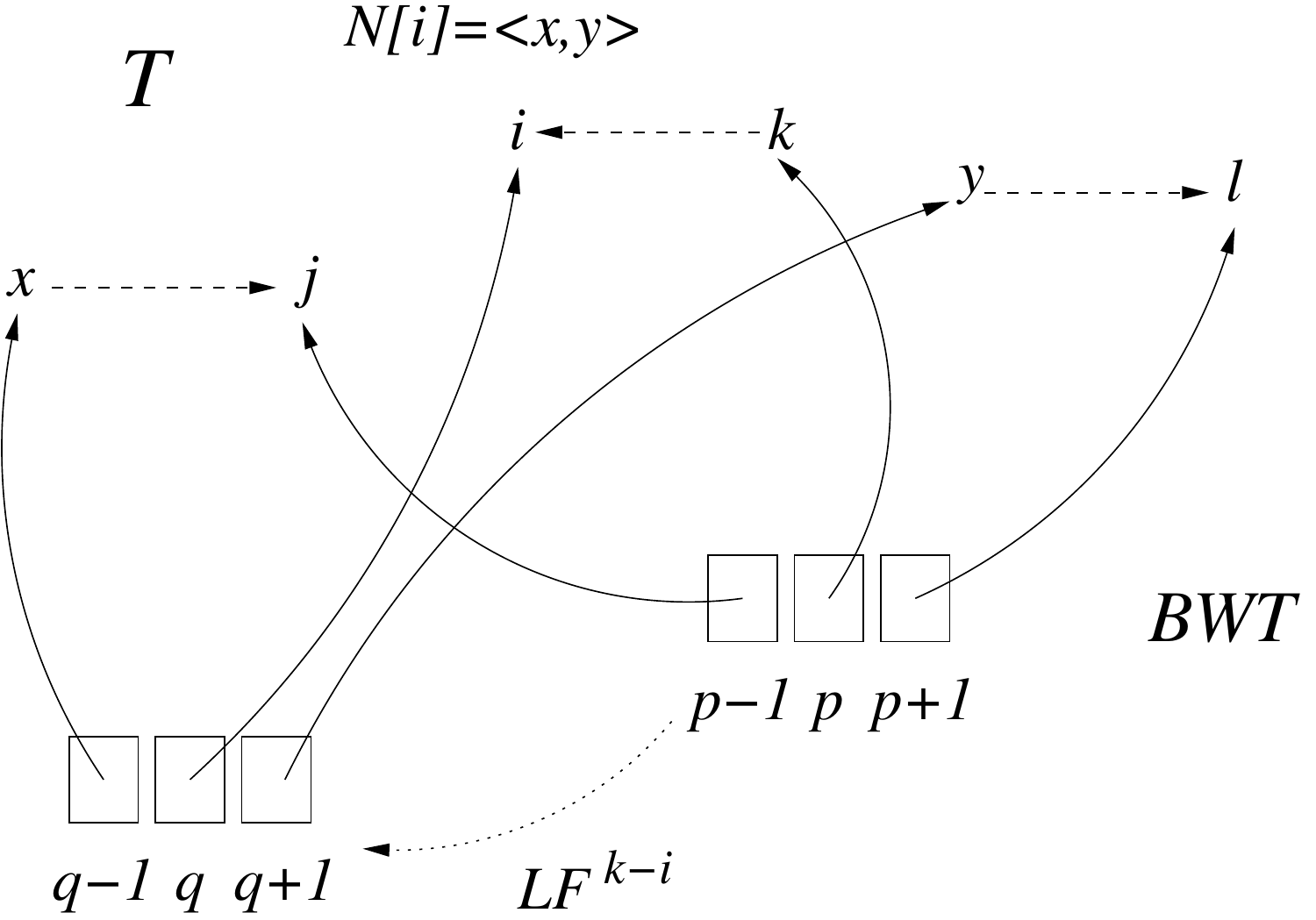}
\end{center}
\caption{Illustration of Lemma~\ref{lem:find_neighbours}. Since $\BWT[p]=T[k]$
and $i$ is the predecessor of $k$, the cells $p-1$, $p$, and $p+1$ would
travel together through consecutive applications of $\LF$, reaching the
positions $N[i]=\langle x,y\rangle$ after $k-i$ steps. Thus it must be that
$\BWT[p-1] = T[x+k-i]$ and $\BWT[p+1] = T[y+k-i]$.}
 \label{fig:locate}
\end{figure}

We then obtain the main result of this section.

\begin{theorem}
	\label{thm:locating}
	We can store a text $T [1..n]$, over alphabet $[1..\sigma]$, in $O(r)$ words, where $r$
	is the number of runs in the $\BWT$ of $T$, such that later, given a pattern
	$P [1..m]$, we can count the occurrences of $P$ in $T$ in $O (m \log \log_w
	(\sigma + n/r))$ time and (after counting) report their $occ$ locations in
	overall time $O (occ \cdot \log \log_w (n/r))$.
\end{theorem}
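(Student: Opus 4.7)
The plan is to assemble Theorem~\ref{thm:locating} directly from the three ingredients already built up: the RLFM-index of Lemma~\ref{lem:rlfm}, the toehold-producing backward search of Lemma~\ref{lem:find_one}, and the neighbour-extending mechanism of Lemma~\ref{lem:find_neighbours}. First, I would set up the index as the disjoint union of (i) the RLFM-index from Lemma~\ref{lem:rlfm}, (ii) the predecessor structures $R_c$ (coalesced into a single $R$) holding the sampled positions from Lemma~\ref{lem:find_one}, and (iii) the predecessor structure $P^\pm$ with its associated pairs $N[i]$ from Lemma~\ref{lem:find_neighbours}. Each of these uses $O(r)$ words, so the total space remains $O(r)$.

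For the counting bound, I would simply invoke Lemma~\ref{lem:find_one}: running its enhanced backward search on $P[1..m]$ produces the interval $\SA[sp..ep]$, together with an ``anchor'' pair $\langle p, \SA[p]\rangle$ with $sp \le p \le ep$, in time $O(m\log\log_w(\sigma+n/r))$. This already covers both counting (via $occ = ep-sp+1$) and yields the starting point for the locate phase.

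For locating, the plan is to extend the anchor outward. Starting from the known pair $\langle p,\SA[p]\rangle$, I apply Lemma~\ref{lem:find_neighbours} to $\phi$ and $\phi^{-1}$ repeatedly: compute $\SA[p-1], \SA[p-2], \dots, \SA[sp]$ by iterating $\phi$, and $\SA[p+1], \SA[p+2], \dots, \SA[ep]$ by iterating $\phi^{-1}$. Each step costs $O(\log\log_w(n/r))$ time, so the full traversal of the interval costs $O(occ \cdot \log\log_w(n/r))$ as claimed. Correctness follows because each invocation of Lemma~\ref{lem:find_neighbours} only needs the pair $\langle q,\SA[q]\rangle$ at the current position to produce the pair for its immediate neighbour, so the induction propagates cleanly through the interval.

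The only point that is not purely mechanical is checking that we do not need to know the interval endpoints $sp$ and $ep$ explicitly for the iteration to stop at the right place; but Lemma~\ref{lem:find_one} already computed $[sp..ep]$ during counting, so the iteration has explicit halting indices on both sides. Everything else is a straightforward composition of the three earlier lemmas, and adding their individual $O(r)$-word structures gives the total $O(r)$ space bound. I do not foresee a genuine obstacle here: the theorem is essentially the \emph{packaging} of the preceding work, and the main care is only in recording that all the auxiliary structures coexist within $O(r)$ words and that the predecessor-query times match across the three lemmas.
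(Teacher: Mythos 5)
Your proposal is correct and matches the paper's argument exactly: Theorem~\ref{thm:locating} is obtained there precisely by combining the toehold from Lemma~\ref{lem:find_one} with repeated applications of $\phi$ and $\phi^{-1}$ via Lemma~\ref{lem:find_neighbours}, with all structures coexisting in $O(r)$ words. No gaps.
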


\subsection{Larger and faster}

The following lemma shows that the above technique  can be generalized. The result is a space-time tradeoff allowing us to list each occurrence in constant time at the expense of a slight increase in space usage. This will be useful later
in the article, in particular to obtain optimal-time locating.

\begin{lemma}\label{lemma: general locate}
	Let $s>0$. We can store a data structure of $O(rs)$ words such that, given
	$\SA[p]$, we can compute $\SA[p - j]$ and $\SA[p + j]$ for $j=1,\dots, s'$ and any $s'\leq s$, in
	$O( \log\log_w(n/r) + s')$ time.
\end{lemma}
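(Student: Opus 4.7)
The plan is to widen the window tracked in Lemma~\ref{lem:find_neighbours} from radius $1$ to radius $s$. Extending Definition~\ref{def: sampled position}, I call a text position $T[i]$ \emph{$s$-sampled} when its $\BWT$ position lies among the first $s$ or the last $s$ positions of some $\BWT$ run; there are at most $2rs$ such positions. The structure then consists of an $O(rs)$-word predecessor data structure \cite[Thm.~14]{BN14} over the $s$-sampled text positions, answering queries in $O(\log\log_w(n/(rs))) \le O(\log\log_w(n/r))$ time, together with a single array $A$ listing in $\BWT$ order the values $\SA[q]$ for all $\BWT$ positions $q$ within distance at most $2s$ of a run boundary (again $O(rs)$ entries). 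Attaching to each $s$-sampled position $i$ only the index in $A$ of its $\BWT$ position $q_i = \ISA[i+1]$ avoids the naive $O(rs^2)$ cost of storing $2s+1$ values per sample.

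Given $\SA[p]=k+1$ and some $s'\le s$, I would first compute $i=pred(k)$ among the $s$-sampled positions and retrieve the index of $q_i$ in $A$; then I would read off the $2s'$ entries $\SA[q_i-j]$ and $\SA[q_i+j]$ for $j=1,\ldots,s'$ directly from $A$ and output $\SA[p\pm j]=\SA[q_i\pm j]+(k-i)$. The correctness argument mirrors Lemma~\ref{lem:find_neighbours}: for every $t\in[0,k-i-1]$, the text position $T[k-t]$ is not $s$-sampled by the choice of $i$, so its $\BWT$ position lies strictly further than $s$ from every run boundary. This means the $2s+1$ cells $\BWT[\LF^t(p)-s],\ldots,\BWT[\LF^t(p)+s]$ sit inside a single run, and the order-preserving property of $\LF$ within a run gives $\LF^{t+1}(p\pm j)=\LF^{t+1}(p)\pm j$ for all $|j|\le s$. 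A straightforward induction then yields $\LF^{k-i}(p\pm j)=q_i\pm j$, which implies $\SA[p\pm j]=\SA[q_i\pm j]+(k-i)$. The total time is $O(\log\log_w(n/r)+s')$ and the total space is $O(rs)$.

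The delicate step, and the one I expect to require the most care, is verifying that the $2s'+1$ entries needed around $q_i$ are actually stored in $A$ and occupy consecutive cells. Since $i$ is $s$-sampled, $q_i$ is within distance $s$ of some run extremity, so the window $\{q_i-s',\ldots,q_i+s'\}$ lies within distance at most $2s$ of a boundary; a brief case analysis shows that this remains true even when the window spills into one neighbouring run on each side, because those overflow positions are at distance at most $s$ from the extremity of the adjacent run. Hence every such $\BWT$ position is in $A$, and since these are consecutive integers all present in $A$, they must occupy consecutive cells of the sorted array, justifying the $O(s')$ readout cost. Boundary cases such as $i=k$ (when $T[k]$ itself is $s$-sampled and no $\LF$ steps are performed) or where $p\pm s'$ approaches the extremes of the $\BWT$ are handled by the same distance argument and do not affect the asymptotic bounds.
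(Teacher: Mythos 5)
Your proposal is correct and follows essentially the same route as the paper's proof: $s$-sampling around run borders, a predecessor query on the sampled text positions, and the observation that $\LF$ maps a window of width $2s+1$ lying inside a single run to a contiguous window, iterated until a border is hit. The only cosmetic differences are that the paper keeps two one-sided structures $P^+$ and $P^-$ (one per direction) with an array $W$ of radius $s$, whereas you use one symmetric predecessor structure with an array of radius $2s$; both give the same $O(rs)$ space and $O(\log\log_w(n/r)+s')$ time.
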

\begin{proof}
	Consider all $\BWT$ positions $q_1 < \dots < q_t$ of $s$-sampled characters, and let $W[1..t]$ be an array such that $W[k]$ is the text position
	corresponding to $q_k$, for $k=1,\dots, t$. Now let $q^+_1 < \dots <
	q^+_{t^+}$ be the $\BWT$ positions having a run border at most $s$ positions
	after them, and $q^-_1 < \dots < q^-_{t^-}$ be the $\BWT$ positions having a
	run border at most $s$ positions before them. We store the text positions
	corresponding to $q^+_1 < \dots < q^+_{t^+}$ and $q^-_1 < \dots < q^-_{t^-}$
	in two predecessor structures $P^+$ and $P^-$, respectively, of size $O(rs)$.
	We store, for each $i \in P^+ \cup P^-$, its position $f(i)$ in $W$, that
	is, $W[f(i)]=i$. 
	
	To answer queries given $\SA[p]$, we first compute its $P^+$-predecessor
	$i<\SA[p]$ in $O( \log\log_w(n/r))$ time, and retrieve $f(i)$. Then, it holds
	that $\SA[p + j] = W[f(i)+j] + (\SA[p]-i)$, for $j=0,\dots, s$. Computing $\SA[p - j]$ is symmetric (just use $P^-$ instead of $P^+$).
	
	To see why this procedure is correct, consider the range $\SA[p..p+s]$. 
	We distinguish two cases.
	
	(i) $\BWT[p..p+s]$ contains at least two distinct characters. Then,
	$\SA[p]-1 \in P^+$ (because $p$ is followed by a run break at most $s$
	positions away), and is therefore the immediate predecessor of $\SA[p]$. 
	Moreover, all $\BWT$ positions $[p..p+s]$ are in $q_1, \dots, q_t$ (since
	they are at distance at most $s$ from a run break), and their corresponding text positions
	are therefore contained in a contiguous range of $W$ (i.e., $W[f(\SA[p]-1)..
	f(\SA[p]-1)+s]$). The claim follows.
	
	(ii) $\BWT[p..p+s]$ contains a single character; we say it is unary. 
	Then $\SA[p]-1 \notin P^+$, since there are no run breaks in $\BWT[p..p+s]$. 
	Moreover, by the $\LF$ formula, the $\LF$ mapping applied on the unary range
	$\BWT[p..p+s]$ gives a contiguous range $\BWT[\LF(p)..\LF(p+s)] = \BWT[\LF(p)..\LF(p)+s]$. Note that 
	this corresponds to a parallel backward step on text positions 
	$\SA[p] \rightarrow \SA[p] -1 , \dots,
	\SA[p+s] \rightarrow \SA[p+s]-1$. We iterate the application of $\LF$ until we
	end up in a range $\BWT[\LF^\delta(p)..\LF^\delta(p+s)]$ that is not unary.
	Then, $\SA[\LF^\delta(p)]-1$ is the immediate predecessor of
	$\SA[p]$ in $P^+$, and $\delta+1$ is their distance. This means that with a single predecessor query on $P^+$ we
	``skip'' all the unary $\BWT$ ranges $\BWT[\LF^k(p)..\LF^k(p+s)]$ for
	$k=1,\dots,\delta-1$ and, as in case (i), retrieve the contiguous range in $W$
	containing the values $\SA[p]-\delta, \dots, \SA[p+s]-\delta$, and add $\delta$ to obtain the desired $\SA$ values.
	\qed
\end{proof}	

\subsection{Accessing $\LCP$} \label{sec:lcp}

Lemma~\ref{lemma: general locate} can be further extended to entries of the 
$\LCP$ array, which we will use later in the article. Given $\SA[p]$,
we compute $\LCP[p]$ and its adjacent entries (note that we do not need to 
know $p$, but just $\SA[p]$). For $s=1$ this is known as the {\em permuted
$\LCP$ (PLCP)} array \cite{Sad07}. Our result can indeed be seen as an 
extension of a {\em PLCP} representation by Fischer et al.~\cite{FMN09}.
In Section~\ref{sec:dlcp} we use different structures that enable the
classical access, that is, compute $\LCP[p]$ from $p$, not $\SA[p]$.

\begin{lemma}\label{lemma: lcp}
	Let $s>0$. We can store a data structure of $O(rs)$ words such that,
	given  $\SA[p]$, we can compute $\LCP[p -
j+1]$ and $\LCP[p + j]$, for $j=1,\dots, s'$ and any $s' \le s$, in 
$O( \log\log_w(n/r) + s')$ time.
\end{lemma}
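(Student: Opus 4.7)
The plan is to piggyback on the structures of Lemma~\ref{lemma: general locate}: I keep the predecessor structures $P^+$ and $P^-$ and the index map $f$ unchanged, and adjoin a new array $L[1..t]$ parallel to $W$, defined by $L[k] = \LCP[q_k]$ for the $s$-sampled BWT positions $q_1 < \cdots < q_t$. This adds only $O(rs)$ words, so the total space stays $O(rs)$.

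The engine of the argument is the standard LCP--LF identity $\LCP[\LF(q)] = \LCP[q]+1$, valid whenever $\BWT[q-1] = \BWT[q]$: the suffixes at $\SA[\LF(q)-1] = \SA[q-1]-1$ and $\SA[\LF(q)] = \SA[q]-1$ are obtained from the ones at $\SA[q-1]$ and $\SA[q]$ by prepending the same character, so their LCP grows by exactly one. Iterated inside a unary window $\BWT[p..p+s]$, this gives $\LCP[\LF^\delta(p)+j] = \LCP[p+j]+\delta$ for every $1 \le j \le s$, provided all intermediate windows $\BWT[\LF^k(p)..\LF^k(p)+s]$ remain unary for $k<\delta$; an analogous identity with $p+j$ replaced by $p-j+1$ holds inside a backward unary window $\BWT[p-s..p]$.

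The query algorithm then mirrors Lemma~\ref{lemma: general locate}. For the forward half, given $\SA[p]$ I would compute $i^+ = \mathrm{pred}(P^+,\SA[p])$, set $\delta = \SA[p]-i^+-1$, and return $\LCP[p+j] = L[f(i^+)+j]-\delta$ for $j=1,\ldots,s'$. The unary/non-unary case split used in the proof of Lemma~\ref{lemma: general locate} shows simultaneously that $\delta$ is exactly the number of LF steps after which $\BWT[\LF^\delta(p)..\LF^\delta(p)+s]$ first becomes non-unary, that the $s+1$ positions of this window are all $s$-sampled and thus occupy consecutive slots $f(i^+),\ldots,f(i^+)+s$ of the sequence $q_1<\cdots<q_t$, and that $L[f(i^+)+j]$ therefore equals $\LCP[\LF^\delta(p)+j]$; combined with the LCP--LF identity this yields the formula above. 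The backward half is symmetric, using $\mathrm{pred}(P^-,\SA[p])$ and reading $L[f(i^-)-j+1]-\delta$. Each query performs at most two predecessor lookups of $O(\log\log_w(n/r))$ time plus an $O(s')$-length contiguous scan of $L$, matching the claimed bound.

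The point I would verify most carefully is that the LCP--LF identity remains valid at the extremal index $j=1$: one has to check that both BWT entries whose equality justifies the $+1$ step, namely $\BWT[p]$ and $\BWT[p+1]$ in the forward case (respectively $\BWT[p-1]$ and $\BWT[p]$ in the backward case), lie inside the corresponding unary window of radius $s$; this holds precisely because $s' \le s$. Everything else is reused verbatim from Lemma~\ref{lemma: general locate}.
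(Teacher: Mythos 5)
Your proposal is correct and follows essentially the same route as the paper's own proof: it reuses the structures $P^+$, $P^-$, and $f$ from Lemma~\ref{lemma: general locate}, stores the $\LCP$ values of the $s$-sampled $\BWT$ positions in an array parallel to $W$ (the paper's $\LCP'$), and corrects by $\delta=\SA[p]-i-1$ using the identity $\LCP[\LF(q)]=\LCP[q]+1$ inside unary windows. Your closing remark about the extremal index is exactly the subtlety the paper flags when it excludes $j=0$ from the forward formula.
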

\begin{proof}
	The proof follows closely that of Lemma \ref{lemma: general locate},
except that now we sample $\LCP$ entries corresponding to suffixes 
{\em following} $s$-sampled $\BWT$ positions.
	Let us define $q_1 < \dots < q_t$, $q^+_1 < \dots < q^+_{t^+}$, and
	$q^-_1 < \dots < q^-_{t^-}$, as well as the predecessor structures
	$P^+$ and $P^-$, exactly as in the proof of Lemma~\ref{lemma: general
locate}.
	We store $\LCP'[1..t] = \LCP[q_1], \dots, \LCP[q_t]$.
	We also store, for each $i \in P^+ \cup P^-$, its corresponding position $f(i)$ in $\LCP'$, that
	is, $\LCP'[f(i)] = \LCP[\ISA[i+1]]$. 
	
	To answer queries given $\SA[p]$, we first compute its $P^+$-predecessor
	$i<\SA[p]$ in $O( \log\log_w(n/r))$ time, and retrieve $f(i)$. Then, it holds
	that $\LCP[p + j] = \LCP'[f(i)+j] - (\SA[p]-i-1)$, for $j=1,\dots, s$. Computing $\LCP[p - j]$ for $j=0,\dots,s-1$ is symmetric (using $P^-$ instead of $P^+$).
	
	To see why this procedure is correct, consider the range $\SA[p..p+s]$. 
	We distinguish two cases.
	
	(i) $\BWT[p..p+s]$ contains at least two distinct characters. Then,
	as in case (i) of Lemma~\ref{lemma: general locate},
	$\SA[p]-1 \in P^+$ 
	and is therefore the immediate predecessor $i=\SA[p]-1$ of $\SA[p]$. 
	Moreover, all $\BWT$ positions $[p..p+s]$ are in $q_1, \dots, q_t$, and
	therefore values $\LCP[p..p+s]$ are explicitly stored in a contiguous
range in $\LCP'$ (i.e., $\LCP'[f(i)..f(i)+s]$). Note that $\SA[p]-i=1$, so $\LCP'[f(i)+j] - (\SA[p]-i-1) = \LCP'[f(i)+j]$ for $j=0,\dots,s$. The claim follows.
	
	(ii) $\BWT[p..p+s]$ contains a single character, so it is unary. 
	Then we reason exactly as in case (ii) of Lemma~\ref{lemma: general locate} to 
	define $\delta$ so that
	%Then $\SA[p]-1 \notin P^+$, since there are no run breaks in $\BWT[p..p+s]$. 
	%Moreover, by the $\LF$ formula, the $\LF$ mapping applied on the unary range
	%$\BWT[p..p+s]$ gives a contiguous range $\BWT[\LF(p)..\LF(p+s)] = \BWT[\LF(p)..\LF(p)+s]$. 
	%This corresponds to a parallel backward step on text positions 
	%$\SA[p] \rightarrow \SA[p] -1 , \dots,
	%\SA[p+s] \rightarrow \SA[p+s]-1$. We iterate the application of $\LF$ until we
	%end up in a range $\BWT[\LF^\delta(p)..\LF^\delta(p+s)]$ that is not unary.
	%Then, 
	$i'=\SA[\LF^\delta(p)]-1$ is the immediate predecessor of
	$\SA[p]$ in $P^+$ 
	%. This means that with a single predecessor query on $P^+$ we
	%``skip'' all the unary $\BWT$ ranges $\BWT[\LF^i(p)..\LF^i(p+s)]$ for
	%$i=1,\dots,\delta-1$ 
	and, as in case (i)
of this proof, 
	retrieve the contiguous range $\LCP'[f(i')..f(i')+s]$
	containing the values $\LCP[\LF^\delta(p)..\LF^\delta(p+s)]$. Since
the skipped $\BWT$ ranges are unary, it is not hard to see that
$\LCP[\LF^\delta(p+j)] = \LCP[p+j] + \delta$ for $j=1, \dots, s$ (note that we
do not include $j=0$ since we cannot exclude that, for some $k<\delta$,
$LF^k(p)$ is the first position in its run). From the equality $\delta = \SA[p] - i' - 1 =
\SA[p]-\SA[\LF^\delta(p)]$ (that is, $\delta$ is the distance between $\SA[p]$ and its predecessor minus one or, equivalently, the number of $\LF$ steps virtually performed), we then compute $\LCP[p+j] = \LCP'[f(i')+j]-\delta$ for $j=1, \dots, s$.  \qed
	
\end{proof}	

As a simplification that does not change our asymptotic bounds (but that we
consider in the implementation), note that it is sufficient to sample only the 
last (or the first) characters of $\BWT$ runs. In this case, our toehold in
Lemma~\ref{lem:find_one} will be the last cell $\SA[ep]$ of our current range 
$\SA[sp..ep]$: if $\BWT[ep]=P[j]$, then the next toehold is $ep'$ and its 
position is $\SA[ep]-1$. Otherwise, there must be a run end (i.e., a sampled 
position) in $\SA[sp..ep]$, which we find with $pred(R_{P[j]},ep)$, and this
stores $\SA[ep']$. As a consequence, we only need to store $N[i]=\SA[q-1]$ in 
Lemma~\ref{lem:find_neighbours} and just $P^-$ in 
Lemmas~\ref{lemma: general locate} and \ref{lemma: lcp}, thus reducing the
space for sampling. This was noted simultaneously by several authors after our 
conference paper \cite{GNP18} and published independently \cite{BGI18}.
For this paper, our definition is better suited as the sampling holds
crucial properties --- see the next section.

%!TEX root = paper.tex

\section{Counting and Locating in Optimal Time}\label{sec:optimal}

In this section we show how to obtain optimal counting and locating time in 
the unpacked --- $O(m)$ and $O(m+occ)$ --- and packed --- 
$O(\lceil m\log(\sigma)/w\rceil)$ and $O(\lceil m\log(\sigma)/w\rceil+occ)$ 
--- scenarios, by
using $O(r\log\log_w(\sigma+n/r))$ and $O(r w \log_\sigma\log_w(\sigma+n/r))$ 
space, respectively.
To improve upon the times of Theorem \ref{thm:locating} we process $P$ by
chunks of $s$ symbols on a text $T^*$ formed by chunks, too.

\subsection{An RLFM-index on chunks}

Given an integer $s \ge 1$, let us define texts $T^k[1..\lceil n/s\rceil]$ 
for $k=0,\ldots, s-1$, so that $T^k[i] = T[k+(i-1)s+1..k+is]$, where we assume
$T$ is padded with up to $2(s-1)$ copies of \$, as needed.
That is, $T^k$ is $T$ devoid of its first $k$ symbols and then seen as a
sequence of metasymbols formed by $s$ original symbols. We then define a 
new text $T^* = T^0\,T^1\cdots T^{s-1}$. The text $T^*$ has length $n^* = 
s \cdot \lceil n/s \rceil < n+s$ and its alphabet is of size at most $\sigma^s$.
Assume for now that $\sigma^s$ is negligible; we consider it soon.

We say that a suffix in $T^*$ corresponds to the suffix of $T$ from where it
was extracted.

\begin{definition} \label{def:corrsuffix}
Suffix $T^*[i^*..n^*]$ corresponds to suffix $T[i..n]$ iff the concatenations
of the symbols forming the metasymbols in $T^*[i^*..n^*]$ is equal to the 
suffix $T[i..n]$, if we compare them up to the first occurrence of $\$$.
\end{definition}

The next observation specifies the algebraic transformation between the 
positions in $T^*$ and $T$.

\begin{observation} \label{obs:corrsuffix}
Suffix $T^*[i^*..n^*]$ corresponds to suffix $T[i..n]$ iff
$i = ((i^*-1)\!\!\mod \lceil n/s\rceil ) \cdot s + \lceil i^*/\lceil
n/s\rceil \rceil$.
\end{observation}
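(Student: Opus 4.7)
My plan is that this observation is essentially an arithmetic unpacking of the definitions of $T^k$ and $T^*$, so I would treat it as a direct coordinate-change computation rather than a conceptual argument. First I would set $N = \lceil n/s \rceil$, the common length of the chunked texts $T^0, T^1, \ldots, T^{s-1}$, so that $T^* = T^0 T^1 \cdots T^{s-1}$ has length $n^* = sN$. Given $1 \le i^* \le n^*$, I would uniquely decompose it into a chunk index and a local index by writing $i^* = kN + j$ with $k \in \{0,\ldots,s-1\}$ and $1 \le j \le N$; a standard index calculation yields $k = \lceil i^*/N \rceil - 1$ and $j = ((i^*-1) \bmod N) + 1$, so that $T^*[i^*] = T^k[j]$.

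Next I would invoke the defining identity $T^k[j] = T[k + (j-1)s + 1 \,..\, k + js]$, which says that the metasymbol located at position $i^*$ of $T^*$ starts at text position $i = k + (j-1)s + 1$. Substituting the expressions for $k$ and $j$ and simplifying gives
$$
i \;=\; \bigl(\lceil i^*/N \rceil - 1\bigr) + \bigl(((i^*-1)\bmod N)\bigr)\cdot s + 1 \;=\; \bigl((i^*-1) \bmod N\bigr)\cdot s + \lceil i^*/N \rceil,
$$
which is precisely the claimed closed form.

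For the ``iff'' part I would argue that Definition~\ref{def:corrsuffix} determines $i$ uniquely from $i^*$. Indeed, the suffix $T^*[i^*..n^*]$ begins with the partial chunk $T^k[j..N]$ and continues with $T^{k+1}, \ldots, T^{s-1}$; concatenating the metasymbols in $T^k[j..N]$ yields $T[i..k+Ns]$, which agrees with $T[i..n]$ up to the first occurrence of $\$$ (the padded positions past $T[n]=\$$ lie beyond the sentinel and are therefore irrelevant to the comparison), and the remaining chunks only contribute further padded material past $\$$. Hence the correspondence in Definition~\ref{def:corrsuffix} holds with exactly the $i$ produced by the formula, and with no other value of $i$.

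I do not expect any real obstacle: the entire argument is a coordinate change between the two indexing schemes, plus the standard remark that everything past $T[n]=\$$ is invisible under the suffix-correspondence definition. The only care needed is to verify the decomposition $i^* = kN+j$ at the two boundary cases $j=1$ (start of a chunk) and $j=N$ (end of a chunk), which I would do by substituting into the floor/mod expressions to confirm that $k$ and $j$ lie in their intended ranges.
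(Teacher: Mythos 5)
Your computation is correct, and it matches the paper's treatment: the paper states this as an unproved observation, regarding it as an immediate index-arithmetic consequence of the definitions of $T^k$ and $T^*$, which is exactly the coordinate change $i^*=kN+j$, $i=k+(j-1)s+1$ that you carry out. The boundary checks and the remark that everything past $T[n]=\$$ is invisible under Definition~\ref{def:corrsuffix} are the only points of care, and you handle both.
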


The key property we exploit is that corresponding suffixes of $T$ and $T^*$
have the same lexicographic rank.

\begin{lemma} \label{lem:lexsuffix}
For any suffixes $T^*[i^*..n^*]$ and $T^*[j^*..n^*]$ corresponding to
$T[i..n]$ and $T[j..n]$, respectively, it holds that
$T^*[i^*..n^*] \le T^*[j^*..n^*]$ iff $T[i..n] \le T[j..n]$.
\end{lemma}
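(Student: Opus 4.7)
The plan is to reduce the lexicographic comparison of the two suffixes of $T^*$ to a comparison of their underlying character strings---the concatenations of the $s$-tuples represented by the metasymbols---and then observe that, since $\$$ is unique in $T$, this comparison is ultimately decided inside the $T[i..n]$ and $T[j..n]$ prefixes of those character strings.

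First I would formalize the \emph{unrolling map} $U$ that sends a string of metasymbols over $[1..\sigma]^s$ to the concatenated string of characters. Since the natural order on the metasymbol alphabet is the lexicographic order on the $s$-tuples, a short case analysis---one metasymbol string being a proper prefix of the other, or the two first differing at some metasymbol position $m$ with the differing tuples themselves first disagreeing at some offset $q\le s$---shows that $U$ is a strict-order isomorphism: $S_1 < S_2$ iff $U(S_1) < U(S_2)$. This is the one technical step, but it is routine once the equivalence between the order on $[1..\sigma]^s$ and the order on unrolled $s$-tuples is made explicit.

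Next I would apply $U$ to $T^*[i^*..n^*]$ and $T^*[j^*..n^*]$. By Definition~\ref{def:corrsuffix}, the unrolling $U(T^*[i^*..n^*])$ begins with $T[i..n]$ and then continues with padding characters and the remaining chunks $T^{k+1},\ldots,T^{s-1}$; similarly $U(T^*[j^*..n^*])$ begins with $T[j..n]$. By the isomorphism above, the lexicographic comparison of the two metasymbol suffixes coincides with the comparison of these two character strings.

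Finally, since $\$$ occurs in $T$ only at $T[n]$, no distinct suffix of $T$ can be a proper prefix of another, so if $i\ne j$ then $T[i..n]$ and $T[j..n]$ first differ at some position $p$ where both characters are defined and neither of them is $\$$. Hence $U(T^*[i^*..n^*])$ and $U(T^*[j^*..n^*])$ already first disagree at this very position $p$, inside their $T[i..n]$- and $T[j..n]$-prefixes and in the same order. The potential pitfall---that the unrollings continue past the terminating $\$$ with padding and with the later chunks of $T^*$---is neutralized by the uniqueness of $\$$ in $T$: those trailing characters never get inspected by the comparison. Combining these observations yields $T^*[i^*..n^*] \le T^*[j^*..n^*]$ iff $T[i..n] \le T[j..n]$, which is the claim; the main obstacle is really just keeping this last uniqueness argument straight, since everything else is bookkeeping on the unrolling isomorphism.
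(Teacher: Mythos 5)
Your proof is correct, but it takes a genuinely different route from the paper's. The paper proceeds by induction on the length of the $T^*$-suffix, peeling off one metasymbol at a time: if the leading metasymbols differ the comparison is decided there (and agrees with comparing $T[i..i+s-1]$ vs.\ $T[j..j+s-1]$); if they coincide it recurses on $T^*[i^*+1..n^*]$ and $T^*[j^*+1..n^*]$, which requires an explicit corner-case argument for when $i^*$ or $j^*$ is a multiple of $\lceil n/s\rceil$ (so that the "next" suffix of $T^*$ wraps into the following chunk $T^{k+1}$ and no longer corresponds to $T[i+s..n]$); that case is dispatched by noting the current metasymbol then contains $\$$s and the counts of $\$$s must differ. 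Your argument avoids the induction and that corner case entirely: you factor the claim through the unrolling map $U$, prove once that blockwise lexicographic order with a fixed block length $s$ is an order isomorphism onto the unrolled character order (a routine case analysis, as you say), and then use the uniqueness of $\$$ in $T[1..n]$ to argue that $U(T^*[i^*..n^*])$ and $U(T^*[j^*..n^*])$ already disagree strictly inside their prefixes $T[i..n]$ and $T[j..n]$, so the padding and the trailing chunks $T^{k+1},\ldots,T^{s-1}$ are never inspected. The only caveat is the degenerate suffixes of $T^*$ lying entirely in the $\$$-padding (where one all-$\$$ string \emph{is} a proper prefix of another); the paper's base case covers these explicitly, whereas your argument silently restricts to $1\le i\ne j\le n$. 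Since the lemma is only applied to suffixes genuinely corresponding to suffixes of $T$, this is a presentational gap rather than a mathematical one, but it is worth a sentence. What your approach buys is a cleaner separation of the two ingredients (order-compatibility of blocking, and termination of the comparison inside $T$); what the paper's buys is uniform coverage of the padded suffixes, which matter for bounding the runs of $T^*$ in the next lemma.
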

\begin{proof}
Consider any $i^* \not= j^*$, otherwise the result is trivial because $i=j$.
We proceed by induction on $n^*-i^*$. If this is zero, then $T[i^*..n^*]
= T[n^*] = T^{s-1}[\lceil n/s\rceil] = T[s-1+(\lceil n/s\rceil-1)s+1 ..
s-1+\lceil n/s\rceil s] = \$^s$ is always $\le T[j^*..n^*]$ for any $j^*$.
%s-1+(\lceil n/s\rceil-1)s+1 >= n+s-1
Further, by Observation~\ref{obs:corrsuffix}, $i = \lceil n/s\rceil \cdot s$ is the
rightmost suffix of $T$ (extended with \$s), formed by all \$s, 
and thus it is $\le T[j..n]$ for any $j$.

Now, given a general pair $T^*[i^*..n^*]$ and $T^*[j^*..n^*]$, consider the
first metasymbols $T^*[i^*]$ and $T^*[j^*]$. If they are different, then the
comparison depends on which of them is lexicographically smaller. Similarly,
since $T^*[i^*]=T[i..i+s-1]$ and $T^*[j^*]=T[j..j+s-1]$, the comparison of the 
suffixes $T[i..n]$ and $T[j..n]$ depends on which is smaller between the 
substrings $T[i..i+s-1]$ and $T[j..j+s-1]$. Since the metasymbols
$T^*[i^*]$ and $T^*[j^*]$ are ordered lexicographically, the
outcome of the comparison is the same. If, instead, $T^*[i^*]=T^*[j^*]$,
then also $T[i..i+s-1]=T[j+s-1]$. The comparison in $T^*$ is then decided by
the suffixes $T^*[i^*+1..n^*]$ and $T^*[j^*+1..n^*]$, and in $T$ by the
suffixes $T[i+s..n]$ and $T[j+s..n]$. By Observation~\ref{obs:corrsuffix}, the
suffixes $T^*[i^*+1..n^*]$ and $T^*[j^*+1..n^*]$ almost always correspond to 
$T[i+s..n]$ and $T[j+s..n]$, and then by the inductive hypothesis the result
of the comparisons is the same. The case where $T^*[i^*+1..n^*]$ or
$T^*[j^*+1..n^*]$ do not correspond to $T[i+s..n]$ or $T[j+s..n]$ arises when
$i^*$ or $j^*$ are a multiple of $\lceil n/s \rceil$, but in this case they
correspond to some $T^k[\lceil n/s\rceil]$, which contains at least one \$.
Since $i^* \not= j^*$, the number of \$s must be distinct, and then the
metasymbols cannot be equal.
\qed
\end{proof}

An important consequence of Lemma~\ref{lem:lexsuffix} is that the suffix
arrays $\SA^*$ and $\SA$ of $T^*$ and $T$, respectively, list the corresponding
suffixes in the same order (the positions of the corresponding suffixes in
$T^*$ and $T$ differ, though).
Thus we can find suffix array ranges in $\SA$ via searches on $\SA^*$. More
precisely, we can use the RLFM-index of $T^*$ instead of that of $T$. 
The following result is the key to bound the space usage of our structure.

\begin{lemma} \label{lem:rstar}
If the BWT of $T$ has $r$ runs, then the BWT of $T^*$ has $r^* = O(rs)$ runs.
\end{lemma}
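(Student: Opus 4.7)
The plan is to bound $r^*$ by counting the run boundaries of $\BWT^*$ and classifying each one by the smallest offset at which the two adjacent metasymbols disagree. First, I would use Lemma~\ref{lem:lexsuffix} together with Observation~\ref{obs:corrsuffix} to identify the metasymbol $\BWT^*[p^*]$: whenever $i^* = \SA^*[p^*]$ is not the starting position of one of the $s$ chunks $T^0,T^1,\ldots,T^{s-1}$ that form $T^*$, the preceding metasymbol $T^*[i^*-1]$ consists precisely of the $s$ characters of $T$ immediately to the left of the corresponding suffix start $\SA[p^*]$; that is, $\BWT^*[p^*] = T[\SA[p^*]-s..\SA[p^*]-1]$. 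The at most $s$ exceptional positions where this identification fails (one per chunk boundary of $T^*$) will contribute only $O(s)$ extra run boundaries, which get absorbed into the final bound.

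Next, invoking the standard identity $\SA[\LF(p)]=\SA[p]-1$, I would rewrite $T[\SA[p^*]-1-j] = \BWT[\LF^j(p^*)]$ for each $j \in \{0,\ldots,s-1\}$. Consequently, a non-exceptional $\BWT^*$ boundary at $p^*$ is equivalent to the existence of some $j \in \{0,\ldots,s-1\}$ with $\BWT[\LF^j(p^*)] \ne \BWT[\LF^j(p^*+1)]$. Let $j(p^*)$ denote the smallest such $j$; the argument then reduces to bounding, for each fixed $k \in \{0,\ldots,s-1\}$, the number of positions $p^*$ with $j(p^*)=k$.

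The case $k=0$ is immediate: these are exactly the $r-1$ ordinary $\BWT$ run boundaries. For $k \ge 1$, the minimality of $j(p^*)$ forces $\BWT[\LF^i(p^*)]=\BWT[\LF^i(p^*+1)]$ for all $i<k$; iterating the standard $\LF$-property that ``$\BWT[q]=\BWT[q+1]$ implies $\LF(q+1)=\LF(q)+1$'' then yields $\LF^k(p^*+1) = \LF^k(p^*)+1$, so the level-$k$ break condition becomes $\BWT[\LF^k(p^*)] \ne \BWT[\LF^k(p^*)+1]$, meaning that $\LF^k(p^*)$ is the last position of some $\BWT$ run. Since there are at most $r$ such run-ends and $\LF^k$ is a bijection on $[1,n]$, at most $r$ values of $p^*$ can satisfy $j(p^*)=k$. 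Summing over $k \in \{0,\ldots,s-1\}$ and adding the $O(s)$ contribution from exceptional pairs gives $r^* \le sr + O(s) = O(rs)$. The main obstacle is the clean bookkeeping at the $s$ chunk-boundaries of $T^*$, where the identification $\BWT^*[p^*] = T[\SA[p^*]-s..\SA[p^*]-1]$ breaks down and a brief direct case analysis is needed; once this is handled, the level-by-level counting delivers the bound.
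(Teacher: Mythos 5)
Your proof is correct, and the outer reduction is the same as the paper's: you identify each non-exceptional entry $\BWT^*[p^*]$ with the length-$s$ context $T[\SA[p]-s..\SA[p]-1]$ of the corresponding suffix of $T$ (via Lemma~\ref{lem:lexsuffix} and Observation~\ref{obs:corrsuffix}), and you absorb the $O(s)$ chunk-boundary and padding exceptions into the constant. Where you diverge is in how the remaining quantity is bounded. The paper simply cites Kempa's result that the number of $s$-runs of the $\BWT$ of $T$ (maximal runs of equal length-$s$ contexts in lexicographic order) is at most $s\cdot r$, and stops there. You instead prove that bound from first principles: classify each boundary by the smallest offset $k$ at which the two adjacent contexts disagree, use minimality to propagate $\LF^i(p^*+1)=\LF^i(p^*)+1$ up to level $k$, and conclude from the injectivity of $\LF^k$ that each level contributes at most $r$ (in fact $r-1$) boundaries. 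This level-by-level counting is essentially the standard proof of the cited lemma, so the underlying mathematics coincides; what your version buys is a self-contained argument that does not outsource the key counting step to an external reference, at the cost of the $\LF$ and chunk-boundary bookkeeping you correctly flag as the only delicate part.
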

\begin{proof}
Kempa \cite[see before Thm.~3.3]{Kem19} shows that the number of {\em $s$-runs}
in the BWT of $T$, that is, the number of maximal runs of equal substrings of 
length $s$ preceding the suffixes in lexicographic order, is at most $s\cdot r$.
Since $\SA$ and $\SA^*$ list the corresponding suffixes in the same order, the 
number of $s$-runs in $T$ essentially corresponds to the number of runs in 
$T^*$, formed by
the length-$s$ metasymbols preceding the same suffixes. The only exceptions are
the $s$ metasymbols that precede some metasymbol $T^k[1]$ in $T^*$. Other $O(s)$
runs can appear because we have padded $T$ with $O(s)$ copies of \$, and thus
$T$ has $O(s)$ further suffixes. Still, the result is in $O(rs)$.
\qed
\end{proof}

\subsection{Mapping the alphabet} \label{sec:mapping}

The alphabet size of $T^*$ is $\sigma^s$, which can be large.
Depending on $\sigma$ and $s$, we could even be unable to handle the
metasymbols in constant time. Note, however, that the {\em effective} alphabet
of $T^*$ must be $\sigma^* \le r^* = O(rs)$, which will always be in 
$o(n\log^2 n)$ for the moderate values of $s$ we will use. Thus we can always
manage metasymbols in $[1..\sigma^*]$ in constant time. 
We use a compact trie 
of height $s$ to convert the existing substrings of length $s$ of $T$ into 
numbers in $[1..\sigma^*]$, respecting the lexicographic order. The trie uses 
perfect hashing to find the desired child in constant time, and the strings
labeling the edges are represented as pointers to an area where we store all the
distinct substrings of length $s$ in $T$. We now show that this area is of 
length $O(rs)$.

\begin{definition} \label{def:primary}
        We say that a text substring $T[i..j]$ is \emph{primary} iff
it contains at least one sampled character (see
Definition~\ref{def: sampled position}).
\end{definition}

\begin{lemma}\label{lemma:primocc}
        Every text substring $T[i..j]$  has a primary occurrence $T[i'..j'] = T[i..j]$.
\end{lemma}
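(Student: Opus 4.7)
My plan is to prove the lemma constructively: starting from any occurrence $P$ of $w = T[i..j]$, with $\ell = j-i+1$, that is assumed not to be primary, I will exhibit another occurrence of $w$ that is primary by a leftward shift in the suffix array.

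First I would set $q_k := \ISA[P+k]$ for $k = 0, 1, \ldots, \ell$. By the non-primariness hypothesis, every $q_k$ with $k \in [1..\ell]$ lies strictly inside a $\BWT$ run of character $T[P+k-1]$, so both its $\BWT$ neighbours share its character. The main tool will be the well-known fact that $\LF$ is strictly order-preserving on every $\BWT$ run: if $p$ and $p+1$ lie in the same run, then $\LF(p+1) = \LF(p)+1$. Iterating this, for any $m \ge 0$ such that each $q_k - m$ (for $k \in [1..\ell]$) is still in the same run as $q_k$, the synchronous chain $\LF(q_k - m) = q_{k-1} - m$ holds for all $k$, and unwinding it gives $\SA[q_k - m] = \SA[q_0 - m] + k$. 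Setting $P_m := \SA[q_0 - m] - 1$ and using $\BWT[q_k - m] = T[P+k-1]$, this translates into $T[P_m + 1 .. P_m + \ell] = w$: every valid iterate of $m$ produces a fresh occurrence of $w$ in $T$.

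Next I would take $m^* \ge 1$ to be the smallest $m$ at which some $q_{k^*} - m^*$ (with $k^* \in [1..\ell]$) first coincides with the leftmost position of its $\BWT$ run, and hence becomes a run boundary. I would argue that $m^*$ exists and satisfies $m^* \le q_0 - 1$: if instead the iteration ran all the way to $m = q_0 - 1$ without any such boundary hit, then $q_0 - m = 1$ would force $\SA[q_0 - m] = \SA[1] = n$ (the position of the $\$$-suffix), and the chain identity would then give $\SA[q_\ell - m] = n + \ell$, contradicting the fact that $\SA$ is a permutation of $[1..n]$. At $m = m^*$ the chain is still valid, because by minimality the neighbour $q_{k^*} - m^* + 1$ is still in the original run, so $P_{m^*} + 1$ is a genuine occurrence of $w$; and since $q_{k^*} - m^*$ is a $\BWT$ run boundary, the character $T[P_{m^*} + k^*]$ sitting at that $\BWT$ position is sampled. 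Hence $T[P_{m^*} + 1 .. P_{m^*} + \ell]$ is a primary occurrence of $w$, as required. The degenerate situations, such as $w$ being a unique-occurrence suffix ending in $\$$, are immediate: the construction would otherwise produce a distinct occurrence, contradicting uniqueness, so the starting occurrence must already be primary.

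The hard part will be the \emph{simultaneous} control of the $\ell$ parallel shifts, verifying that the $\LF$ chain $\LF(q_k - m) = q_{k-1} - m$ remains valid for every $k \in [1..\ell]$ throughout the leftward iteration, and then invoking the global permutation constraint $\SA[q_\ell - m] \le n$ to force a genuine $\BWT$ run boundary to be encountered within $q_0 - 1$ steps.
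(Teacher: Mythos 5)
Your proof is correct, but it takes a genuinely different route from the paper's. The paper argues by induction on the length of the substring: a primary occurrence of $T[i+1..j]$ is extended one character to the left, and when the preceding characters disagree, the $\BWT$ range of $T[i+1..j]$ contains two distinct symbols, so a run of $T[i]$'s must begin or end inside it --- the same ``two distinct characters in a range force a run boundary'' device used for the toehold in Lemma~\ref{lem:find_one}. You instead treat the full-length substring at once: you take the $\ell+1$ suffix-array positions $q_k=\ISA[P+k]$ of an allegedly non-primary occurrence, slide them synchronously upward by $m$ within their respective runs (using $\LF(p+1)=\LF(p)+1$ inside a run, i.e.\ Lemma~\ref{lem:dsa}), read off a new occurrence starting at $\SA[q_0-m]$ for each valid $m$, and force a run boundary to be hit because otherwise $q_0-m$ would reach $1$ and $\SA[q_\ell-m]=\SA[1]+\ell=n+\ell$ would exceed $n$. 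This is a ``vertical'' analogue of the paper's Lemma~\ref{lem:dsa2}, where the sliding is instead done by iterating $\LF$; your version even yields a slightly stronger statement (every occurrence can be shifted, staying within the same runs, to a primary one). The paper's induction is shorter and reuses machinery already in place; yours is self-contained and avoids the induction at the cost of the bookkeeping you correctly identify as the hard part --- and your minimality argument for $m^*$ does handle it: up to the first boundary hit all $\ell$ shifted positions remain in their runs, so the occurrence produced at $m^*$ is genuine and contains a sampled character. The only loose end is an occurrence ending at $T[n]$, where $q_\ell=\ISA[n+1]$ is undefined; this is a boundary convention around the terminator $\$$ that the paper's own base case glosses over as well, so it is not a substantive gap.
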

\begin{proof}
        We prove the lemma by induction on $j-i$. If $j-i=0$, then $T[i..j]$ is a single character, and every character has a sampled occurrence $i'$ in the text.
        Now let $j-i > 0$. By the inductive hypothesis, $T[i+1..j]$ has a primary occurrence $T[i'+1..j']$. If $T[i]=T[i']$, then $T[i'..j']$ is a primary 
occurrence of $T[i..j]$. Assume then that $T[i]\neq T[i']$.
Let $[sp,ep]$ be the $\BWT$ range of $T[i+1..j]$. 
        Then there are two distinct symbols in $\BWT[sp,ep]$ and thus there must be a run of $T[i]$'s ending or beginning in $\BWT[sp,ep]$, say at position $sp \leq q \leq ep$. Thus it holds that $\BWT[q] = T[i]$ and the text position $i''=\SA[q]-1$ is sampled. We then have a primary occurrence
$T[i''..j''] = T[i..j]$.
\qed
\end{proof}

\begin{lemma}\label{lemma:distinct kmers}
There are at most $2rs$ distinct $s$-mers in the text, for any $1\leq s \leq n$.
\end{lemma}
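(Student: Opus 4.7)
The plan is to charge each distinct $s$-mer to a (sampled position, offset) pair and then count the pairs. By Lemma~\ref{lemma:primocc}, every distinct $s$-mer $w$ appearing in $T$ has at least one primary occurrence $T[i'..i'+s-1] = w$; I would fix one such occurrence per $w$, and by Definition~\ref{def:primary} select a sampled text position $k \in [i',\, i'+s-1]$ lying inside it (the leftmost, say).

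Next, I would show that the resulting map $w \mapsto (k,\, k-i')$ is injective. Given any pair $(k,d)$ with $d \in \{0,1,\ldots,s-1\}$, the starting position $i' = k-d$ is determined, and hence so is $w = T[i'..i'+s-1]$. Therefore the number of distinct $s$-mers is bounded by the number of such valid pairs.

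Finally, I would count the pairs. By Definition~\ref{def: sampled position}, the sampled text positions are $T[n-1]$ together with, for each of the (at most $r-1$) indices $p>1$ with $\BWT[p]\neq \BWT[p-1]$, the two positions $T[\SA[p-1]-1]$ and $T[\SA[p]-1]$; hence there are at most $2r$ sampled positions. Combined with the $s$ possible offsets, this gives at most $2rs$ pairs, and thus at most $2rs$ distinct $s$-mers. The argument is short; the only nontrivial ingredient is the existence of a primary occurrence (Lemma~\ref{lemma:primocc}), and the only subtle point is ensuring the charging is injective, which the offset coordinate takes care of.
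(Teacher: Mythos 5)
Your proof is correct and follows essentially the same route as the paper's: both bound the distinct $s$-mers by observing (via Lemma~\ref{lemma:primocc}) that each has an occurrence covering one of the at most $2r$ sampled positions, and that each sampled position can be covered by an $s$-mer in only $s$ ways. Your explicit injective charging via the pair $(k, k-i')$ is just a more formal phrasing of the paper's "each sampled position overlaps with $s$ $s$-mers" counting.
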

\begin{proof}
From Lemma \ref{lemma:primocc}, every distinct $s$-mer
appearing in the text has a primary occurrence. It follows that, in order to
count the number of distinct $s$-mers, we can restrict our attention to the
regions of size $2s-1$ overlapping the at most $2r$ sampled positions 
(Definition~\ref{def: sampled position}). Each sampled position overlaps with
$s$ $s$-mers, so the claim easily follows. \qed
\end{proof}

The compact trie then has size $O(rs)$, since it has $\sigma^* \le r^* = O(rs)$
leaves and no unary paths, and the area with the distinct strings is
also of size $O(rs)$. The structure maps any metasymbol to the new alphabet
$[1..\sigma^*]$, by storing the corresponding symbol in each leaf. Each
internal trie node $v$ also stores the first and last symbols of 
$[1..\sigma^*]$ stored at leaves descending from it, $v_{\min}$ and $v_{\max}$.

We then build the RLFM-index of $T^*$ on the mapped 
alphabet $[1..\sigma^*]$, and our structures using $O(\sigma^s)$ space 
become bounded by space $O(\sigma^*) = O(r^*)$. 

\subsection{Counting in optimal time}

Let us start with the base FM-index. Recalling Section~\ref{sec:fmindex}, the
FM-index of $T^*$ consists of an array $C^*[1..\sigma^*]$ and a string 
$L^*[1..n^*]$, where $C^*[c]$ tells the number of times metasymbols less than 
$c$ occur in $T^*$, and where $L^*$ is the BWT of $T^*$, with the symbols 
mapped to $[1..\sigma^*]$.

To use this FM-index, 
we process $P$ by metasymbols too. We define two patterns,
 $P^* \cdot L_P$ and $P^* \cdot R_P$, with
 $P^*[1..m^*] = P[1..s] P[s+1..2s] \cdots P[\lfloor m/s-1 \rfloor \cdot s+1 ..
\lfloor m/s \rfloor \cdot s]$, $L_P = P[\lfloor m/s \rfloor \cdot
s+1..m] \cdot \$^{s- (m \!\!\mod s)}$, and $R_P = P[\lfloor m/s \rfloor \cdot 
s+1..m] \cdot @^{s- (m \!\!\mod s)}$, $@$ being the largest symbol in the alphabet.
That is, $P^* \cdot P_L$ and $P^* \cdot P_R$ are $P$ padded with the smallest
and largest alphabet symbols, respectively, and then regarded as a sequence of 
$\lfloor m/s \rfloor+1$ metasymbols.
This definition and Lemma~\ref{lem:lexsuffix} ensure that the suffixes of $T$ 
starting with $P$ correspond to the suffixes of $T^*$ starting with strings
lexicographically between $P^* \cdot P_L$ and $P^* \cdot P_R$.  

We use the trie to map the symbols of $P^*$ to the alphabet $[1..\sigma^*]$.
If a metasymbol of $P^*$ is not found, it means that $P$ 
does not occur in $T$. To map the symbols $L_P$ and $R_P$, 
we descend by the symbols $P[\lfloor m/s\rfloor \cdot s+1..m]$ and, upon 
reaching trie node $v$, we use the precomputed limits $v_{\min}$ and $v_{\max}$.
Overall, we map $P^*$, $L_P$ and $R_P$ in $O(m)$ time.

We can then apply backward search almost as in Section~\ref{sec:fmindex}, but 
with a twist for the last symbols of $P^* \cdot P_L$ and $P^* \cdot P_R$: We 
start with the range $[sp_{m^*},ep_{m^*}] = [C^*[v_{\min}]+1,C^*[v_{\max}]]$, 
and then carry out $m^*-1$ steps, for $j = m^*-1,\ldots,1$, as follows, with 
$c$ being the mapping of $P^*[j]$:
\begin{eqnarray*}
sp_j &=& C^*[c]+\rank_c(L^*,sp_{j+1}-1)+1,\\
ep_j &=& C^*[c]+\rank_c(L^*,ep_{j+1}).
\end{eqnarray*}

The resulting range, $[sp,ep]=[sp_1,ep_1]$, corresponds to the range of $P$ in
$T$, and is obtained with $2(m^*-1) \le 2m/s$ operations $\rank_c(L,i)$.

A RLFM-index (Section~\ref{sec:rlfmi}) on $T^*$ stores, instead of $C^*$ and 
$L^*$, structures $E$, $L'$, $D$, and $C'$, of total size $O(\sigma^* + r^*)
=O(r^*)$. These simulate the
operation $\rank_c(L,i)$ in the time of a predecessor search on $E$ and 
$\rank$ and access operations on $L'$. These add up to
$O(\log\log_w(\sigma^* + n/r^*))$ time. We can still retain $C^*$ to carry out 
the first step of our twisted backward search on $L_P$ and $R_P$, and then 
switch to the RLFM-index.

\begin{lemma}
Let $T[1..n]$, on alphabet $[1..\sigma]$, have a BWT with $r$ runs, and let
$s=O(\log n)$ be a positive integer. Then there exists a data structure using 
$O(rs)$ space that counts the number of occurrences of any pattern $P[1..m]$ 
in $T$ in $O(m+(m/s)\log\log_w(\sigma+n/r))$. In particular, a structure
using $O(r \log\log_w(\sigma+n/r))$ space counts in time $O(m)$.
\end{lemma}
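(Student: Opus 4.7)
The plan is to assemble the tools built up in the preceding subsections in a fairly direct way. First I would store three components: (i) the RLFM-index of $T^*$ on the mapped alphabet $[1..\sigma^*]$, which by Lemma~\ref{lem:rstar} has $r^* = O(rs)$ runs and therefore occupies $O(r^*)=O(rs)$ words; (ii) the array $C^*[1..\sigma^*]$ of cumulative metasymbol frequencies, of size $O(\sigma^*)=O(rs)$; and (iii) the compact trie of the distinct length-$s$ substrings of $T$ described in Section~\ref{sec:mapping}, which has $\sigma^*\le r^*=O(rs)$ leaves by Lemma~\ref{lemma:distinct kmers}, uses perfect hashing at each node for constant-time child lookup, and keeps in each internal node $v$ the precomputed extremes $v_{\min}$ and $v_{\max}$ of the mapped metasymbols in its subtree. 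The total space is $O(rs)$.

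To count the occurrences of $P[1..m]$, I first convert $P$ into its metasymbol form. Descending the trie $\lfloor m/s\rfloor$ times gives the images of $P^*[1],\ldots,P^*[\lfloor m/s\rfloor]$ in $[1..\sigma^*]$; if any descent fails, $P$ does not occur and I return $0$. For the trailing partial block $P[\lfloor m/s\rfloor s+1..m]$ I descend as far as possible to a node $v$, and read off the images of $L_P$ and $R_P$ as $v_{\min}$ and $v_{\max}$. This trie traversal costs $O(m)$ time. Then I initialize the backward-search range to $[C^*[v_{\min}]+1, C^*[v_{\max}]]$, which by Lemma~\ref{lem:lexsuffix} is exactly the $\SA^*$-range of the suffixes of $T^*$ whose first metasymbol lies between $L_P$ and $R_P$, and therefore the correct starting range for $P^*\cdot L_P,\ldots,P^*\cdot R_P$. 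I then perform the remaining $m^*-1=O(m/s)$ backward steps using the RLFM-index of $T^*$, each in time $O(\log\log_w(\sigma^*+n^*/r^*))$ by Lemma~\ref{lem:rlfm}. The final range width yields the count of occurrences of $P$ in $T$, once more by Lemma~\ref{lem:lexsuffix}.

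For the time bound I would observe that $\sigma^*+n^*/r^*=O(rs+n/r)$, and since $s=O(\log n)$ and $w=\Omega(\log n)$ the extra factor $s$ is absorbed inside the double logarithm, so $\log\log_w(\sigma^*+n^*/r^*)=O(\log\log_w(\sigma+n/r))$; together with the $O(m)$ spent on the trie this gives the stated $O(m+(m/s)\log\log_w(\sigma+n/r))$ bound. For the ``in particular'' clause, setting $s=\Theta(\log\log_w(\sigma+n/r))$ balances the two additive terms: the space is $O(rs)=O(r\log\log_w(\sigma+n/r))$ and the time is $O(m+(m/s)\cdot s)=O(m)$.

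The step I expect to be the main obstacle is precisely this bookkeeping: showing that replacing $T$ by $T^*$ does not inflate the per-backward-step cost beyond the advertised $\log\log_w(\sigma+n/r)$. Since both $\sigma^*$ and $r^*$ grow by a factor of $s$ relative to $\sigma$ and $r$, one has to be a little careful to push the factor $s$ under the outer logarithm using $s=O(\log n)$; it is cleaner to carry the analysis through in the raw quantity $\log\log_w(\sigma^*+n^*/r^*)$ and only collapse to $\log\log_w(\sigma+n/r)$ at the very end. Everything else is routine: correctness of the twisted first step is immediate from $C^*$ and Lemma~\ref{lem:lexsuffix}, the space accounting is a direct consequence of Lemma~\ref{lem:rstar} and Lemma~\ref{lemma:distinct kmers}, and per-metasymbol trie lookups are constant time thanks to perfect hashing.
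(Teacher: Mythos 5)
Your construction and space accounting are exactly the paper's (same three components, same twisted first backward step via $C^*[v_{\min}]$ and $C^*[v_{\max}]$, same $O(m)$ trie mapping), but the time analysis contains a genuine error at precisely the step you flag as delicate. You bound $\sigma^*\le r^*=O(rs)$ and then claim $\log\log_w(\sigma^*+n^*/r^*)=O(\log\log_w(\sigma+n/r))$. This is false in general: take $\sigma=O(1)$ and $r=\Theta(n)$, so that $\log\log_w(\sigma+n/r)=O(1)$, while $rs$ can be $\Theta(n\log n)$ and hence $\log\log_w(rs+n/r)=\Theta(\log\log_w n)=\Theta(\log\log n)$ when $w=\Theta(\log n)$. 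The factor $s$ is not the problem (it indeed disappears under the double logarithm); the problem is that you have replaced $\sigma$ by $r$ inside the double logarithm, and $\log\log_w r$ can be asymptotically larger than $\log\log_w\sigma$. With your bound the total cost would be $O(m+(m/s)\log\log n)$, which is not $O(m)$ in the ``in particular'' instantiation where $s=\log\log_w(\sigma+n/r)$ may be constant.

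The repair is the one the paper uses: bound the effective alphabet of $T^*$ by $\sigma^*\le\sigma^s$ instead of by $r^*$. Then each backward step costs $O(\log\log_w(\sigma^s+n/r))\subseteq O(\log s+\log\log_w(\sigma+n/r))$, and the extra additive $O(\log s)$ per step, multiplied by the $O(m/s)$ steps, contributes $O(m\log(s)/s)=O(m)$, which is absorbed by the $O(m)$ already paid for the trie traversal. So the data structure you describe is correct; only the final collapse of $\log\log_w(\sigma^*+n^*/r^*)$ needs to be rerouted through $\sigma^s$ rather than through $r^*$.
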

\begin{proof}
We build the mapping trie, the RLFM-index on $T^*$ using the mapped alphabet, 
and the array
$C^*$ of the FM-index of $T^*$. All these require $O(\sigma^* + r^*) = 
O(r^*)$ space, which is $O(rs)$ by Lemma~\ref{lem:rstar}. To count the number
of occurrences of $P$, we first compute $P^*$, $L_P$, and $R_P$ on the mapped
alphabet with the trie, in time $O(m)$. We then carry out the backward
search, which requires one constant-time step to find $[sp_{m^*},ep_{m^*}]$
and then $2(m^*-1) \le 2m/s$ steps requiring $\rank_c(L,i)$, which is simulated by
the RLFM-index in time $O(\log\log_w(\sigma^* + n^*/r^*))$. Since $\sigma^* \le
\sigma^s$, $n^* \le n+s$, and $r^* \ge r$, we can write the time as 
$O(\log\log_w(\sigma^s+n/r)) \subseteq O(\log s + \log\log_w(\sigma+n/r))$.
The term $O(\log s)$ vanishes when multiplied by $O(m/s)$ because there is 
an $O(m)$ additive term.
\qed
\end{proof}

\subsection{Locating in optimal time} \label{sec:optloc}

To locate in optimal time, we will use the toehold technique of
Lemma~\ref{lem:find_one} on $T^*$ and $P^*$. The only twist is that, when we
look for $L_P$ and $R_P$ in our trie, we must store in the internal trie node
we reach by $P[\lfloor m/s\rfloor \cdot s+1..m]$ the position $p$ in $\SA^*$
and the value $\SA^*[p]$ of some metasymbol starting with that string. From 
then on, we do exactly as in Lemma~\ref{lem:find_one}, so we can recover the 
interval $\SA^*[sp,ep]$ of $P^*$ in $T^*$. Since, by Observation~\ref{obs:corrsuffix},
we can easily convert position $\SA^*[p]$ to the corresponding position 
$\SA[p]$ in $T$, we have the following result.

\begin{lemma} \label{lem:find1star}
We can store $O(rs)$ words such that, given $P[1..m]$, in time
$O(m+(m/s)\log\log_w(\sigma+n/r))$ we can compute the interval $\SA[sp,ep]$
of the occurrences of $P$ in $T$, and also return the position $p$ and content
$\SA[p]$ of at least one cell in the interval $[sp,ep]$.
\end{lemma}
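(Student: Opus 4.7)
The plan is to lift the toehold technique of Lemma~\ref{lem:find_one} from $T$ to $T^*$ and combine it with the metasymbol-based RLFM-index of Section~\ref{sec:optimal}. By Lemma~\ref{lem:lexsuffix}, $\SA^*$ and $\SA$ list corresponding suffixes in the same lexicographic order, so we can run the entire search on $T^*$ and translate the final toehold position back to $T$ via Observation~\ref{obs:corrsuffix}.

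The first step is to install on $T^*$ the ingredients that Lemma~\ref{lem:find_one} needs. For each metasymbol $c \in [1..\sigma^*]$ I would maintain a predecessor structure $R_c^*$ holding the BWT positions $p$ of $T^*$ at which $L^*[p]=c$ is a sampled character (first or last in its run in the BWT of $T^*$), each decorated with the text position $\SA^*[p]-1$. Since the BWT of $T^*$ has $r^* = O(rs)$ runs by Lemma~\ref{lem:rstar}, the total size is $O(rs)$; concatenating universes as in Lemma~\ref{lem:find_one} yields a single $O(rs)$-word predecessor structure with query time $O(\log\log_w(\sigma^* n^*/r^*)) = O(\log s + \log\log_w(\sigma+n/r))$. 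I would also augment every internal node $v$ of the alphabet-mapping trie with a precomputed toehold $\langle p_v, \SA^*[p_v]\rangle$ such that the metasymbol $T^*[\SA^*[p_v]]$ has a label beginning with the path-string from the root to $v$. Such a $p_v$ exists because that prefix extends to an actual $s$-mer of $T$ (Lemma~\ref{lemma:primocc}), and since the trie has $O(rs)$ nodes, the extra space is $O(rs)$.

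Then the search runs exactly as in the counting algorithm, maintaining the toehold throughout. We compute $P^*$, $L_P$, $R_P$ with the trie in $O(m)$ time; upon descending by $P[\lfloor m/s\rfloor\cdot s+1..m]$ we reach a node $v$ that supplies both $v_{\min},v_{\max}$ for the twisted first step, giving $[sp_{m^*},ep_{m^*}] = [C^*[v_{\min}]+1,C^*[v_{\max}]]$, together with the initial toehold $\langle p_v,\SA^*[p_v]\rangle$, which lies in this range by construction. The remaining $m^*-1$ backward-search steps are handled on the BWT of $T^*$ exactly as in Lemma~\ref{lem:find_one}: if $L^*[p]$ equals the current metasymbol $c$ of $P^*$, we advance $p$ by the $\LF$-mapping of $T^*$ and decrement the maintained $\SA^*$ value; otherwise the new interval $[sp_j,ep_j]$ must intersect a run extreme of $c$, and a single query $pred(R_c^*,ep_{j+1})$ returns a valid toehold inside it. Each step costs $O(\log s + \log\log_w(\sigma+n/r))$; summed over $O(m/s)$ steps, the $O(\log s)$ contribution is $O((m/s)\log s) \subseteq O(m)$, giving the claimed total $O(m+(m/s)\log\log_w(\sigma+n/r))$. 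A constant-time application of Observation~\ref{obs:corrsuffix} finally converts $\SA^*[p]$ to $\SA[p]$.

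The only delicate point I anticipate is justifying the twisted first step: one must check that $[C^*[v_{\min}]+1,C^*[v_{\max}]]$ really is the $\SA^*$-range of all suffixes of $T^*$ whose first metasymbol has the path-string of $v$ as a prefix of its label, and that the precomputed toehold at $v$ falls inside this range. Both facts follow directly from the definitions of $v_{\min},v_{\max}$ in Section~\ref{sec:mapping} combined with Lemma~\ref{lem:lexsuffix}. If $P$ does not occur in $T$, either the descent on $P[\lfloor m/s\rfloor\cdot s+1..m]$ or on some metasymbol of $P^*$ fails inside the trie, or the backward search on $T^*$ produces an empty interval, in which case we simply report no occurrences.
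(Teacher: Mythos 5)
Your proposal is correct and follows essentially the same route as the paper: run the toehold technique of Lemma~\ref{lem:find_one} on the RLFM-index of $T^*$, seed the toehold from the trie node reached by the final partial chunk of $P$, and convert the resulting $\SA^*$ value back to $\SA$ via Observation~\ref{obs:corrsuffix}. Your write-up is more explicit than the paper's (which simply invokes Lemma~\ref{lem:find_one} on $T^*$), but the data structures, the time accounting (absorbing the $O((m/s)\log s)$ term into $O(m)$), and the justification of the first twisted step all match.
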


We now use the structures of Lemma~\ref{lemma: general locate} on the original
text $T$ and with the same value of $s$. Thus, once we obtain some value
$\SA[p]$ within the interval, we return the occurrences in $\SA[sp..ep]$
by chunks of $s$ symbols, in time $O(s+\log\log_w(n/r))$. We then have the
following result.

\begin{theorem} \label{thm:optimal}
Let $s>0$. We can store a text $T[1..n]$, over alphabet $[1..\sigma]$, in
$O(rs)$ words, where $r$ is the number of runs in the BWT of $T$, such that
later, given a pattern $P[1..m]$, we can count the occurrences of $P$ in $T$
in $O(m+(m/s)\log\log_w(\sigma+n/r))$ time and (after counting) report their
$occ$ locations in overall time $O((1+\log\log_w(n/r)/s)\cdot occ)$. In
particular, if $s=\log\log_w(\sigma+n/r)$, the structure uses 
$O(r\log\log_w(\sigma+n/r))$ space, counts in time $O(m)$, and locates in
time $O(m+occ)$.
\end{theorem}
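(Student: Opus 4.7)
The plan is to combine Lemma~\ref{lem:find1star} with Lemma~\ref{lemma: general locate}, both instantiated with the same parameter $s$. Lemma~\ref{lem:find1star} already delivers the counting bound and, crucially, hands us a ``toehold'' inside $\SA[sp,ep]$: together with the interval endpoints it returns a position $p\in[sp,ep]$ together with the value $\SA[p]$. Lemma~\ref{lemma: general locate} then lets us extend any known $\SA$ value left and right in contiguous blocks of $s$ entries. Both structures need $O(rs)$ words, so the total space remains $O(rs)$.

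For counting, we simply invoke Lemma~\ref{lem:find1star}, whose running time is $O(m+(m/s)\log\log_w(\sigma+n/r))$. For locating, after counting we possess $p$ and $\SA[p]$ with $sp\le p\le ep$. We then call Lemma~\ref{lemma: general locate} repeatedly: starting from $\SA[p]$ we obtain $\SA[p+1..p+s']$ in one shot for any $s'\le s$, then from the rightmost value so obtained we perform the next extension, and so on until we reach $\SA[ep]$; symmetrically to the left we reach $\SA[sp]$. Each invocation costs $O(\log\log_w(n/r)+s')$ and yields up to $s$ new cells. The total number of invocations is $\lceil occ/s\rceil+O(1)$, so the locating work sums to
\[
O\!\left(\left(\frac{occ}{s}+1\right)\bigl(\log\log_w(n/r)+s\bigr)\right) \;=\; O\!\left(\left(1+\frac{\log\log_w(n/r)}{s}\right)\cdot occ\right),
\]
matching the claimed bound (the additive $\log\log_w(n/r)$ from the last possibly-incomplete block is absorbed in the counting time, since it is no larger than $(m/s)\log\log_w(\sigma+n/r)$ whenever $occ\ge 1$, and when $occ=0$ there is nothing to report).

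For the particular choice $s=\log\log_w(\sigma+n/r)$, the space becomes $O(r\log\log_w(\sigma+n/r))$; the counting time becomes
$O(m+(m/s)\cdot s)=O(m)$, and the per-occurrence overhead becomes $O(1+\log\log_w(n/r)/s)=O(1)$, giving locate time $O(m+occ)$.

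I do not expect a genuine obstacle here: the theorem is essentially an assembly of the two lemmas. The only points that deserve care are (i) verifying that the toehold produced by Lemma~\ref{lem:find1star}, which is expressed in terms of $T^*$ and $\SA^*$, is correctly converted to the $T$-coordinate required by Lemma~\ref{lemma: general locate} via Observation~\ref{obs:corrsuffix}, and (ii) checking, as sketched above, that all logarithmic terms absorb correctly into either the additive $O(m)$ term or the per-occurrence cost, so that the final bounds simplify to $O(m)$ and $O(m+occ)$ in the stated regime.
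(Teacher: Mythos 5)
Your proposal is correct and takes essentially the same route as the paper: Theorem~\ref{thm:optimal} is obtained there precisely by combining Lemma~\ref{lem:find1star} (counting plus a toehold in $\SA[sp,ep]$, converted to $T$-coordinates via Observation~\ref{obs:corrsuffix}) with Lemma~\ref{lemma: general locate} instantiated with the same $s$, extending the toehold left and right by chunks of $s$ cells at cost $O(s+\log\log_w(n/r))$ per chunk. Your bookkeeping of how the per-chunk predecessor cost amortizes is, if anything, slightly more explicit than the paper's two-sentence argument.
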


\subsection{RAM-optimal counting and locating}

In order to obtain RAM-optimal time, that is, replacing $m$ by 
$\lceil m \log(\sigma)/w \rceil$ in the counting and locating times, we
can simply use Theorem~\ref{thm:optimal} with $s = (w/\log\sigma) \cdot
\log\log_w(\sigma+n/r) = w \log_\sigma \log_w (\sigma+n/r)$. There is,
however, a remaining $O(m)$ time coming from traversing the trie in order
to obtain the mapped alphabet symbols of $P^*$, $P_L$, and $P_R$.

We then replace our trie by a more sophisticated structure, which is described
by Navarro and Nekrich \cite[Sec.~2]{NN17}, built on the $O(rs)$ distinct
strings of length $s$. Let $d=\lfloor w/\log\sigma\rfloor$. The structure is
like our compact trie but it also stores,
at selected nodes, perfect hash tables that allow 
descending by $d$ symbols in $O(1)$ time. This is sufficient to find the locus 
of a string of length $s$ in $O(\lceil s/d\rceil) = 
O(\lceil s\log(\sigma)/w\rceil)$ time, except for the last $s \!\!\mod d$ 
symbols. For those, the structure also stores weak prefix search (wps) 
structures \cite{BBPV18} on the selected nodes, which allow descending by up 
to $d-1$ symbols in constant time.

The wps structures, however, may fail if the string has no locus, so we must
include a verification step. Such verification is done in RAM-optimal time by
storing the strings of length $2s-1$ extracted around sampled text positions
in packed form, in our memory area associated with the edges. The space of
the data structure is $O(1)$ words per compact trie node, so in our case it is 
$O(rs)$. We then map $P^*$, $P_L$, and $P_R$, in time 
$O(\lceil m\log(\sigma)/w\rceil)$.

\begin{theorem} \label{thm:optimal packed}
We can store a text $T[1..n]$, over alphabet $[1..\sigma]$, in
$O(r w \log_\sigma \log_w(\sigma+n/r))$ words, where $r$ is the number of runs 
in the BWT of $T$, such that later, given a pattern $P[1..m]$, we can count the
occurrences of $P$ in $T$ in $O(\lceil m\log(\sigma)/w \rceil)$ time and
(after counting) report their $occ$ locations in overall time $O(occ)$. 
\end{theorem}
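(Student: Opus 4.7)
The plan is to instantiate Theorem~\ref{thm:optimal} with the chunk size $s = w \log_\sigma \log_w(\sigma+n/r)$ suggested just before the statement. Under this choice, the space becomes $O(rs) = O(r w \log_\sigma \log_w(\sigma+n/r))$ as claimed. The per-occurrence locate cost of Theorem~\ref{thm:optimal}, namely $O((1+\log\log_w(n/r)/s)\cdot occ)$, collapses to $O(occ)$ because $s$ already dominates $\log\log_w(n/r)$. Likewise, the backward-search contribution $(m/s)\log\log_w(\sigma+n/r)$ simplifies to $O(m\log(\sigma)/w)$, which is absorbed by $O(\lceil m\log(\sigma)/w\rceil)$.

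The main obstacle is the remaining additive $O(m)$ term in counting, which in Theorem~\ref{thm:optimal} came from traversing the compact trie of Section~\ref{sec:mapping} character by character to map $P^*$, $L_P$, and $R_P$ to the alphabet $[1..\sigma^*]$. To reach the packed bound I would replace that trie by the Navarro--Nekrich packed trie over the $O(rs)$ distinct length-$s$ substrings of $T$, whose count is controlled by Lemma~\ref{lemma:distinct kmers}. At each selected node, the packed trie stores a perfect hash table keyed by blocks of $d = \lfloor w/\log\sigma\rfloor$ packed symbols, so that descending by a full block is $O(1)$, and it stores a weak prefix search structure handling the residual up to $d-1$ symbols in constant time.

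Weak prefix search is not faithful: it can return a spurious locus when the queried string does not actually occur in the trie, so the delicate step will be the verification. I would handle it exactly as in Navarro and Nekrich: store, in the memory area pointed to by each edge label, the packed length-$(2s-1)$ windows of $T$ centred at sampled positions. By Lemma~\ref{lemma:primocc} every distinct length-$s$ substring of $T$ has a primary occurrence, so such a window exists for every genuine leaf, and a single packed comparison of $O(\lceil s\log(\sigma)/w\rceil)$ words certifies or rejects each candidate. The extra space is $O(1)$ words per trie node, hence $O(rs)$ overall. Mapping $P^*$, $L_P$, and $R_P$ then takes $O(\lceil m\log(\sigma)/w\rceil)$ time, and combining this with the count and locate costs above yields the theorem.
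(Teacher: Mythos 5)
Your proposal matches the paper's proof essentially step for step: instantiate Theorem~\ref{thm:optimal} with $s = w\log_\sigma\log_w(\sigma+n/r)$, and eliminate the residual $O(m)$ mapping cost by replacing the compact trie with the Navarro--Nekrich packed trie (block-wise perfect hashing plus weak prefix search, verified against the packed length-$(2s-1)$ windows around sampled positions, $O(1)$ words per node). The argument is correct and takes the same approach as the paper.
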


% !TEX root = paper.tex

\section{Accessing the Text, the Suffix Array, and Related Structures} \label{sec:sa}

In this section we show how we can provide direct access to the text $T$, 
the suffix array $\SA$, its inverse $\ISA$, and the longest common prefix 
array $\LCP$. The latter
enable functionalities that go beyond the basic counting, locating, and
extracting that are required for self-indexes, and will be used to 
enable a full-fledged compressed suffix tree in Section~\ref{sec:stree}.

%One solution for accessing $T$ is to store a grammar-compressed 
%version of it \cite{BLRSRW15}, though the smallest grammar is NP-hard to find 
%\cite{CLLPPSS05} and its size $g$ is only known to be $O(r\log^2(n/r))$
%\cite{GNPlatin18}. It is also possible to build a grammar that allows extended
%rules of the form $X \rightarrow Y^t$, of size $O(r\log(n/r))$
%and then implement access on top of it; see Section~\ref{sec:rlaccess}. We can
%obtain time $O(\log(n/r)+\ell)$ on this structure, which is likely to be
%improvable to $O(\log(n/r) + \ell/\log_\sigma n)$ \cite{BPT15}. Another choice 
%is to use 
%block trees \cite{BGGKOPT15}, which require $O(z\log(n/z))$ space, where $z$ 
%is the size of the Lempel-Ziv parse \cite{LZ76} of $T$. Yet, $z$ can be larger 
%or smaller than $r$ \cite{Pre16}.

We introduce a representation of $T$ that uses $O(r\log(n/r))$ space 
and can retrieve any substring of length $\ell$ in time 
$O(\log(n/r)+\ell\log(\sigma)/w)$. The second term is optimal in the packed
setting and, as explained in the Introduction, the $O(\log(n/r))$ additive 
penalty is also near-optimal in general. 

For the other arrays, we exploit the fact that the runs that appear in the 
$\BWT$ of $T$  induce equal substrings
in the {\em differential} suffix array, its inverse, and longest common
prefix arrays, $\DSA$, $\DISA$, and $\DLCP$, where we store the difference
between each cell and the previous one. Therefore, all the solutions will be
variants of the one that extracts substrings of $T$.
Their extraction time will be $O(\log(n/r)+\ell)$.

\subsection{Accessing $T$} \label{sec:extract}

Our structure to extract substrings of $T$ is a variant of Block Trees
\cite{BGGKOPT15} built around Lemma~\ref{lemma:primocc}.

\begin{theorem}\label{thm:extract}
Let $T[1..n]$ be a text over alphabet $[1..\sigma]$.	
We can store a data structure of $O(r\log(n/r))$ words supporting the
extraction of any length-$\ell$ substring of $T$ in 
$O(\log(n/r)+\ell\log(\sigma)/w)$ time. 
\end{theorem}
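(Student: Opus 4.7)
The plan is to build a Block Tree variant on $T$ directly exploiting Lemma~\ref{lemma:primocc}, following the scheme of Belazzougui et al.~\cite{BGGKOPT15} but using sampled positions as anchors instead of Lempel-Ziv phrase boundaries. I would partition $T$ into $\tau$-ary levels of shrinking block size: the top level has blocks of length $B_0 = \Theta(n/r)$ (so at most $O(r)$ blocks), each subsequent level divides a block into $\tau = O(1)$ children of equal size, and the bottom level~$L$ is chosen so that $B_L = \Theta(w/\log\sigma)$, giving $L = O(\log(n/r))$ levels. At every level, I would call a block \emph{marked} either if it contains at least one sampled character (Definition~\ref{def: sampled position}) or if it is adjacent to such a block. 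Since there are only $O(r)$ sampled positions globally, at most $O(r)$ blocks are marked per level. Only marked blocks are refined at the next level; the bottom-level marked blocks are stored explicitly in packed form.

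For each unmarked block $B$ of size $B_t$, Lemma~\ref{lemma:primocc} guarantees an occurrence of $B$'s content in $T$ that contains a sampled character. Because $|B| = B_t$, this primary occurrence covers at most two consecutive blocks at level $t$: one of them necessarily contains the sampled character (so it is marked as type~1) and the other, being adjacent to it, is marked as type~2. Hence I can store for $B$ a constant-size pointer to a pair of consecutive marked blocks at level $t$ plus an offset within them. The key structural invariant is that every pointer lands in marked blocks, which is what rules out pointer chains inside a level. Summing $O(r)$ marked blocks and $O(r)$ pointer-carrying leaves per level over $L = O(\log(n/r))$ levels, the total space is $O(r\log(n/r))$ words, as required; the packed storage at the bottom also fits in this budget since it uses $O(r)$ words in total.

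To extract $T[i..i+\ell-1]$, I would first descend the tree to the bottom-level block containing position $i$: at each level, if the current block is marked I move into the appropriate child at the next level; if it is unmarked I follow its pointer, arriving in two consecutive marked blocks at the same level (so the character of interest is in one of them and I descend from there). By the invariant, one pointer hop per level suffices, so this phase costs $O(\log(n/r))$. Once at the bottom level, reading proceeds linearly by moving along the bottom-level blocks; each such block has $\Theta(w/\log\sigma)$ characters packed into $O(1)$ machine words, and transitions between bottom-level blocks of distinct parents are handled by walking in the top-level sequence in $O(1)$ amortized time. The total reading cost is thus $O(\ell\log(\sigma)/w)$, giving overall time $O(\log(n/r)+\ell\log(\sigma)/w)$. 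The main subtlety to watch is precisely the pointer-chaining issue: ensuring by the type-1/type-2 marking that every pointer target already consists of marked blocks, so that navigation at each level is truly $O(1)$ and no bookkeeping is needed to amortize intra-level hops.
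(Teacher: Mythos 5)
Your data structure is sound and is essentially the paper's construction in the standard Block-Tree formulation: the marking argument (the primary occurrence of an unmarked block spans two consecutive blocks, one containing a sampled character and the other adjacent to it, so both are marked) is exactly the right use of Lemma~\ref{lemma:primocc}, and the $O(r\log(n/r))$ space count is correct. The paper organizes the levels slightly differently --- it places fixed-size blocks and overlapping half-blocks directly around each sampled position rather than refining marked blocks top-down --- but that difference is cosmetic. The gap is in the query algorithm for long extractions.

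Your plan is to descend once to the bottom-level block containing position $i$ and then ``read linearly along the bottom-level blocks.'' No such linear walk exists. Only descendants of marked blocks are materialized; a text position lying inside an unmarked block at level $t$ has no representative below level $t$ in that block's subtree, and its symbol can only be obtained by following the pointer and descending again from the target. Consequently, consecutive text positions $j$ and $j+1$ may be stored in completely unrelated parts of the explicit bottom-level storage, and there is no $O(1)$-amortized way to step from one to the next. Nor can you rescue the bound by extracting word-sized chunks with one descent each: with bottom blocks of size $\Theta(w/\log\sigma)$, each descent costs $O(\log(n/r))$ and yields only $O(1)$ words, giving $O(\ell\log(\sigma)\log(n/r)/w)$ for the scanning term --- a $\log(n/r)$ factor too much. (A genuinely recursive range extraction also fails here, because each pointer translation destroys block alignment and can double the number of partially-covered blocks per level.) The fix, which is what the paper does, is to stop the recursion at blocks of size $\Theta(\alpha)$ with $\alpha = w\log(n/r)/\log\sigma$, store those explicitly in packed form (still $O(r)\cdot O(\log(n/r))$ words), and extract $T[i..i+\ell-1]$ in $\lceil\ell/\alpha\rceil$ chunks of $\alpha$ symbols, each via its own $O(\log(n/r))$-time descent; then every descent pays for $\Theta(\log(n/r))$ output words and the total is $O(\log(n/r)+\ell\log(\sigma)/w)$. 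To make each such descent branch-free you also need the chunk to lie inside a \emph{single} block at every level, which the paper guarantees with three additional overlapping half-blocks per sampled position (so any window of length at most $s_l/4$ is contained in some half-block); your scheme, where a chunk may straddle two consecutive blocks whose pointers lead to different places, needs an analogous overlap provision.
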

\begin{proof}
We describe a data structure supporting the extraction of $\alpha = \frac{w\log(n/r)}{\log\sigma}$ packed characters in $O(\log(n/r))$ time. To extract a text substring of length $\ell$ we divide it into $\lceil\ell/\alpha\rceil$ blocks and extract each block with the proposed data structure. Overall, this will take $O((1+\ell/\alpha)\log(n/r)) = O(\log(n/r) + \ell\log(\sigma)/w)$ time. 

Our data structure is stored in $O(\log(n/r))$ levels. For simplicity, we assume that $r$ divides $n$ and that $n/r$ is a power of two. 
The top level (level 0) is special: we divide the text into $r$ blocks $T[1..n/r],T[n/r+1..2n/r],\dots,T[n-n/r+1..n]$ of size $n/r$. For levels $l>0$, we let $s_l = n/(r\cdot 2^{l-1})$ and, for every sampled position $i$, we consider the two non-overlapping blocks of length $s_l$: $X_{l,i}^1 = T[i-s_l..i-1]$ and  $X_{l,i}^2 = T[i..i+s_l-1]$. 
Each such block $X^k_{l,i}$, for $k=1,2$, is composed of two half-blocks, $X^k_{l,i} = X^k_{l,i}[1..s_l/2]\,X^k_{l,i}[s_l/2+1..s_l]$. 
We moreover consider three additional consecutive and non-overlapping half-blocks, starting in the middle of the first, $X^1_{l,i}[1..s_l/2]$, and ending in the middle of the last, $X^2_{l,i}[s_l/2+1..s_l]$, of the 4 half-blocks just described: $T[i-s_l+s_l/4..i-s_l/4-1],\ T[i-s_l/4..i+s_l/4-1]$, and $T[i+s_l/4..i+s_l-s_l/4-1]$.

From Lemma~\ref{lemma:primocc}, blocks at level $l=0$ and each half-block at level $l> 0$ have a primary occurrence covered by blocks at level $l+1$. Such an occurrence can be fully identified by the coordinate $\langle i', \mathit{off}\rangle$, where $i'$ is a sampled position (actually we
store a pointer $\mathit{ptr}$ to the data structure associated with sampled
position $i'$), and $0< \mathit{off} \leq s_{l+1}$ indicates that the occurrence starts at position $i'-s_{l+1}+ \mathit{off}$ of $T$.

Let $l^*$ be the smallest number such that $s_{l^*} < 4\alpha = 
\frac{4w\log(n/r)}{\log\sigma}$. Then $l^*$ is the last level of our structure.
At this level, we explicitly store a packed string with the characters of the blocks. This uses in total $O(r \cdot s_{l^*}\log(\sigma)/w) = O(r\log(n/r))$ words of space. 

All the blocks at level 0 and half-block at levels $0<l<l^*$ store instead the coordinates $\langle i',\mathit{off}\rangle$ of their primary occurrence in the next level. At level $l^*-1$, these coordinates point inside the strings of explicitly stored characters. These pointers also add up to 
$O(r \cdot l^*) = O(r\log(n/r))$ words of space.

Let $S= T[j..j+\alpha-1]$ be the text substring to be extracted. Note that we can assume $n/r \geq \alpha$; otherwise all the text can be stored in plain packed form using $n\log(\sigma)/w < \alpha r\log(\sigma)/w = O(r\log(n/r))$ words and we do not need any data structure. It follows that $S$ either spans two blocks at level 0, or it is contained in a single block. The former case can be solved with two queries of the latter, so we assume, without losing generality, that $S$ is fully contained inside a block at level $0$. To retrieve $S$, we map it down to the next levels (using the stored coordinates of primary occurrences of half-blocks) as a contiguous text substring as long as this is possible, that is, as long as it fits inside a single half-block. 
Note that, thanks to the way half-blocks overlap, this is always possible as long as $\alpha \leq s_l/4$. By definition, then, we arrive in this way precisely at level $l^*$, where characters are stored explicitly and we can return the packed text substring. 
Figure~\ref{fig:extract} illustrates the data structure.
\qed
\end{proof}

\begin{figure}[t]
\begin{center}
\includegraphics[width=9cm]{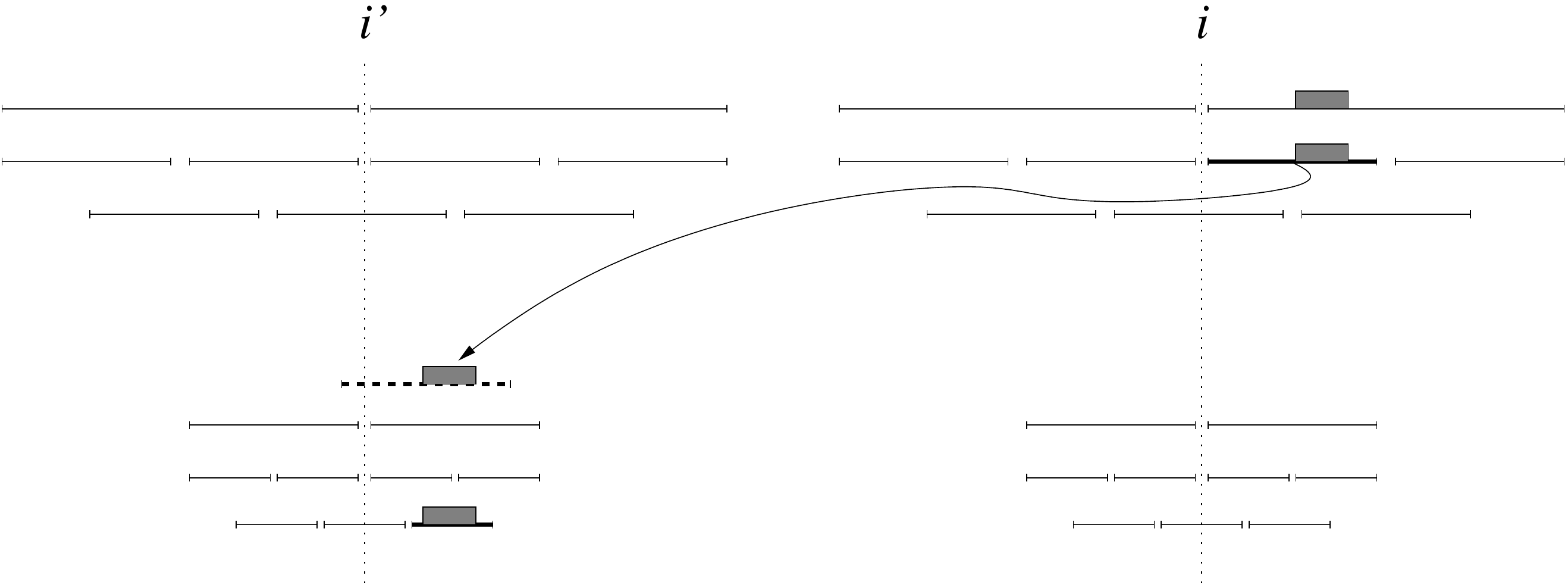}
\end{center}
\caption{Illustration of the proof of Theorem~\ref{thm:extract}. Extracting
the grayed square, we have arrived at a block around sampled position $i$ in
level $l$. Due to its size, the square must be contained in a half-block. This 
half-block (in thick line) has a copy crossing a sampled position $i'$ (we 
show this copy with a dashed line). Thus the extraction task is translated to 
level $l+1$, inside another block of half the length. Since the square is 
still small enough, it must fall inside some half-block of level $l+1$ (also
in thick line). This continues until the last level,
where the symbols are stored directly.}
\label{fig:extract}
\end{figure}

\subsection{Accessing $\SA$} \label{sec:dsa}

Let us define the differential suffix array $\DSA[p] = \SA[p]-\SA[p-1]$ for 
all $p>1$, and $\DSA[1]=\SA[1]$. The next lemmas show that the runs of $\BWT$
induce analogous repeated substrings in $\DSA$.

\begin{lemma} \label{lem:dsa}
Let $[p-1,p]$ be within a $\BWT$ run.
Then $\LF(p-1)=\LF(p)-1$ and $\DSA[\LF(p)]=\DSA[p]$.
\end{lemma}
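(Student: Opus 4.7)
The plan is a direct computation using the standard $\LF$-mapping identities. Since $[p-1,p]$ lies within a $\BWT$ run, I would first set $c = \BWT[p-1] = \BWT[p]$ and invoke the formula $\LF(q) = C[q] + \rank_{\BWT[q]}(\BWT,q)$ from Section~\ref{sec:bwt}. Because $\BWT[p] = c$, the partial rank of $c$ at position $p-1$ is exactly one less than at position $p$, giving
\[
\LF(p-1) \;=\; C[c] + \rank_c(\BWT, p-1) \;=\; C[c] + \rank_c(\BWT, p) - 1 \;=\; \LF(p) - 1.
\]
This establishes the first equality.

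For the second equality, I would use the fundamental property $\SA[\LF(q)] = \SA[q] - 1$, which was recalled in Section~\ref{sec:fmindex}. Applying it at both $q = p$ and $q = p-1$, and then using the identity $\LF(p) - 1 = \LF(p-1)$ just proved, gives
\[
\DSA[\LF(p)] \;=\; \SA[\LF(p)] - \SA[\LF(p)-1] \;=\; \SA[\LF(p)] - \SA[\LF(p-1)] \;=\; (\SA[p]-1) - (\SA[p-1]-1) \;=\; \DSA[p].
\]

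There is no real obstacle here: the lemma is essentially a two-line consequence of the definition of $\LF$ and the fact that $\LF$ virtually advances the text position backward by one. The only mild subtlety is making sure the indexing into $\DSA$ is well-defined, i.e.\ that $\LF(p) \ge 2$ so $\SA[\LF(p)-1]$ exists; this is immediate because $\LF(p-1) = \LF(p)-1 \ge 1$ whenever $p-1 \ge 1$, which holds since we assume the interval $[p-1,p]$ is meaningful. Hence the lemma follows.
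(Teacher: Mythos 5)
Your proof is correct and follows essentially the same route as the paper's: both derive $\LF(p-1)=\LF(p)-1$ from the $\LF$ formula using $\BWT[p-1]=\BWT[p]$, and then apply $\SA[\LF(q)]=\SA[q]-1$ at $q=p$ and $q=p-1$ to conclude $\DSA[\LF(p)]=\DSA[p]$. (The only nit is the typo $C[q]$ for $C[\BWT[q]]$ in your prose, which your displayed equation already corrects.)
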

\begin{proof}
Since $p$ is not the first position in a $\BWT$ run, it holds that
$\BWT[p-1] = \BWT[p]$, and thus $\LF(p-1)=\LF(p)-1$ follows from the 
formula of $\LF$. Therefore, if $q=\LF(p)$, we have $\SA[q]=\SA[p]-1$ and 
$\SA[q-1]=\SA[\LF(p-1)]= \SA[p-1]-1$; therefore $\DSA[q]=\DSA[p]$. 
%See the 
%bottom of Figure~\ref{fig:phi} (other parts are used in the next sections).
\qed
\end{proof}

%\begin{figure}[t]
%\centerline{\includegraphics[width=8cm]{phi.pdf}}
%\caption{Illustration of Lemmas~\ref{lem:phi} and \ref{lem:disa}. Segments
%represent phrases in $\ISA$ and runs in $\SA$.}
%\label{fig:phi}
%\end{figure}

\begin{lemma} \label{lem:dsa2}
Let $[p-1..p+s]$ be within a $\BWT$ run, for some $1 < p \le n$ and
$0 \le s \le n-p$. Then there exists $q \not= p$ such that $\DSA[q..q+s] =
\DSA[p..p+s]$ and $[q-1..q+s]$ contains the first position of a $\BWT$ run.
\end{lemma}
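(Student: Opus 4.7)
The plan is to iterate the $\LF$ mapping starting from $p$ and use Lemma~\ref{lem:dsa} to propagate $\DSA$-equality, halting as soon as the image range contains a $\BWT$ run-start. Formally, let $q_k = \LF^k(p)$ for $k \ge 0$, and take $q = q_{k^*}$ where $k^*$ is the smallest $k \ge 1$ such that $[q_k - 1..q_k + s]$ contains the first position of a $\BWT$ run.

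For the $\DSA$-propagation step, whenever $[q_{k-1}-1..q_{k-1}+s]$ lies entirely within a single $\BWT$ run, every adjacent pair $[q_{k-1}+j-1,\,q_{k-1}+j]$ with $0 \le j \le s$ is within a run, so Lemma~\ref{lem:dsa} applies to each such pair. This yields both that $\LF(q_{k-1}-1), \LF(q_{k-1}), \ldots, \LF(q_{k-1}+s)$ are consecutive integers forming the contiguous range $[q_k-1..q_k+s]$, and that $\DSA[q_k+j] = \DSA[q_{k-1}+j]$ for every $j$. The invariant ``$[q_{k-1}-1..q_{k-1}+s]$ is within a single run'' holds at $k=1$ by the hypothesis on $p$; for $k=2,\ldots,k^*$ it holds because $k^*$ was chosen minimal, so the earlier ranges contain no run start and are therefore within a single run. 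Iterating the step thus yields $\DSA[q_{k^*}..q_{k^*}+s] = \DSA[p..p+s]$, and the bounds $q_k \ge 2$ and $q_k+s \le n$ are preserved throughout because $\LF$ maps $[1..n]$ into itself.

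For termination and the fact that $q \ne p$, I use the structural fact noted in Section~\ref{sec:bwt} that $\LF$ is a single permutation cycle of length $n$; hence $q_0, q_1, \dots, q_{n-1}$ is a permutation of $[1..n]$. Position $1$ is always the first position of the first $\BWT$ run, and it appears as some $q_{k'}$ with $k' \in [1..n-1]$; for that $k'$ the range $[q_{k'}-1..q_{k'}+s]$ necessarily contains position $1$, a run start. Therefore $k^*$ exists and satisfies $1 \le k^* \le n-1$, which gives $q = \LF^{k^*}(p) \ne p$ since $\LF$'s cycle has length exactly $n$. The main (mild) obstacle is the boundary case $q_k = 1$, where $q_k - 1 = 0$ is not a valid $\BWT$ index: but precisely in this case position $1$ itself lies in the range and serves as the required run start, so no special treatment is needed.
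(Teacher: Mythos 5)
Your proposal is correct and follows essentially the same route as the paper's proof: iterate $\LF$ from $p$, use Lemma~\ref{lem:dsa} on each adjacent pair to propagate the $\DSA$-equality and the contiguity of the image range, and stop at the first iterate whose range $[q_k-1..q_k+s]$ contains a run start. The only difference is cosmetic: where the paper argues termination loosely (the single $\LF$-cycle visits all ranges before returning to $[p..p+s]$, and $r>0$), you pin down a concrete witness --- the orbit must reach position $1$, which is always the first position of a run --- which also cleanly yields $q\neq p$ and handles the $q_k=1$ boundary case.
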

\begin{proof}
By Lemma~\ref{lem:dsa}, it holds that $\DSA[p'..p'+s] = \DSA[p..p+s]$, where
$p' = \LF(p)$. If $\DSA[p'-1..p'+s]$ contains the first position of a
$\BWT$ run, we
are done. Otherwise, we apply Lemma~\ref{lem:dsa} again on $[p'..p'+s]$, and
repeat until we find a range that contains the first position of a run. This
search eventually terminates because there are $r>0$ run beginnings, there are
only $n-s+1$ distinct ranges,
and the sequence of visited ranges, $[\LF^k(p)..\LF^k(p)+s]$, forms a single
cycle; recall Section~\ref{sec:bwt}. Therefore our search will visit all the 
existing ranges before returning to $[p..p+s]$.
\qed
\end{proof}

This means that there exist $2r$ positions in $\DSA$, namely those $[q,q+1]$
where $\BWT[q]$ is the first position of a run, such that any substring
$\DSA[p..p+s]$ has a copy covering some of those $2r$ positions. Note that this 
is the same property of Lemma~\ref{lemma:primocc}, which
enabled efficient access and fingerprinting on $T$. We now exploit it to
access cells in $\SA$ by building a similar structure on $\DSA$.

\begin{theorem} \label{thm:dsa}
Let the $\BWT$ of a text $T[1..n]$ contain $r$ runs. Then
there exists a data structure using $O(r \log(n/r))$ words that
can retrieve any $\ell$ consecutive values of its suffix array $\SA$ in time 
$O(\log(n/r)+\ell)$.
\end{theorem}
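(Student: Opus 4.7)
The plan is to build a Block Tree on $\DSA$ analogous to the one of Theorem~\ref{thm:extract}, exploiting the analogue of the primary-occurrence property that Lemmas~\ref{lem:dsa} and \ref{lem:dsa2} establish for $\DSA$. The $2r$ distinguished positions in $\DSA$ (the pairs $[q,q+1]$ for each $\BWT$-run start $q$) play the role of the $2r$ sampled text positions used in Lemma~\ref{lemma:primocc}: every contiguous $\DSA$ range has a copy covering at least one of them. Using these as the sampled positions for the Block Tree construction yields $O(r)$ blocks per level and $O(\log(n/r))$ levels, for $O(r\log(n/r))$ space in total. Since $\DSA$ entries are integers of one word each (they cannot be packed to $\log\sigma$ bits as ordinary text symbols), the leaves store $\Theta(\log(n/r))$ $\DSA$ values explicitly, preserving the $O(r\log(n/r))$ space bound. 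The resulting structure extracts any $\ell$ consecutive $\DSA$ values in $O(\log(n/r)+\ell)$ time, just as in Theorem~\ref{thm:extract}.

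To turn access to $\DSA$ into access to $\SA$, I would augment the tree with prefix-sum information. At each of the $r$ level-$0$ blocks I store the $\SA$ value at its start ($O(r)$ words), and at each (half-)block throughout the tree I store the sum of $\DSA$ values it contains ($O(r)$ words per level, hence $O(r\log(n/r))$ in total). Then $\SA[p]$ can be computed in $O(\log(n/r))$ time as follows: locate the level-$0$ block $[b..b+n/r-1]$ containing $p$, initialize an accumulator to $\SA[b]$, and descend the Block Tree toward the leaf containing $p$. At each level, identify the half-block containing the current target position, add the sums of any preceding sibling half-blocks to the accumulator, and jump through the primary-occurrence pointer to the next level; at the leaf add the remaining $\DSA$ values directly. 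Since primary-occurrence relationships equate $\DSA$ values (and hence their sums) between the original location and its copy, the contributions collected at each level correspond exactly to $\sum_{i=b+1}^{p}\DSA[i]$, and the final value in the accumulator is $\SA[p]$.

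With $\SA[p]$ in hand, I would then extract $\DSA[p+1..p+\ell-1]$ from the Block Tree in $O(\log(n/r)+\ell)$ time and output $\SA[p+j]=\SA[p]+\sum_{k=1}^{j}\DSA[p+k]$ by a single left-to-right scan in $O(\ell)$ time. The case in which the requested range $[p..p+\ell-1]$ straddles two consecutive level-$0$ blocks is handled by splitting the query in two, as in Theorem~\ref{thm:extract}. The total cost is $O(\log(n/r)+\ell)$, and the total space is $O(r\log(n/r))$ words, as claimed.

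The main technical obstacle will be verifying that the prefix-sum accumulation composes correctly with the overlapping half-block layout used in the Block Tree of Theorem~\ref{thm:extract}: at each level the target position must fall inside a single half-block (possibly one of the three shifted ones), and the ``preceding sibling half-blocks'' whose sums feed the accumulator must be defined unambiguously relative to the block currently being descended. Because the overlap structure is designed precisely to guarantee that a short contiguous range fits inside one half-block at every level, and because the $\DSA$-sum over any range equals the sum over its primary occurrence, the remaining argument reduces to bookkeeping.
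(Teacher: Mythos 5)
Your proposal is essentially the paper's own proof: a Block Tree on $\DSA$ whose sampled positions are the pairs $[q,q+1]$ at $\BWT$-run starts (justified by Lemmas~\ref{lem:dsa} and \ref{lem:dsa2}), augmented with partial sums so that absolute $\SA$ values can be recovered while descending, and with the final $\ell$ values obtained by a prefix-sum scan of the extracted $\DSA$ range. The one bookkeeping point you flag but leave open --- that the copy targeted by a pointer generally starts in the \emph{middle} of the destination area, so adding the sums of ``preceding sibling half-blocks'' alone does not yield the correct prefix sum --- is resolved in the paper by storing with each pointer $\langle q^*,\mathit{off}\rangle$ the extra correction $\delta=\SA[q^*-s_{l+1}]-\SA[q^*-s_{l+1}+\mathit{off}-1]$, one word per pointer, which keeps the total space at $O(r\log(n/r))$.
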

\begin{proof}
We describe a data structure supporting the extraction of $\alpha = \log(n/r)$ consecutive 
cells in $O(\log(n/r))$ time. To extract $\ell$ consecutive cells
of $\SA$, we divide it into $\lceil\ell/\alpha\rceil$ blocks and extract each 
block independently. This yields the promised time complexity.

Our structure is stored in $O(\log(n/r))$ levels. As before, let us assume
that $r$ divides $n$ and that $n/r$ is a power of two.
At the top level ($l=0$), we divide $\DSA$ into $r$ blocks 
$\DSA[1..n/r],\DSA[n/r+1..2n/r],\dots,\DSA[n-n/r+1..n]$ of size $n/r$. For 
levels $l>0$, we let $s_l = n/(r\cdot 2^{l-1})$ and, for every position $q$ 
that starts a run in $\BWT$, we consider the two non-overlapping blocks of 
length $s_l$: $X_{l,q}^1 = \DSA[q-s_l+1..q]$ and
$X_{l,q}^2=\DSA[q+1..q+s_l]$.%
\footnote{Note that this symmetrically covers both positions $q$ and $q+1$; in 
Theorem~\ref{thm:extract}, one extra unnecessary position is covered with
$X_{l,q}^1$, for simplicity.}
Each such block $X^k_{l,q}$, for $k=1,2$, is composed of two half-blocks, 
$X^k_{l,q} = X^k_{l,q}[1..s_l/2]\,X^k_{l,q}[s_l/2+1..s_l]$.
We moreover consider three additional consecutive and non-overlapping 
half-blocks, starting in the middle of the first, $X^1_{l,q}[1..s_l/2]$, and 
ending in the middle of the last, $X^2_{l,q}[s_l/2+1..s_l]$, of the 4 
half-blocks just described: $\DSA[q-s_l+s_l/4+1..q-s_l/4],\ 
\DSA[q-s_l/4+1..q+s_l/4]$, and $\DSA[q+s_l/4+1..q+s_l-s_l/4]$.

From Lemma~\ref{lem:dsa2}, blocks at level $l=0$ and each half-block at level 
$l> 0$ have an occurrence covered by blocks at level $l+1$. Let the half-block
$X$ of level $l$ (blocks at level $0$ are analogous) have an occurrence 
containing position $q^* \in \{ q,q+1\}$, where $q$ starts a run in
$\BWT$. Then we store the pointer $\langle q^*,\mathit{off},\delta\rangle$ 
associated with $X$, where $0< \mathit{off} \leq s_{l+1}$ indicates that the 
occurrence of $X$ starts at position $q^*-s_{l+1}+ \mathit{off}$ of $\DSA$, and
$\delta = \SA[q^*-s_{l+1}] - \SA[q^*-s_{l+1}+ \mathit{off}-1]$. (We also store
the pointer to the data structure of the half-block of level $l+1$ containing
the position $q^*$.)

Additionally, every level-0 block $X'=\DSA[q'+1..q'+s_l]$ stores the 
value $S(X') = \SA[q']$ (assume $\SA[0]=0$ throughout), and every 
half-block $X' = \DSA[q'+1..q'+s_{l+1}/2]$ corresponding to the area 
$X_{l+1,q}^1 X_{l+1,q}^2 = \DSA[q-s_{l+1}+1..q+s_{l+1}]$ stores the value 
$\Delta(X') = \SA[q'] - \SA[q-s_{l+1}]$.

Let $l^*$ be the smallest number such that $s_{l^*} < 4\alpha = 4\log(n/r)$.
Then $l^*$ is the last level of our structure. At this level, we explicitly 
store the sequence of $\DSA$ cells of the areas $X_{l^*,q}^1 X_{l^*,q}^2$,
for each $q$ starting a run in $\BWT$. This uses in total 
$O(r \cdot s_{l^*}) = O(r\log(n/r))$ words of space. The pointers stored for
the $O(r)$ blocks at previous levels also add up to $O(r\log(n/r))$ words.

Let $S= \SA[p..p+\alpha-1]$ be the sequence of cells to be extracted. This range
either spans two blocks at level 0, or it is contained in a single block. In
the former case, we decompose it into two queries that are fully contained 
inside a block at level $0$. To retrieve a range contained in a single block
or half-block, we map it down to the next levels using the pointers from blocks
and half-blocks, as a contiguous sequence as long as it fits inside a single 
half-block. This is always possible as long as $\alpha\leq s_l/4$.
By definition, then, we arrive in this way precisely to level $l^*$, where 
the symbols of $\DSA$ are stored explicitly and we can return the sequence.

We need, however, the contents of $\SA[p..p+\alpha-1]$, not of
$\DSA[p..p+\alpha-1]$. To obtain the former from the latter, we need only the 
value of $\SA[p]$. During the traversal, we will maintain a value $f$ with the 
invariant that, whenever the original position $\DSA[p]$ has been mapped to a
position $X[p']$ in the current block $X$, then it holds that $\SA[p] = f +
X[1] + \ldots + X[p']$. This invariant must be maintained when we use
pointers, where the original $\DSA$ values in a block $X$ are obtained 
from a copy that appears somewhere else in $\DSA$.

The invariant is initially valid by setting $f$ to the $S(X)$ value 
associated with the level-0 block $X$ that contains $\SA[p]$. When we follow 
a pointer $\langle q,\mathit{offs},\delta \rangle$ and choose $X'$ from the 7
half-blocks that cover the target, we update $f \leftarrow f+\delta+\Delta(X')$.
When we arrive at a block $X$ at level $l^*$, we scan $O(\alpha)$ symbols 
until reaching the first value of the desired position $X[p']$. The values 
$X[1],\ldots,X[p']$ scanned are also summed to $f$. 
At the end, we have that $\SA[p]=f$.
See Figure~\ref{fig:sarray}.
\qed
\end{proof}

\begin{figure}[t]
\begin{center}
\includegraphics[width=8cm]{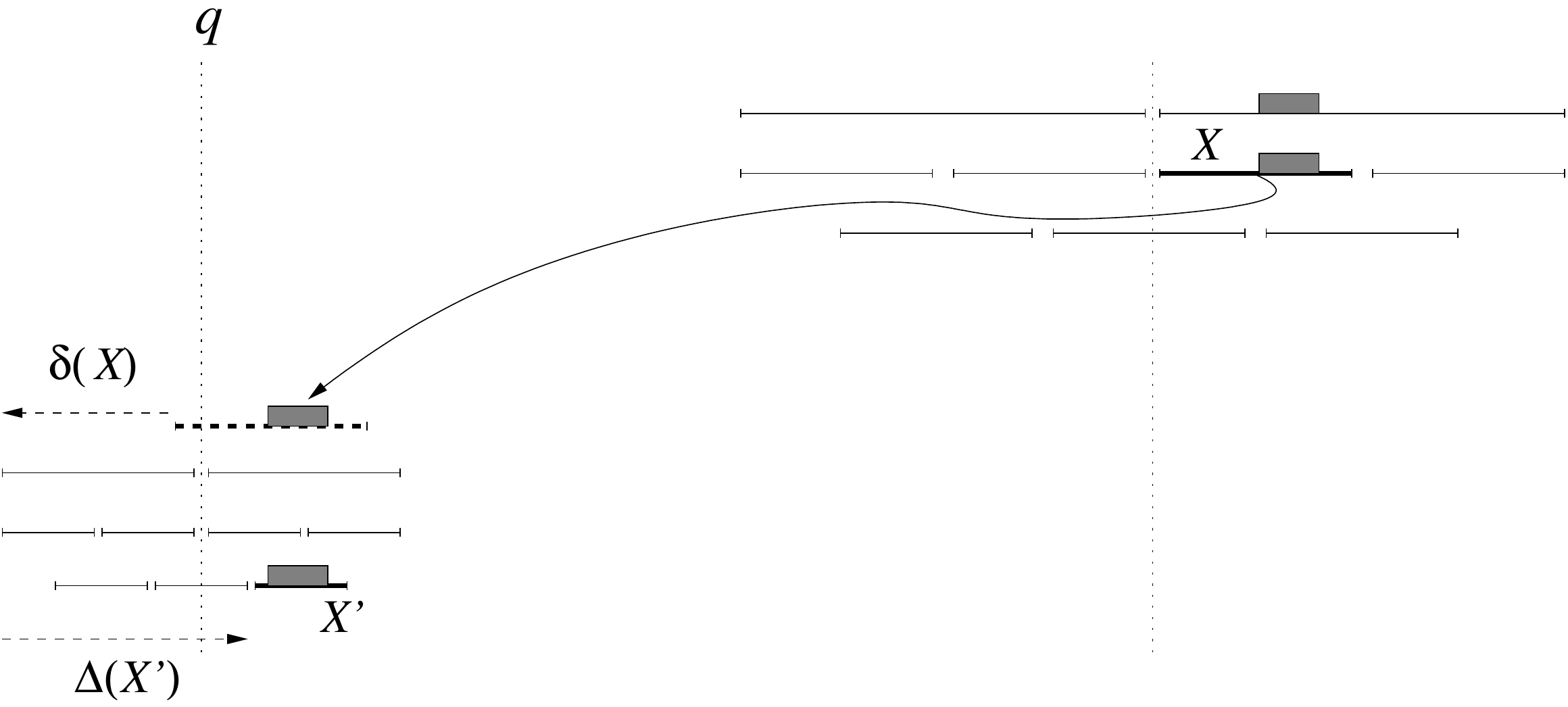}
\end{center}
\caption{Illustration of Theorem~\ref{thm:dsa}. The area to extract (a gray
square) is inside the thick half-block ($X$), which points inside another area
around position $q$ in the next level. The sum of $\DSA$ over the offset
from the beginning of the area to the mapped block (in thick dashed line)
is stored at $\delta(X)$, in negative (hence the direction of the arrow).
The squared area is mapped to a smaller half-block, $X'$, which records in
$\Delta(X')$ the sum of $\DSA$ between the beginning of the area and $X'$
(see the other dashed arrow). By adding $\delta(X)+\Delta(X')$, we map from
the first thick block to the second.}
\label{fig:sarray}
\end{figure}

\subsection{Accessing $\ISA$ and $\LCP$} \label{sec:isa}

A similar method can be used to access inverse suffix array cells, $\ISA[i]$.
Let us define $\DISA[i] = \ISA[i]-\ISA[i-1]$ for all $i>1$, and 
$\DISA[1]=\ISA[1]$. The role of the runs in $\SA$ will now be played by the
phrases in $\ISA$, which will be defined analogously as in the proof of
Lemma~\ref{lem:find_neighbours}: Phrases in $\ISA$ start at the positions
$\SA[p]$ such that a new run starts in $\BWT[p]$ (here, last positions of runs 
do not start phrases). Instead
of $\LF$, we use the cycle $\phi(i)$ of Definition~\ref{def:phi}.
We make use of the following lemmas.

\begin{lemma} \label{lem:disa}
Let $[i-1..i]$ be within a phrase of $\ISA$. Then it holds that 
$\phi(i-1)=\phi(i)-1$ and $\DISA[i]=\DISA[\phi(i)]$.
\end{lemma}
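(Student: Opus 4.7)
The plan is to reduce the claim to Lemma~\ref{lem:dsa} via the bijection between text positions and $\SA$ positions. Set $p = \ISA[i]$ so that $i = \SA[p]$. By the definition of $\ISA$ phrases, which start at $\SA[q]$ precisely when $\BWT[q]$ begins a new run, the hypothesis that $[i-1..i]$ lies within a single phrase is equivalent to $p$ not being a $\BWT$ run start; that is, $[p-1..p]$ lies within a $\BWT$ run. Lemma~\ref{lem:dsa} then supplies the single identity $\LF(p-1)=\LF(p)-1$, which drives the entire argument.

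Next I would unfold the definitions using the two basic identities $\LF(\ISA[j])=\ISA[j-1]$ (for $j>1$) and $\phi(j)=\SA[\ISA[j]-1]$. This yields $\phi(i)=\SA[p-1]$ and $\phi(i-1)=\SA[\LF(p)-1]$. To prove $\phi(i-1)=\phi(i)-1$, I would apply $\ISA$ to both candidate values and check equality: $\ISA[\phi(i)-1]=\ISA[\SA[p-1]-1]=\LF(p-1)$ and $\ISA[\phi(i-1)]=\LF(p)-1$. Lemma~\ref{lem:dsa} makes these equal, and injectivity of $\ISA$ lifts the equality back to $\phi(i-1)=\phi(i)-1$.

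With that identity in hand, the second claim is a short calculation: $\DISA[i]=\ISA[i]-\ISA[i-1]=p-\LF(p)$, while $\DISA[\phi(i)]=\ISA[\phi(i)]-\ISA[\phi(i)-1]=(p-1)-\LF(p-1)=(p-1)-(\LF(p)-1)=p-\LF(p)$, so the two values coincide.

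The main thing to watch is boundary behaviour, specifically the alternative branch $\phi(j)=\SA[n]$ when $\ISA[j]=1$ and the domain restriction on $\LF(\ISA[\cdot])=\ISA[\cdot-1]$. However, $\ISA[i]=1$ would make $p=1$ the start of the first $\BWT$ run and hence $i=\SA[1]$ a phrase start, contradicting the hypothesis; a symmetric remark disposes of the case $\SA[p-1]=1$. Apart from verifying these edge cases I do not anticipate any real obstacle: the proof is essentially a translation of Lemma~\ref{lem:dsa} across the bijection $\SA\leftrightarrow\ISA$ together with a definition-chase through $\phi$.
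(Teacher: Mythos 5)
Your proof is correct and follows essentially the same route as the paper's: translate the phrase condition on $[i-1..i]$ into the run condition on $[\ISA[i]-1..\ISA[i]]$, use $\LF(p-1)=\LF(p)-1$ from Lemma~\ref{lem:dsa}, and chase the definitions of $\phi$ and $\DISA$ (the paper computes $\phi(i-1)=\SA[\LF(p-1)]=\SA[p-1]-1$ directly rather than applying $\ISA$ to both sides, but this is the same calculation). Your explicit treatment of the boundary cases $\ISA[i]=1$ and $\SA[p-1]=1$ is a small bonus the paper leaves implicit.
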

\begin{proof}
Consider the pair of positions $T[i-1..i]$ within a phrase. Let them be pointed 
from $\SA[p]=i$ and $\SA[q]=i-1$, therefore $\ISA[i] = p$, $\ISA[i-1]=q$,
and $\LF(p)=q$. Now, since $i$ is not a phrase 
beginning, $p$ is not the first position in a $\BWT$ run. Therefore,
$\BWT[p-1]=\BWT[p]$, from which
it follows that $\LF(p-1)=\LF(p)-1=q-1$. Now let $\SA[p-1]=j$, that is, 
$j=\phi(i)$. Then $\phi(i-1)=\SA[\ISA[i-1]-1]=\SA[q-1]=\SA[\LF(p-1)]=\SA[p-1]-1
=j-1=\phi(i)-1$.
It also follows that $\DISA[i]=p-q=\DISA[j]=
\DISA[\phi(i)]$. 
\qed
\end{proof}

\begin{lemma} \label{lem:disa2}
Let $[i-1..i+s]$ be within a phrase of $\DISA$, for some $1 < i \le n$ and
$0 \le s \le n-i$. Then there exists $j \not= i$ such that $\DISA[j..j+s] =
\DISA[i..i+s]$ and $[j-1..j+s]$ contains the first position of a phrase.
\end{lemma}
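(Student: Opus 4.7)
The plan is to mirror the proof of Lemma~\ref{lem:dsa2}, with $\phi$ playing the role of $\LF$ and phrases of $\ISA$ replacing $\BWT$ runs. The underlying structure is identical: Lemma~\ref{lem:disa} is the exact analog of Lemma~\ref{lem:dsa}, and $\phi$ is a single cycle on $[1..n]$ just as $\LF$ is on $\BWT$ positions (in fact $\phi$ is conjugate to $\LF$ under the bijection $i\leftrightarrow\ISA[i]$, so $\phi^k(i) = \SA[((\ISA[i]-k-1)\bmod n)+1]$).

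First, I would apply Lemma~\ref{lem:disa} to each of the $s+1$ consecutive pairs $[i+k-1..i+k]$, for $k=0,\ldots,s$, all of which lie within a single phrase by hypothesis. This yields $\phi(i+k)=\phi(i)+k$ for $k=-1,0,\ldots,s$ and $\DISA[i+k]=\DISA[\phi(i)+k]$ for $k=0,\ldots,s$. Setting $j_1=\phi(i)$, we obtain $\DISA[j_1..j_1+s]=\DISA[i..i+s]$, and $j_1\neq i$ because $\SA$ is a permutation and $\phi(i)=\SA[\ISA[i]-1]\neq\SA[\ISA[i]]=i$.

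Second, if $[j_1-1..j_1+s]$ contains the first position of some phrase of $\ISA$, we are done with $j=j_1$. Otherwise $[j_1-1..j_1+s]$ is itself within a phrase, and the same reasoning applies to produce $j_2=\phi(j_1)$ with $\DISA[j_2..j_2+s]=\DISA[j_1..j_1+s]=\DISA[i..i+s]$. Iterating, the candidate after $k$ steps is $j_k=\phi^k(i)$, and by induction each $[j_k..j_k+s]$ carries the same sequence of $\DISA$ values as $[i..i+s]$ as long as no range $[j_{k'-1}-1..j_{k'-1}+s]$ in the process has already contained a phrase beginning.

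Termination follows because $\phi$ forms a single cycle of length $n$, so the successive ranges $[\phi^k(i)-1..\phi^k(i)+s]$ visit every text position before returning to $[i-1..i+s]$; since there are $r>0$ phrase beginnings in $\ISA$, some intermediate range must contain one, yielding the required $j$. The main (minor) obstacle is justifying that $\phi$ is a single cycle, which I would dispatch in one line by observing that $\LF$ is (as recalled in Section~\ref{sec:bwt}) and that $\phi$ is its conjugate via $\ISA$; once this is granted, the rest is a verbatim adaptation of the earlier proof for $\DSA$.
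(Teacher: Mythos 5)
Your proposal is correct and matches the paper's own proof: both iterate Lemma~\ref{lem:disa} via $\phi$, stopping at the first range $[\phi^k(i)-1..\phi^k(i)+s]$ that contains a phrase beginning, and both justify termination by the fact that $\phi$ is a single-cycle permutation (which the paper inherits from $\LF$ being a single cycle, exactly as in your conjugation remark). Your write-up merely spells out the per-position application of Lemma~\ref{lem:disa} and the $j\neq i$ check, which the paper leaves implicit.
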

\begin{proof}
By Lemma~\ref{lem:disa}, it holds that $\DISA[i'..i'+s] = \DISA[i..i+s]$, where
$i' = \phi(i)$. If $\DISA[i'-1..i'+s]$ contains the first position of a
phrase, we
are done. Otherwise, we apply Lemma~\ref{lem:disa} again on $[i'..i'+s]$, and
repeat until we find a range that contains the first position of a phrase. 
Just as in Lemma~\ref{lem:dsa}, this search eventually terminates because 
$\phi$ is a permutation with a single cycle.
\qed
\end{proof}

We can then use on $\DISA$ exactly the same data structure we defined to 
access $\SA$ in Theorem~\ref{thm:dsa}, and obtain a similar result for $\ISA$.

\begin{theorem} \label{thm:disa}
Let the $\BWT$ of a text $T[1..n]$ contain $r$ runs. Then there exists a data 
structure using $O(r \log(n/r))$ words that can retrieve any $\ell$ 
consecutive values of its inverse suffix array $\ISA$ in time 
$O(\log(n/r)+\ell)$.
\end{theorem}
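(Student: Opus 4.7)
The plan is to mirror Theorem~\ref{thm:dsa} step by step, with $\DISA$ playing the role of $\DSA$, phrase starts of $\ISA$ playing the role of $\BWT$-run starts, and Lemmas~\ref{lem:disa}--\ref{lem:disa2} replacing Lemmas~\ref{lem:dsa}--\ref{lem:dsa2}. The crucial structural fact we will rely on is Lemma~\ref{lem:disa2}: every range $\DISA[i-1..i+s]$ that lies inside a single phrase has another occurrence whose window $[j-1..j+s]$ straddles a phrase boundary. Since there is one phrase per $\BWT$ run, there are exactly $r$ such phrase boundaries in $\ISA$, i.e., the same $O(r)$ budget of ``anchors'' exploited in Theorem~\ref{thm:dsa}.

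I would therefore build the same $O(\log(n/r))$-level block/half-block hierarchy as in Theorem~\ref{thm:dsa}, but on $\DISA$. At the top level, partition $\DISA$ into $r$ blocks of size $n/r$; at each deeper level $l>0$ with $s_l = n/(r\cdot 2^{l-1})$, set up, around each phrase-start position $j$ of $\ISA$, the pair of blocks $X_{l,j}^1,X_{l,j}^2$ and the three additional overlapping half-blocks, exactly as before. By Lemma~\ref{lem:disa2}, each half-block at level $l$ has an occurrence fully covered by two consecutive same-level blocks around some phrase start, so it can be summarized by a pointer $\langle j,\mathit{off},\delta\rangle$ into level $l+1$, where $\delta$ is the difference between the $\ISA$ value at the start of the covering level-$(l+1)$ area and the $\ISA$ value at the start of the half-block's image. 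At the deepest level $l^*$ (chosen so that $s_{l^*} < 4\alpha$ for $\alpha=\log(n/r)$), the $\DISA$ entries are stored explicitly. As in Theorem~\ref{thm:dsa}, top-level blocks cache the absolute $\ISA$ value at their left boundary and half-blocks of level $l+1$ cache the $\Delta$ offset to the left boundary of the enclosing level-$l$ area; these are what let us maintain the running-sum invariant $\ISA[i]=f+X[1]+\cdots+X[p']$ as we descend.

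Extraction of $\ISA[i..i+\ell-1]$ then proceeds verbatim as in Theorem~\ref{thm:dsa}: split the requested range into at most $\lceil \ell/\alpha\rceil$ sub-ranges of length $\alpha$; each either fits inside a single top-level block or splits into two that do; chase pointers down through half-blocks until reaching level $l^*$, updating $f \leftarrow f + \delta + \Delta(X')$ at each step; and finally scan the $O(\alpha)$ explicitly stored $\DISA$ values, summing them into $f$, to recover the requested $\ISA$ entries. The space analysis is identical to Theorem~\ref{thm:dsa}: $O(r\cdot s_{l^*}) = O(r\log(n/r))$ words for the level-$l^*$ symbols plus $O(r)$ pointers per level across $O(\log(n/r))$ levels, so $O(r\log(n/r))$ words in total. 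The running time is $O(\log(n/r))$ per $\alpha$-chunk plus $O(\alpha)$ symbols scanned at the bottom, yielding $O(\log(n/r)+\ell)$ overall.

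I do not foresee a serious obstacle; the main technical care is in setting up the cached quantities ($S(X)$ at the top, $\Delta(X')$ at half-blocks, and $\delta$ in pointers) so that the running-sum invariant for converting $\DISA$ into $\ISA$ is preserved under the use of $\phi$-induced copies rather than $\LF$-induced ones. Since Lemmas~\ref{lem:disa}--\ref{lem:disa2} guarantee exactly the same structural equality of $\DISA$ segments as Lemmas~\ref{lem:dsa}--\ref{lem:dsa2} did for $\DSA$, and $\phi$ is a single-cycle permutation just as $\LF$ is, the bookkeeping is literally identical and the proof reduces to invoking Theorem~\ref{thm:dsa}'s construction with the obvious substitutions.
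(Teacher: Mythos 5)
Your proposal is correct and follows exactly the paper's route: the paper proves Lemmas~\ref{lem:disa} and \ref{lem:disa2} as the $\phi$-analogues of Lemmas~\ref{lem:dsa} and \ref{lem:dsa2}, and then obtains Theorem~\ref{thm:disa} by reusing the block/half-block structure of Theorem~\ref{thm:dsa} verbatim on $\DISA$ with phrase starts as anchors. Your added detail about preserving the running-sum invariant under $\phi$-induced copies is consistent with what the paper leaves implicit.
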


Finally, by combining Theorem~\ref{thm:dsa} and Lemma~\ref{lemma: lcp}, we
also obtain access to array $\LCP$ without knowing the corresponding text
positions.

\begin{theorem} \label{thm:dlcp}
Let the $\BWT$ of a text $T[1..n]$ contain $r$ runs. Then there exists a data 
structure using $O(r \log(n/r))$ words that can retrieve any $\ell$ 
consecutive values of its longest common prefix array $\LCP$ in time 
$O(\log(n/r)+\ell)$.
\end{theorem}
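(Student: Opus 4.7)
The plan is to combine Theorem~\ref{thm:dsa}, which gives access to consecutive cells of $\SA$ in time $O(\log(n/r)+\ell)$ within $O(r\log(n/r))$ space, with Lemma~\ref{lemma: lcp}, which, with parameter $s$, uses $O(rs)$ space to recover up to $2s$ consecutive $\LCP$ entries around a position $p$ in time $O(\log\log_w(n/r)+s)$, provided that $\SA[p]$ is known. Since Theorem~\ref{thm:dsa} supplies exactly the ``toehold'' values $\SA[p]$ that Lemma~\ref{lemma: lcp} requires, the obstacle is reduced to choosing $s$ so that both building blocks fit inside the target space $O(r\log(n/r))$ and cooperate to achieve the target time.

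The natural choice is $s=\log(n/r)$. I would build, simultaneously, the structure of Theorem~\ref{thm:dsa} on $\DSA$ and the structure of Lemma~\ref{lemma: lcp} with this value of $s$. Both take $O(r\log(n/r))$ words, so the overall space bound holds. To answer a query asking for $\LCP[p..p+\ell-1]$, I would slice the target range into $\lceil \ell/s\rceil+1$ consecutive blocks of length (at most) $s$, anchored at positions $p,\,p+s,\,p+2s,\,\dots$. For each anchor $p+is$, the algorithm first invokes Theorem~\ref{thm:dsa} to obtain $\SA[p+is]$ in time $O(\log(n/r))$, and then invokes Lemma~\ref{lemma: lcp} with $s'=s$ to extract the $s$ entries $\LCP[p+is+1..p+(i+1)s]$ (and, for the first block, also $\LCP[p-s+1..p]$, from which the algorithm keeps only the tail needed to fill the query range) in time $O(\log\log_w(n/r)+s)=O(\log(n/r))$.

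The total running time is then
\[
O\!\left(\bigl(\lceil \ell/s\rceil+1\bigr)\cdot\bigl(\log(n/r)+s\bigr)\right)
\;=\;O\!\left(\log(n/r)+\ell\right),
\]
using $s=\log(n/r)$. Since both the $\SA$ toeholds and the subsequent $\LCP$ windows are obtained independently for each block, there is no hidden synchronisation cost. The main subtlety, which I would handle explicitly, is purely bookkeeping: aligning the $\LCP$ windows emitted by Lemma~\ref{lemma: lcp} (which returns entries on both sides of the anchor $p+is$) with the requested range $[p..p+\ell-1]$, and handling the boundary blocks that may be shorter than $s$ by invoking Lemma~\ref{lemma: lcp} with a value $s'<s$. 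No genuinely new combinatorial argument is needed beyond the composition of the two cited results.
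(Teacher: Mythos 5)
Your proposal is correct and matches the paper's proof essentially verbatim: the paper also combines Theorem~\ref{thm:dsa} with Lemma~\ref{lemma: lcp} instantiated at $s=\log(n/r)$, retrieves each chunk of $s$ cells by first computing the toehold $\SA[p]$ in $O(\log(n/r))$ time and then applying the lemma, and sums over $\lceil\ell/s\rceil$ chunks to get $O(\log(n/r)+\ell)$. The alignment bookkeeping you flag is handled implicitly in the paper and poses no difficulty.
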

\begin{proof}
Build the structure of Theorem~\ref{thm:dsa}, as well as the one of
Lemma~\ref{lemma: lcp} with $s=\log(n/r)$. Then, to retrieve $\LCP[p..p+s'-1]$
for any $0 \le s' \le s$, we first compute $\SA[p]$ in time $O(\log(n/r))$ 
using Theorem~\ref{thm:dsa} and then, given $\SA[p]$, we compute 
$\LCP[p..p+s'-1]$ using Lemma~\ref{lemma: lcp} in time $O(\log\log_w(n/r)+s')$.
Adding both times gives $O(\log(n/r))$.

To retrieve an arbitrary sequence of cells $\LCP[p..p+\ell-1]$, we use the
method above by chunks of $s$ cells, plus a possibly smaller final chunk.
As we use $\lceil \ell/s \rceil$ chunks, the total time is $O(\log(n/r)+\ell)$.
\qed
\end{proof}

\subsection{Optimal counting and locating in $O(r\log(n/r))$ space}

The $O(r\log(n/r))$ space we need for accessing $T$ is not comparable 
with the $O(r\log\log_w(\sigma+n/r))$ space we need for optimal counting and 
locating. The latter is in general more attractive, because the former is
better whenever $r = \omega (n/\log_w \sigma)$, which means that the text
is not very compressible. Anyway, we show how to obtain optimal counting and
locating within space $O(r\log(n/r))$.

By the discussion above, we only have to care about the case $r \ge n/\log n$.
In such a case, it holds that $r\log(n/r) \ge (n \log\log n)/\log n$,%
\footnote{Since $r\log(n/r)$ grows with $r$ up to $r=n/e$, we obtain the lower
bound by evaluating it at $r = n/\log n$.} and thus we are allowed to use
$\Theta(n\log\log n)$ bits of space.
We can then make use of a result of Belazzougui and Navarro
\cite[Lem.~6]{BN13}. They show how we can enrich the $O(n)$-bit compressed
suffix tree of Sadakane \cite{Sad07} so that, using $O(n(\log t_\SA +
\log\log\sigma))$ bits, one can find the interval $\SA[sp..ep]$ of $P$ in time
$O(m+t_\SA)$ plus the time to extract a substring of length $m$ from $T$.%
\footnote{The $O(n\log\log\sigma)$ bits of the space are not explicit in their
lemma, but are required in their Section 5, which is used to prove their Lemma 6.}
Since we provide $t_\SA=O(\log(n/r))$ in Theorem~\ref{thm:dsa} and extraction
time $O(\log(n/r)+m\log(\sigma)/w)$ in Theorem~\ref{thm:extract}, this
arrangement uses $O(n(\log\log(n/r)+\log\log\sigma)) \subseteq O(n\log\log n)$ bits,
and it supports counting in time $O(m+\log(n/r))$. 

Once we know the interval, apart from counting, we can use 
Theorem~\ref{thm:dsa} to obtain $\SA[p]$ for any $sp \le p \le ep$ in time
$O(\log(n/r))$, and then use the structure of Lemma~\ref{lemma: general locate}
with $s=\log(n/r)$ to extract packs of $s' \le s$ consecutive $\SA$ entries in 
time $O(\log\log_w(n/r) + s') \subseteq O(\log(n/r)+s)$. Overall, we can locate
the $occ$ occurrences of $P$ in time $O(m+\log(n/r)+occ)$.

Finally, to remove the $O(\log(n/r))$ term in the times, we must
speed up the searches for patterns shorter than $\log(n/r)$. We index them using a compact trie as that of Section~\ref{sec:mapping}. We store in each explicit trie node (i) the number of occurrences of the corresponding string, to support counting, and (ii) a position $p$ where it occurs in $\SA$, the value $\SA[p]$, and the result of the predecessor queries on $P^+$ and $P^-$, as required for locating in Lemma~\ref{lemma: general locate}, so that we can retrieve any number $s' \le s$ of consecutive entries of $\SA$ in time $O(s')$. By Lemma~\ref{lemma:distinct kmers}, the size of the trie and of the text substrings explicitly stored to support path compression is $O(r\log(n/r))$. 

\begin{theorem} \label{thm:optimal 2}
We can store a text $T[1..n]$, over alphabet $[1..\sigma]$, in
$O(r \log(n/r))$ words, where $r$ is the number of runs
in the BWT of $T$, such that later, given a pattern $P[1..m]$, we can count the
occurrences of $P$ in $T$ in $O(m)$ time and
(after counting) report their $occ$ locations in overall time $O(occ)$.
\end{theorem}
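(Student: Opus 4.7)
The plan is to split the argument by the value of $r$. If $r < n/\log n$, Theorem~\ref{thm:optimal} already gives optimal counting and locating in $O(r\log\log_w(\sigma+n/r)) \subseteq O(r\log(n/r))$ space, so there is nothing to do. Hence I will assume $r \ge n/\log n$, in which case $r\log(n/r) = \Omega(n\log\log(n)/\log n)$ words; equivalently, we can afford $\Theta(n\log\log n)$ bits of redundancy without exceeding the claimed bound. This is precisely the regime where the Sadakane compressed suffix tree based augmentation of Belazzougui and Navarro~\cite[Lem.~6]{BN13} becomes affordable.

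Next I would plug our access primitives into that black box. Their lemma states that, using $O(n(\log t_\SA + \log\log\sigma))$ bits on top of a mechanism providing $\SA$ access in time $t_\SA$ and substring extraction in $T$, one can locate the interval $\SA[sp..ep]$ for a pattern $P[1..m]$ in $O(m + t_\SA)$ time plus the cost of extracting one substring of length $m$ from $T$. Using Theorem~\ref{thm:dsa} we have $t_\SA = O(\log(n/r))$, and Theorem~\ref{thm:extract} extracts a length-$m$ substring in $O(\log(n/r) + m\log(\sigma)/w)$ time. The redundancy becomes $O(n(\log\log(n/r) + \log\log\sigma)) = O(n\log\log n)$ bits, which fits, and counting runs in $O(m + \log(n/r))$ time. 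For locating, I would extract any one cell $\SA[p]$ in $O(\log(n/r))$ time via Theorem~\ref{thm:dsa}, then use Lemma~\ref{lemma: general locate} with parameter $s = \log(n/r)$ to enumerate the remaining $occ-1$ cells in blocks of $s$, each block costing $O(\log\log_w(n/r) + s) = O(\log(n/r))$; amortizing over the $\lceil occ/s \rceil$ blocks yields $O(occ)$ total after an additive $O(\log(n/r))$. So far we have counting in $O(m + \log(n/r))$ and locating in $O(m + \log(n/r) + occ)$ time.

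The remaining step, and the only real obstacle, is shaving off the $O(\log(n/r))$ additive term for patterns shorter than $\log(n/r)$. For these short patterns I would precompute a compact trie indexing all distinct $T$-substrings of length up to $\log(n/r)$, with perfect hashing at each node so that descending by $P$ takes $O(m)$ time. At each explicit node I store (i) the number of occurrences of its string (for counting) and (ii) one $\SA$ position $p$ where its string occurs, the value $\SA[p]$, and the outcomes of the two predecessor searches on $P^+, P^-$ required by Lemma~\ref{lemma: general locate}; this removes both the $O(\log(n/r))$ cost of invoking Theorem~\ref{thm:dsa} and the predecessor cost inside Lemma~\ref{lemma: general locate}, so any $s' \le s$ consecutive $\SA$ entries can then be produced in $O(s')$ time. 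By Lemma~\ref{lemma:distinct kmers}, the number of such substrings is $O(r\log(n/r))$, and storing the edge labels (in packed form, pointing into text samples around sampled positions, as in Section~\ref{sec:mapping}) keeps the trie size within $O(r\log(n/r))$ words.

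The main subtlety to verify is that the two regimes dovetail correctly: for $m \ge \log(n/r)$ we pay $O(m + \log(n/r)) = O(m)$ via Belazzougui–Navarro plus Theorem~\ref{thm:dsa}, while for $m < \log(n/r)$ the trie handles both counting and the seed of the locate procedure in $O(m)$, after which Lemma~\ref{lemma: general locate} enumerates occurrences in $O(occ)$ without any further additive penalty. Putting these together delivers $O(m)$ counting and $O(m+occ)$ locating within $O(r\log(n/r))$ space, matching the claim.
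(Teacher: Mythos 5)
Your proposal is correct and follows essentially the same route as the paper's proof: dismiss the case $r < n/\log n$ via Theorem~\ref{thm:optimal}, use the Belazzougui--Navarro enrichment of Sadakane's suffix tree (affordable since $\Theta(n\log\log n)$ bits fit in $O(r\log(n/r))$ words when $r\ge n/\log n$) together with Theorems~\ref{thm:dsa} and~\ref{thm:extract} for counting, Lemma~\ref{lemma: general locate} with $s=\log(n/r)$ for locating, and a compact trie over the $O(r\log(n/r))$ distinct short substrings (Lemma~\ref{lemma:distinct kmers}) storing the precomputed toehold and predecessor answers to remove the additive $O(\log(n/r))$ for short patterns. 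No gaps; this matches the paper's argument step for step.
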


\section{A Run-Length Compressed Suffix Tree} \label{sec:stree}

In this section we show how to implement a compressed suffix tree within
$O(r\log(n/r))$ words, which solves a large set of navigation operations
in time $O(\log(n/r))$. The only exceptions are going to a child by some letter
and performing level ancestor queries, which may cost as much as 
$O(\log(n/r)\log n)$. The first compressed suffix tree for repetitive 
collections was built on runs \cite{MNSV09}, but just like the self-index, it 
needed $\Theta(n/s)$ space to obtain $O(s\log n)$ time in key
operations like accessing $\SA$. Other compressed suffix trees for repetitive 
collections appeared later \cite{ACN13,NO16,FGNPS17}, but they do not offer 
formal space guarantees (see later). A recent one, instead,
uses $O(\overline{e})$ words and supports a number of operations in time 
typically $O(\log n)$ \cite{BC17}. The two space measures are not comparable.

\subsection{Compressed Suffix Trees without Storing the Tree}

Fischer et al.~\cite{FMN09} showed that a rather complete suffix tree
functionality including all the operations in Table~\ref{tab:stree}
can be efficiently supported by a representation where suffix tree nodes $v$
are identified with the suffix array intervals $\SA[v_l..v_r]$ they cover. 
Their representation builds on the following primitives:
\begin{enumerate}
\item Access to arrays $\SA$ and $\ISA$, in time we call $t_\SA$. 
\item Access to array $\LCP$, in time we call $t_\LCP$.
\item Three special queries on $\LCP$:
\begin{enumerate}
	\item Range Minimum Query, $$\RMQ(i,j)=\arg\min_{i\le k\le j} \LCP[k],$$
	choosing the leftmost one upon ties, in time we call $t_\RMQ$.
	\item Previous/Next Smaller Value queries,
\begin{eqnarray*}
\PSV(p)&=&\max (\{q<p,\LCP[q]<\LCP[p]\} \cup \{0\}), \\
\NSV(p)&=&\min (\{q>p,\LCP[q]<\LCP[p]\} \cup \{n+1\}),
\end{eqnarray*}
		in time we call $t_\SV$.
\end{enumerate}
\end{enumerate}

\begin{table}[t]
\begin{center}
\begin{tabular}{ll}
Operation & Description \\
\hline
{\em Root}() & Suffix tree root.\\
{\em Locate}($v$) & Text position $i$ of leaf $v$.\\
{\em Ancestor}($v,w$) & Whether $v$ is an ancestor of $w$.\\
{\em SDepth}($v$) & String depth for internal nodes, i.e., length of string represented by $v$. \\
{\em TDepth}($v$) & Tree depth, i.e., depth of tree node $v$. \\
{\em Count}($v$) & Number of leaves in the subtree of $v$. \\
{\em Parent}($v$) & Parent of $v$. \\
{\em FChild}($v$) & First child of $v$.\\
{\em NSibling}($v$) & Next sibling of $v$.\\
{\em SLink}($v$) & Suffix-link, i.e., if $v$ represents $a\cdot\alpha$ then
the node that represents $\alpha$, for $a\in [1..\sigma]$.\\
{\em WLink}($v,a$) & Weiner-link, i.e., if $v$ represents $\alpha$ then
the node that represents $a \cdot \alpha$.\\
{\em SLink}$^i$($v$) & Iterated suffix-link.\\
{\em LCA}($v,w$) & Lowest common ancestor of $v$ and $w$.\\
{\em Child}($v,a$) & Child of $v$ by letter $a$.\\
{\em Letter}($v,i$) & The $ith$ letter of the string represented by $v$.\\
{\em LAQ}$_S$($v,d$) & String level ancestor, i.e., the highest ancestor of $v$ with string-depth $\ge d$.\\
{\em LAQ}$_T$($v,d$) & Tree level ancestor, i.e., the ancestor of $v$ with tree-depth $d$.\\
\hline
\end{tabular}
\end{center}
\caption{Suffix tree operations.}
\label{tab:stree}
\end{table}

An interesting finding of Fischer et al.~\cite{FMN09} related to our results 
is that array $\PLCP$, which stores the $\LCP$ values in text order, can be 
stored in $O(r)$ words and accessed efficiently; therefore we can compute 
any $\LCP$ value in time $t_\SA$ (see also Fischer~\cite{Fis10}). 
We obtained a generalization of this property in Section~\ref{sec:lcp}.
They \cite{FMN09} also show how to represent the array $\TDE[1..n]$,
where $\TDE[i]$ is the tree-depth of the lowest common ancestor of the 
$(i-1)$th and $i$th suffix tree leaves (and $\TDE[1]=0$). Fischer et 
al.~\cite{FMN09} represent
its values in text order in an array $\PTDE$, which just like $\PLCP$ can be 
stored in $O(r)$ words and accessed efficiently, thereby giving access to $\TDE$
in time $t_\SA$. They use $\TDE$ to compute operations {\em TDepth} and 
{\em LAQ}$_T$ efficiently.

Abeliuk et al.~\cite{ACN13} show that primitives $\RMQ$, $\PSV$, and $\NSV$
can be implemented using a simplified variant of {\em range min-Max trees
(rmM-trees)} \cite{NS14}, consisting of a perfect binary tree on top of $\LCP$ 
where each node stores the minimum $\LCP$ value in its subtree. 
The three primitives are then computed in logarithmic time. They 
define the extended primitives
\begin{eqnarray*}
\PSV'(p,d)&=&\max (\{q<p,\LCP[q]<d\} \cup \{0\}), \\
\NSV'(p,d)&=&\min (\{q>p,\LCP[q]<d\} \cup \{n+1\}),
\end{eqnarray*}
and compute them in time $t_{\SV'}$, which in their setting is the same
$t_\SV$ of the basic $\PSV$ and $\NSV$ primitives. The extended primitives
are used to simplify some of the operations of Fischer et al.~\cite{FMN09}. 

The resulting time complexities are given in the second column of
Table~\ref{tab:streeops}, where $t_\LF$ is the time to compute function $\LF$ 
or its inverse, or to access a position in $\BWT$. Operation {\em WLink},
not present in Fischer et al.~\cite{FMN09}, is trivially obtained with two
$\LF$-steps. We note that most times 
appear multiplied by $t_\LCP$ in Fischer et al.~\cite{FMN09} because
their $\RMQ$, $\PSV$, and $\NSV$ structures do not store $\LCP$ values inside,
so they need to access the array all the time; this is not the case when we
use rmM-trees. The time of {\em LAQ}$_S$ is due to 
improvements obtained with the extended primitives $\PSV'$ and $\NSV'$ 
\cite{ACN13}.%
\footnote{They also use these primitives for {\em NSibling}, mentioning that
the original formula has a bug. Since we obtain better $t_\RMQ$ than
$t_{\SV'}$ time, we rather prefer to fix the original bug \cite{FMN09}.
The formula fails for the penultimate child of its parent.
To compute the next sibling of $[v_l,v_r]$ with parent $[w_l,w_r]$, the
original formula $[v_r+1,u]$ with $u=\RMQ(v_r+2,w_r)-1$ (used only if
$v_r < w_r-1$) must now be checked as follows: if $u < w_r$ and 
$\LCP[v_r+1] \not= \LCP[u+1]$, then correct it to $u=w_r$.}
The time for {\em Child}($v,a$) is obtained by binary searching 
among the $\sigma$ minima of $\LCP[v_l,v_r]$, and extracting the desired
letter (at position {\em SDepth}$(v)+1$) to compare with $a$.
% no, we know the extremes from the binary search we did on all the minima
%, and finally using $\PSV$ and $\NSV$ to obtain the range of the child. 
Each binary search operation can be done with an extended primitive
$\RMQ'(p,q,m)$ that finds the $m$th left-to-right occurrence
of the minimum in a range. This is easily done in $t_{\RMQ'} = t_\RMQ$ time on
a rmM-tree by storing, in addition, the number of times the minimum of each 
node occurs below it \cite{NS14}, but it may be not so easy to do on other
structures. Finally, the complexities of {\em TDepth} and {\em LAQ}$_T$ make 
use of array $\TDE$. While Fischer et al.~\cite{FMN09} use an $\RMQ$ operation 
to compute {\em TDepth}, we note that
{\em TDepth}$(v) = 1 + \max(\TDE[v_l],\TDE[v_r+1])$, because the suffix tree 
has no unary nodes (they used this simpler formula only for leaves).%
\footnote{We observe that {\em LAQ}$_T$ can be solved exactly as {\em LAQ}$_S$,
with the extended $\PSV'/\NSV'$ operations, now defined on the array $\TDE$ 
instead of on $\LCP$. However, an equivalent to 
Lemma~\ref{lem:dlcp} for the differential $\TDE$ array does not hold, and
therefore we cannot use that solution within the desired space bounds.}
%build st of xlabbc xlabbd xlae xaba, and translate the consecutive leaves of
% abbc,abbc,ae$ to those of $labbc,labbd,lae$. the differences do not match

\begin{table}[t]
\begin{center}
\begin{tabular}{lcc}
Operation & Generic    & Our \\
          & Complexity & Complexity \\
\hline
{\em Root}() & $1$ & $1$ \\
{\em Locate}($v$) & $t_\SA$ & $\log(n/r)$ \\
{\em Ancestor}($v,w$) & $1$ & $1$ \\
{\em SDepth}($v$) & $t_\RMQ + t_\LCP$ & $\log(n/r)$ \\
{\em TDepth}($v$) & $t_\SA$ & $\log(n/r)$ \\
{\em Count}($v$) &  $1$ & $1$ \\
{\em Parent}($v$) & $t_\LCP + t_\SV$ & $\log(n/r)$ \\
{\em FChild}($v$) & $t_\RMQ$ & $\log(n/r)$ \\
{\em NSibling}($v$) & $t_\LCP + t_\RMQ$ & $\log(n/r)$ \\
{\em SLink}($v$) & $t_\LF + t_\RMQ + t_\SV$ & $\log(n/r)$ \\
{\em WLink}($v$) & $t_\LF$ & $\log\log_w(n/r)$ \\
{\em SLink}$^i$($v$) & $t_\SA + t_\RMQ + t_\SV$ & $\log(n/r)$ \\
{\em LCA}($v,w$) & $t_\RMQ + t_\SV$ & $\log(n/r)$ \\
{\em Child}($v,a$) & $t_\LCP + (t_{\RMQ'} + t_\SA + t_\LF)\log\sigma$ & 
				$~~~~\log(n/r) \log\sigma~~~~$ \\
{\em Letter}($v,i$) & $t_\SA + t_\LF$ & $\log(n/r)$ \\
{\em LAQ}$_S$($v,d$) & $t_{\SV'}$ & $\log(n/r) + \log\log_w r$ \\
{\em LAQ}$_T$($v,d$) & $(t_\RMQ+t_\LCP)\log n$ & $\log(n/r)\log n$ \\
\hline
\end{tabular}
\end{center}
\caption{Complexities of suffix tree operations. {\em Letter}($v,i$) can also 
be solved in time $O(i\cdot t_\LF) = O(i \log\log_w(n/r))$.}
\label{tab:streeops}
\end{table}

An important idea of Abeliuk et al.~\cite{ACN13} is that they represent
$\LCP$ differentially, that is, the array $\DLCP[1..n]$, where 
$\DLCP[i]=\LCP[i]-LCP[i-1]$ if $i>1$ and $\DLCP[1]=\LCP[1]$,
using a {\em context-free grammar (CFG)}.
Further, they store the rmM-tree information in the nonterminals, that is, 
a nonterminal $X$ expanding to a substring $D$ of $\DLCP$ stores the 
(relative) minimum 
$$m(X) ~=~ \min_{0 \le k \le |D|} \sum_{i=1}^k D[i]$$ 
of any $\LCP$ segment having those differential values, and its position 
inside the segment,
$$p(X) ~=~\arg\min_{0 \le k \le |D|} \sum_{i=1}^k D[i].$$
Thus, instead of a perfect rmM-tree, they conceptually use the grammar tree
as an rmM-tree. They show how to adapt the algorithms on the perfect rmM-tree 
to run on the grammar, and thus solve primitives $\RMQ$,
$\PSV'$, and $\NSV'$, in time proportional to the grammar height.

Abeliuk et al.~\cite{ACN13}, and also Fischer et al.~\cite{FMN09}, claim that
the grammar produced by RePair \cite{LM00} is of size $O(r\log(n/r))$. This is 
an incorrect result borrowed from Gonz\'alez et al.~\cite{GN07,GNF14}, where it
was claimed for $\DSA$. The proof fails for a reason we describe in our 
technical report \cite[Sec.~A]{GNP17}.

We now start by showing how to build a grammar of size
$O(r\log(n/r))$ and height $O(\log(n/r))$ for $\DLCP$.
This grammar is of an extended type called {\em run-length context-free grammar
(RLCFG)} \cite{NIIBT15}, which allows rules of the form $X \rightarrow Y^t$
that count as size 1. 
We then show how to implement the operations $\RMQ$ and $\NSV/\PSV$ in time
$O(\log(n/r))$ on the resulting RLCFG, and $\NSV'/\PSV'$ in time
$O(\log(n/r)+\log\log_w r)$. Finally, although we cannot implement
$\RMQ'$ in time below $\Theta(\log n)$, we show how the specific {\em Child} 
operation can be implemented in time $O(\log(n/r)\log\sigma)$.

Note that, although we could represent $\DLCP$ using a Block-Tree-like 
structure as we did in Section~\ref{sec:sa} for $\DSA$ and $\DISA$, we have 
not devised a way to implement the more complex operations we need on $\DLCP$
using such a Block-Tree-like data structure within polylogarithmic time.

Using the results we obtain in this and previous sections, that is, 
$t_\SA = O(\log(n/r))$, $t_\LF = O(\log\log_w(n/r))$, 
$t_\LCP = t_\SA + O(\log\log_w(n/r)) = O(\log(n/r))$,
$t_\RMQ = t_\SV = O(\log(n/r))$, $t_{SV'} = O(\log(n/r)+\log\log_w r)$,
and our specialized algorithm for {\em Child}, we obtain our result.

\begin{theorem} \label{thm:stree}
Let the $\BWT$ of a text $T[1..n]$, over alphabet $[1..\sigma]$, contain $r$ 
runs. Then a compressed suffix
tree on $T$ can be represented using $O(r \log(n/r))$ words, and it
supports the operations with the complexities given in the third column
of Table~\ref{tab:streeops}.
\end{theorem}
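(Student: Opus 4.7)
The plan is to assemble the theorem from the pieces already in hand and to add the one genuinely new ingredient: a run-length CFG on $\DLCP$ that supports range-minimum and scanning primitives in time proportional to its height. The generic formulas in the second column of Table~\ref{tab:streeops} reduce every operation to a constant number of evaluations of $t_\SA$, $t_\LCP$, $t_\LF$, $t_\RMQ$, $t_\SV$, $t_{\SV'}$ and, for \emph{Child}, $t_{\RMQ'}$. We already have $t_\SA = O(\log(n/r))$ by Theorem~\ref{thm:dsa} (and the analogous bound for $\ISA$ by Theorem~\ref{thm:disa}), $t_\LCP = O(\log(n/r))$ by Theorem~\ref{thm:dlcp}, and $t_\LF = O(\log\log_w(n/r))$ by Lemma~\ref{lem:rlfm}. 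So what is missing is (i) the primitives on $\LCP$, (ii) a replacement for the $\RMQ'$-based \emph{Child}, and (iii) access to the tree-depth array $\TDE$ needed for \emph{TDepth} and \emph{LAQ}$_T$.

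The core step is to build a run-length CFG for $\DLCP$ of size $O(r\log(n/r))$ and height $O(\log(n/r))$. The repetitiveness of $\DLCP$ follows from the same mechanism already exploited in Lemma~\ref{lemma: lcp}: whenever $[p-1..p+s]$ sits inside a $\BWT$ run, $\LF$ shifts it to an equal-valued copy of $\DLCP[p..p+s]$, so every $\DLCP$ substring has a primary occurrence covering one of the $O(r)$ run-border positions. This is the $\DLCP$-analogue of Lemmas~\ref{lem:dsa2} and \ref{lem:disa2}, and it lets us mirror the Block-Tree-like recursion of Theorem~\ref{thm:dsa}: at each of $O(\log(n/r))$ levels, $O(r)$ sampled blocks and half-blocks are introduced as nonterminals whose right-hand sides either list their two halves (ordinary rules) or point to a primary occurrence in the next level (one nonterminal, together with a stored offset and a stored partial sum that plays the role of the $\delta/\Delta$ values in Theorem~\ref{thm:dsa}). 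Long runs of identical $\DLCP$-substrings introduced by the recursion are collapsed into run-length productions $X\to Y^t$, preserving both size and height. We then augment every nonterminal $X$ with the minimum $m(X)$ and its position $p(X)$ among the partial sums of its expansion, exactly as in Abeliuk et al.~\cite{ACN13}, so that the derivation tree serves as an rmM-tree.

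On this augmented RLCFG, the standard rmM-tree algorithms for $\RMQ$, $\PSV$, and $\NSV$ translate into top-down walks of length proportional to the grammar height, yielding $t_\RMQ = t_\SV = O(\log(n/r))$. For the decorated primitives $\PSV'/\NSV'$, the same walk works down to a run-length rule $X\to Y^t$; inside such a rule the answer is determined by the $t$ running minima of $Y$, which we locate with an $O(r)$-space $y$-fast-trie-style predecessor structure precomputed over the $2r$ run-length rules actually used, contributing the extra $O(\log\log_w r)$ term and giving $t_{\SV'} = O(\log(n/r)+\log\log_w r)$. Plugging these times into the formulas of Fischer et al.~\cite{FMN09} produces all rows of the third column except \emph{Child} and the \emph{TDepth}/\emph{LAQ}$_T$ rows. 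For \emph{TDepth} and \emph{LAQ}$_T$, we obtain access to $\TDE$ in $O(\log(n/r))$ time within $O(r\log(n/r))$ space by applying Theorem~\ref{thm:dsa} to a differential $\TDE$ (using $\phi$-runs just as Lemma~\ref{lemma: lcp} used $\LF$-runs on $\LCP$), and the generic formulas then give the claimed bounds.

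The hard part, which is why the theorem cannot be read off from Abeliuk et al.~directly, is twofold. First, one must justify that a Block-Tree-like construction over $\DLCP$ actually yields a valid RLCFG with both size $O(r\log(n/r))$ and height $O(\log(n/r))$, and that the Abeliuk et al.~rmM-tree algorithms execute over this irregular grammar without incurring an extra $\log n$ factor; this is where the run-length rules and the min/argmin augmentation must be handled with care. Second, and separately, we cannot afford the standard $\RMQ'$-based \emph{Child}, which would cost $\Theta(\log n)$ on our grammar. Instead we implement \emph{Child}$(v,a)$ by a binary search on the children of $v$: each step uses $\RMQ$ on $\LCP[v_l..v_r]$ to find the leftmost occurrence of the current minimum, splits the suffix-array interval there, and compares $a$ against the appropriate letter read via \emph{Letter} (one access to $\SA$ plus $O(1)$ $\LF$ steps). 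Each of the $O(\log\sigma)$ steps costs $O(\log(n/r))$, giving the stated $O(\log(n/r)\log\sigma)$ bound for \emph{Child} and completing the table. \qed
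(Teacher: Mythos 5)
Your overall architecture (reduce everything to the Fischer et al.\ primitives, build an RLCFG on $\DLCP$, run rmM-tree algorithms on it, special-case \emph{Child}) matches the paper, but the central step --- actually obtaining an RLCFG of size $O(r\log(n/r))$ and height $O(\log(n/r))$ for $\DLCP$ --- is where your argument has a genuine gap. You propose to ``mirror the Block-Tree-like recursion of Theorem~\ref{thm:dsa}'' and read the resulting structure as a grammar. A Block Tree is not a grammar: a pointer block does not have its own derivation; it references an occurrence that starts at an arbitrary offset and typically straddles the boundary between two blocks of the next level, and ``suffix of $Y_1$ followed by prefix of $Y_2$'' is not a CFG production. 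This is precisely why the more complex primitives ($\RMQ$, $\PSV/\NSV$ over arbitrary ranges) are problematic on Block Trees: an arbitrary query range, once mapped through a pointer, lands in a sub-range of next-level blocks whose minima are not precomputed, and the next levels do not even partition the array. The paper explicitly notes it could not make the Block-Tree route work for these operations. Its actual route is different: Lemma~\ref{lem:dlcp} (which, note, needs $[p-2,p]$ inside a run, not $[p-1,p]$ as you state, because $\DLCP[p]$ involves three consecutive suffixes) yields a \emph{bidirectional macro scheme} of size $O(r)$ on $\DLCP$, and then the black-box conversion of Lemma~\ref{lem:gnp18} (locally consistent parsing / recompression) turns any BMS of size $b$ into an RLCFG of size $O(b\log(n/b))$ with the required height and a flat initial rule $S\to X_1\cdots X_{O(r)}$. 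Without that conversion (or an equally careful substitute), your grammar simply does not exist as described, and everything downstream is unsupported.

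Two further points would fail as written. First, for \emph{TDepth} and \emph{LAQ}$_T$ you propose to apply Theorem~\ref{thm:dsa} to a differential $\TDE$; the paper explicitly observes that the analogue of Lemma~\ref{lem:dlcp} does \emph{not} hold for the differential $\TDE$ array, so the run-induced repetition you would need is absent. The correct tool is the $O(r)$-word representation of $\PTDE$ (values in text order, compressible like $\PLCP$), accessed through one $\SA$ lookup, giving $\TDE$ access in $t_\SA$ and \emph{TDepth}$(v)=1+\max(\TDE[v_l],\TDE[v_r+1])$. Second, your \emph{Child} binary search repeatedly finds the \emph{leftmost} occurrence of the minimum in a sub-range; that only peels off children left to right and gives $O(\sigma)$ iterations, not $O(\log\sigma)$. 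To genuinely binary search you must be able to jump to the $c'$-th occurrence of the minimum, which is why the paper augments every nonterminal with the count $n(X)$ of occurrences of its minimum (with the case analysis on the sign of $d(Y)$ for rules $X\to Y^t$) and uses an approximate-median-of-minima structure over the top-level sequence. Your attribution of the $O(\log\log_w r)$ term in $t_{\SV'}$ to searching inside run-length rules is also off --- those are resolved arithmetically in $O(1)$; the predecessor cost comes from the weighted-ancestor search over the top-level sequence $\DLCP'$ --- but that is a matter of detail rather than correctness of the final bound.
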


\subsection{Representing $\DLCP$ with a run-length grammar} \label{sec:dlcp}

In this section we show that the differential array $\DLCP$ can be represented
by a RLCFG of size $O(r\log(n/r))$. We first prove a lemma analogous to
those of Section~\ref{sec:sa}.

\begin{lemma} \label{lem:dlcp}
Let $[p-2,p]$ be within a $\BWT$ run. Then $\LF(p-1)=\LF(p)-1$ and 
$\DLCP[\LF(p)]=\DLCP[p]$.
\end{lemma}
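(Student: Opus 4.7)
The plan is to reduce this to Lemma~\ref{lem:dsa} together with the basic observation that prepending a common character to two suffixes increases their longest common prefix by exactly one. The hypothesis that $[p-2,p]$ lies inside a single BWT run is strictly stronger than what Lemma~\ref{lem:dsa} used: we now have $\BWT[p-2]=\BWT[p-1]=\BWT[p]=c$ for some $c$, so Lemma~\ref{lem:dsa} applies both to the pair $[p-1,p]$ and to the pair $[p-2,p-1]$. Writing $q=\LF(p)$, I first get $\LF(p-1)=q-1$ (which is the first claim) and, applying the lemma once more, $\LF(p-2)=q-2$.

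Next I translate these $\LF$ identities back to $\SA$. Because $\SA[\LF(i)]=\SA[i]-1$, I have $\SA[q]=\SA[p]-1$, $\SA[q-1]=\SA[p-1]-1$, and $\SA[q-2]=\SA[p-2]-1$. The suffixes $T[\SA[q-1]..]$ and $T[\SA[q]..]$ are therefore obtained from $T[\SA[p-1]..]$ and $T[\SA[p]..]$ by prepending the same character $c$ (since $T[\SA[p-1]-1]=\BWT[p-1]=c=\BWT[p]=T[\SA[p]-1]$), so their longest common prefix is exactly one longer:
\[
\LCP[q] ~=~ \LCP[p]+1.
\]
The same argument applied to the pair of consecutive positions $q-1,q-2$ (using $\BWT[p-1]=\BWT[p-2]=c$) yields
\[
\LCP[q-1] ~=~ \LCP[p-1]+1.
\]

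Subtracting the two equalities immediately gives
\[
\DLCP[\LF(p)] ~=~ \LCP[q]-\LCP[q-1] ~=~ \LCP[p]-\LCP[p-1] ~=~ \DLCP[p],
\]
which is the second claim. There is no real obstacle here; the only point worth being careful about is that the stronger hypothesis $[p-2,p]$ (rather than $[p-1,p]$ as in Lemma~\ref{lem:dsa}) is exactly what is needed so that the common-character prepending argument can be invoked on \emph{both} consecutive $\LCP$ entries that define $\DLCP$, and hence that the $+1$ terms cancel. The result is the $\DLCP$ analogue of Lemmas~\ref{lem:dsa} and \ref{lem:disa}, and will presumably be used in the same block-copying fashion to argue that $\DLCP$ inherits the repetitiveness of $\BWT$.
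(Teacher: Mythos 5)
Your proof is correct and follows essentially the same route as the paper's: both apply Lemma~\ref{lem:dsa} twice to obtain $\LF(p-1)=q-1$ and $\LF(p-2)=q-2$, then use the observation that prepending the common run character to each pair of consecutive suffixes increases the corresponding $\LCP$ entry by exactly one, so the increments cancel in the difference. No gaps.
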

\begin{proof}
Let $i=\SA[p]$, $j=\SA[p-1]$, and $k=\SA[p-2]$.
Then $\LCP[p]=lcp(T[i..],T[j..])$ and $\LCP[p-1]=lcp(T[j..],T[k..])$.
We know from Lemma~\ref{lem:dsa} that, if $q=\LF(p)$, then
$\LF(p-1)=q-1$ and $\LF(p-2)=q-2$. Also, $\SA[q]=i-1$, $\SA[q-1]=j-1$,
and $\SA[q-2]=k-1$.
Therefore, $\LCP[\LF(p)]=\LCP[q]=lcp(T[\SA[q]..],T[\SA[q-1]..) =
lcp(T[i-1..],T[j-1..])$. Since $p$ is not the first position in a $\BWT$
run, it holds that $T[j-1] = \BWT[p-1] = \BWT[p] = T[i-1]$,
and thus $lcp(T[i-1..],T[j-1..])=1+lcp(T[i..],T[j..])=1+\LCP[p]$.
Similarly, $\LCP[\LF(p)-1]=\LCP[q-1]=lcp(T[\SA[q-1]..],T[\SA[q-2]..) =
lcp(T[j-1..],T[k-1..])$. Since $p-1$ is not the first position in a $\BWT$
run, it holds that $T[k-1] = \BWT[p-2] = \BWT[p-1] = T[j-1]$,
and thus $lcp(T[j-1..],T[k-1..])=1+lcp(T[j..],T[k..])=1+\LCP[p-1]$.
Therefore $\DLCP[q]=\LCP[q]-\LCP[q-1]=(1+\LCP[p])-(1+\LCP[p-1])=\DLCP[p]$.
\qed
\end{proof}

It follows that, if $p_1 < \ldots < p_r$ are the positions that start runs in
$\BWT$, then we can define a {\em bidirectional macro scheme} \cite{SS82}
of size at most $4r+1$ on $\DLCP$. 

\begin{definition}
A {\em bidirectional macro scheme (BMS)} of size $b$ on a sequence $S[1..n]$ 
is a partition $S = S_1 \ldots S_b$ such that each $S_k$ is of length 1 (and
is represented as an explicit symbol) or it appears somewhere else in $S$ (and 
is represented by a pointer to that other occurrence). Let $f(i)$, for $1 \le i
\le n$, be defined arbitrarily if $S[i]$ is an explicit symbol, and $f(i)=
j+i'-1$ if $S[i] = S_k[i']$ is inside some $S_k$ that is represented as a 
pointer to $S[j..j']$. A correct BMS must hold that, for any $i$, there is
a $k \ge 0$ such that $f^k(i)$ is an explicit symbol.
\end{definition}

Note that $f(i)$ maps the position $S[i]$ to the source from which it is to
be obtained. The last condition then ensures that we can recover any symbol 
$S[i]$ by following the chain of copies until finding an explicitly stored 
symbol. Finally, note that all the $f$ values inside a block are consecutive:
if $S_k = S[i..i']$ has a pointer to $S[j..j']$, then 
$f([i..i']) = [j..j']$.

\begin{lemma}
Let $p_1 < \ldots < p_r$ be the positions that start runs in $\BWT$, and assume
$p_0=-2$ and $p_{r+1}=n+1$. Then, the partition formed by (1) all the explicit
symbols $\DLCP[p_i+k]$ for $1 \le i \le r$ and $k \in \{0,1,2\}$, and (2) all 
the nonempty regions $\DLCP[p_i+3..p_{i+1}-1]$ for all $0 \le i \le r$, 
pointing to $\DLCP[\LF(p_i+3)..\LF(p_{i+1}-1)]$, is a BMS.
\end{lemma}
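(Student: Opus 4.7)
My plan is to check the three conditions in the definition of a BMS: that the listed pieces partition the index set $[1..n]$, that each pointed region carries exactly the same $\DLCP$ content as the range it points to, and that iterating the induced map $f$ from any position eventually lands on an explicit symbol. The size bound of $4r+1$ will then follow by counting at most $3r$ explicit positions and at most $r+1$ pointing regions (one per gap $[p_i+3, p_{i+1}-1]$ for $i = 0, \ldots, r$, using the sentinels $p_0 = -2$ and $p_{r+1} = n+1$).

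For the partition condition, I will slice the index set as $[1..n] = \bigcup_{i=0}^{r} ([p_i, p_{i+1}-1] \cap [1..n])$ and observe that each slice is exhausted by the up-to-three explicit symbols $p_i, p_i+1, p_i+2$ followed by the (possibly empty) region $[p_i+3, p_{i+1}-1]$. The only delicate case is when two consecutive runs start within three positions of each other, or when $p_i+1$ or $p_i+2$ exceeds $n$; here the explicit clusters of adjacent slices overlap or are truncated, but no index of $[1..n]$ is left uncovered, which is all that is needed.

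For pointer correctness, I will invoke Lemma~\ref{lem:dlcp}. For any $p \in [p_i+3, p_{i+1}-1]$, the triple $[p-2, p]$ lies strictly inside the $\BWT$ run $[p_i, p_{i+1}-1]$, precisely because the region starts at $p_i+3$ (so $p-2 \ge p_i+1$); this sidesteps the pathology of very short runs. The lemma then yields both $\LF(p-1) = \LF(p)-1$ and $\DLCP[\LF(p)] = \DLCP[p]$. The first identity makes $\LF$ restricted to the region a translation by a fixed amount, so the image $\LF([p_i+3, p_{i+1}-1])$ is the contiguous interval $[\LF(p_i+3), \LF(p_{i+1}-1)]$, and the second identity makes the source and target segments agree pointwise. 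For termination, unpacking the BMS definition of $f$ on our partition yields $f(p) = \LF(p_i+3) + (p - (p_i+3)) = \LF(p)$ by the same translation argument, so iterating $f$ coincides with iterating $\LF$. Since $\LF$ is a single $n$-cycle (see Section~\ref{sec:bwt}), the orbit of any index meets every $p_j$, which is explicit, and the chain terminates in at most $n$ steps.

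The place I expect the write-up to be most delicate is the coverage step in the presence of very short runs, where the three explicit positions of consecutive runs overlap or fall outside $[1..n]$; once that bookkeeping is pinned down, the remaining two BMS conditions fall out directly from Lemma~\ref{lem:dlcp} together with the fact that $\LF$ is a single cycle on $[1..n]$.
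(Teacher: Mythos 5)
Your proof is correct and follows essentially the same route as the paper's: Lemma~\ref{lem:dlcp} applied at each position of a region (valid because the region starts at $p_i+3$, so $[p-2,p]$ stays inside the run) gives both the contiguity of the $\LF$-image and the pointwise equality of contents, and the single-cycle property of $\LF$ gives the BMS termination condition. The extra bookkeeping you do for overlapping or truncated explicit clusters is a detail the paper leaves implicit, but it does not change the argument.
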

\begin{proof}
By Lemma~\ref{lem:dlcp}, it holds that $\LF(p_i+3+k)=\LF(p_i+3)+k$ and
$\DLCP[p_i+3+k] = \DLCP[\LF(p_i+3)+k]$ for all $0 \le k \le p_{i+1}-p_i-4$, so 
the partition is well defined and the copies are correct. To see that
it is a BMS, it is sufficient to notice that $\LF$ is a permutation with one
cycle on $[1..n]$, and therefore $\LF^k(p)$ will eventually reach an explicit 
symbol, for some $0 \le k < n$.
\qed
\end{proof}

We now make use of the following result.

\begin{lemma}[{\cite[Thm.~1]{GNPlatin18}}] \label{lem:gnp18}
Let $S[1..n]$ have a BMS of size $b$. Then there exists a RLCFG of size 
$O(b\log(n/b))$ that generates $S$.
\end{lemma}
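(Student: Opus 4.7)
The plan is to build the RLCFG by a recompression-style construction, iteratively shortening a working sequence $W$ that initially equals $S$ and eventually becomes the axiom. At each round we introduce a batch of fresh nonterminals, rewrite $W$ in terms of them, and argue that the batch has size $O(b)$. Since each round shortens $W$ by a constant factor, after $O(\log(n/b))$ rounds we reach a sequence of length $O(b)$, and summing the per-round contributions yields the overall bound $O(b\log(n/b))$.

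Concretely, I would follow the recompression framework used for the Lempel--Ziv analogue ($z \le g = O(z\log(n/z))$), adapted to work with a BMS rather than a strictly backward-pointing parse. Each round alternates two operations on $W$. First, a pair-compression step picks a consistent bipartition of the alphabet of $W$ and replaces every maximal two-letter block $ab$ by a fresh nonterminal $Z \to ab$; a standard randomized or deterministic coloring argument ensures a constant fraction of positions participate, giving the desired length reduction. Second, a run-compression step replaces every maximal run $a^k$ by a run-length nonterminal $Z \to a^k$, which is where the RLCFG capability is essential: a plain CFG would need $\Theta(\log k)$ rules to encode such a run, whereas the RLCFG absorbs it in a single rule of size $O(1)$.

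The key combinatorial step is to bound the number of fresh nonterminals introduced per round by $O(b)$. The idea is to maintain, as an invariant across rounds, a BMS of the current $W$ whose size remains $O(b)$; this BMS is obtained by pulling back the original BMS through the rewriting. Call a position in $W$ a \emph{crossing} if it lies at a boundary between two BMS blocks (or between a block and its source under $f$). Any pair or run that gets compressed in a round either lies strictly inside a single block, in which case the same compression is forced inside the source block and no new nonterminal is contributed globally beyond that one block's contribution, or it straddles a crossing. Since there are $O(b)$ crossings, only $O(b)$ distinct new pairs and runs occur per round, so only $O(b)$ fresh nonterminals are created.

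The main obstacle will be maintaining the BMS invariant through the rewriting: compressing a pair inside a block must be matched by compressing the same pair inside its source (and its source's source, and so on) so that the pointers remain coherent after the round. The resolution is to perform the compression uniformly across $W$ and then verify that the function $f$ of the BMS descends to a well-defined function on the compressed sequence with no growth in the number of blocks. This amounts to a topological argument on the dependency DAG implied by $f$, which is acyclic precisely because the BMS is valid (every chain of pointers eventually reaches an explicit symbol). Once this invariant is in hand, the per-round size $O(b)$ together with the $O(\log(n/b))$ rounds gives an RLCFG of size $O(b\log(n/b))$ as claimed.
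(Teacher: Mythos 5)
This lemma is not proved in the paper itself; it is imported verbatim from the cited work (Gagie, Navarro and Prezza, LATIN 2018, Thm.~1), whose proof is precisely the recompression / locally-consistent-parsing argument you describe: $O(\log(n/b))$ alternating rounds of run-compression and pair-compression, with the distinct fresh nonterminals of each round charged to the $O(b)$ block boundaries of the macro scheme, since a compression occurring strictly inside a block is mirrored in its source and the chain of sources terminates by the validity of the BMS. Your proposal is therefore correct and follows essentially the same route as the cited proof.
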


Since $\DLCP$ has a BMS of size at most $4r+1$, the following corollary is 
immediate.

\begin{lemma} \label{lem:dlcp-grammar}
Let the $\BWT$ of $T[1..n]$ have $r$ runs. Then there exists a RLCFG of size
$O(r\log(n/r))$ that generates its differential $\LCP$ array, $\DLCP$.
\end{lemma}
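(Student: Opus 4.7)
The plan is to invoke the two preceding results in sequence; the lemma is essentially a corollary. The lemma immediately above constructs a bidirectional macro scheme for $\DLCP[1..n]$ consisting of two kinds of blocks: for each of the $r$ positions $p_i$ that starts a $\BWT$ run, three singleton blocks storing the explicit values $\DLCP[p_i], \DLCP[p_i+1], \DLCP[p_i+2]$ (contributing $3r$ blocks in total), and for each $0 \le i \le r$ one copy block $\DLCP[p_i+3..p_{i+1}-1]$ pointing to $\DLCP[\LF(p_i+3)..\LF(p_{i+1}-1)]$ (contributing at most $r+1$ further blocks). Thus the total number of blocks is $b \le 4r + 1 = O(r)$.

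With this BMS in hand, the desired RLCFG is obtained by a direct appeal to Lemma~\ref{lem:gnp18}, which converts any BMS of size $b$ on a length-$n$ sequence into a RLCFG of size $O(b \log(n/b))$ generating that sequence. Substituting $b = O(r)$ gives a RLCFG for $\DLCP$ of size $O(r \log(n/r))$, establishing the claim.

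There is no real obstacle: the two preceding results compose cleanly. The only sanity check is that the logarithmic factor transforms correctly, which it does because $b = \Theta(r)$ implies $\log(n/b) = \Theta(\log(n/r))$ in the regime $r \le n/e$; in the degenerate regime where $r = \Theta(n)$, the bound $O(r\log(n/r))$ should be read as $O(r)$, and the trivial grammar producing each $\DLCP[i]$ from its own terminal already achieves size $O(n) = O(r)$, so the statement remains valid.
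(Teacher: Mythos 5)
Your proof is correct and follows exactly the paper's route: the preceding lemma gives a bidirectional macro scheme of size at most $4r+1$ for $\DLCP$, and Lemma~\ref{lem:gnp18} converts it into a RLCFG of size $O(r\log(n/r))$. The extra remark about the degenerate regime $r=\Theta(n)$ is a harmless (and valid) addition that the paper leaves implicit.
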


\subsection{Supporting the primitives on the run-length grammar}
\label{sec:rlaccess}

We describe how to compute the primitives $\RMQ$ and $\PSV/\NSV$ on 
the RLCFG of $\DLCP$, in time $t_\RMQ = t_\SV = O(\log(n/r))$. The extended
primitives
$\PSV'/\NSV'$ are solved in time $t_{SV'}=O(\log(n/r)+\log\log_w r)$.
While analogous procedures have been described before on CFGs and trees 
\cite{ACN13,NS14}, the extension to RLCFGs and the particular structure of 
our grammar requires a complete description. 

The RLCFG built in Lemma~\ref{lem:gnp18} \cite{GNPlatin18} is of height 
$O(\log(n/r))$ and has one
initial rule $S \rightarrow X_1 \ldots X_{O(r)}$. The other rules are of the
form $X \rightarrow Y_1 Y_2$ or $X \rightarrow Y^t$ for $t > 2$. All the 
right-hand symbols can be terminals or nonterminals.

The data structure we use is formed by a sequence $\DLCP' = X_1 \ldots X_{O(r)}$
capturing the initial rule of the RLCFG, and an array of the other
$O(r\log(n/r))$ rules. For each nonterminal $X$ expanding to a substring $D$
of $\DLCP$, we store its length $l(X) = |D|$ and its total difference
$d(X) = D[1] + \ldots + D[l(X)]$. For terminals $X$, assume $l(X)=1$ and
$d(X)=X$. We also store a cumulative length array
$L[0]=0$ and $L[x] = L[x-1] + l(X_x)$ that can be binary searched to find
the symbol of $\DLCP'$ that contains any desired position $\DLCP[p]$. To
ensure that this binary search takes time $O(\log(n/r))$ when $r = 
\omega(n/r)$, we can store a sampled
array of positions $S[1..r]$, where $S[t] = x$ if $L[x-1] < t\cdot(n/r) \le
L[x]$ to narrow down the binary search to a range of $O(n/r)$ entries of $L$. 
We also store a cumulative differences array $A[0]=0$ and
$A[x] = A[x-1] + d(X_x)$.

Although we have already provided access to any $\LCP[p]$ in 
Section~\ref{sec:isa}, it is also possible to do it with these structures. 
We first find $x$ by binary searching $L$ for $p$, possibly with the help of 
$S$, and set $f \leftarrow A[x-1]$
and $p \leftarrow p-L[x-1]$. Then we enter recursively into nonterminal $X =
X_x$. If its rule is $X \rightarrow Y_1 Y_2$, we continue by $Y_1$ if $p\le 
l(Y_1)$; otherwise we set $f \leftarrow f+d(Y_1)$, $p \leftarrow p-l(Y_1)$, 
and continue by $Y_2$. If, instead, its rule is $X \rightarrow Y^t$, we compute
$t' = \lceil p/l(Y) \rceil - 1$, set $f \leftarrow f+t'\cdot d(Y)$,
$p \leftarrow p-t' \cdot l(Y)$, and continue by $Y$. When we finally arrive
at a terminal $X$, the answer is $f+d(X)$. All this process takes time
$O(\log(n/r))$, the height of the RLCFG.

\paragraph{\bf Answering $\RMQ$.}
To answer this query, we store a few additional structures. We store an array 
$M$ such that $M[x] = \min_{L[x-1] < k \le L[x]} \LCP[k]$, that is, the minimum
value in the area of $\LCP$ expanded by $X_x = \DLCP'[x]$. We store a 
succinct data structure $\RMQ_M$, which requires just $O(r)$ bits and 
finds the leftmost position of a minimum in any range $M[x..y]$ in 
constant time, without need to access $M$ \cite{FH11}. 
We also store, for each nonterminal $X$, the already defined values $m(X)$ and 
$p(X)$ (for terminals $X$, we can store $m(X)$ and $p(X)$ or compute them on 
the fly).

To compute $\RMQ(p,q)$ on $\LCP$, we first use $L$ and $S$ to determine that 
$\DLCP[p..q]$ contains the expansion of $\DLCP'[x+1..y-1]$, whereas
$\DLCP'[x..y]$ expands to $\DLCP[p'..q']$ with $p' < p \le q < q'$. Thus, 
$\DLCP[p..q]$ partially overlaps $\DLCP'[x]$ and $\DLCP'[y]$ (the overlap 
could be empty). We first obtain in constant time the minimum position of 
the central area, $z = \RMQ_M(x+1,y-1)$, and then the minimum value in that
area is $\LCP[L[z-1]+p(X_z)]$. To complete the query, we must compare this
value with the minima in 
$X_x \langle p-p'+1,l(X_x)\rangle$ and $X_y \langle 1,l(X_y)+q-q' \rangle$,
where $X \langle a,b \rangle$ refers to the substring $D[a..b]$ in the 
expansion $D$ of $X$. A relevant special case in this scheme is that 
$\DLCP[p..q]$ is inside a single symbol $\DLCP'[x]$ expanding to
$\DLCP[p'..q']$, in which case the query boils down to
finding the minimum value in $X_x \langle p-p'+1,l(X_x)+q-q' \rangle$.

Let us disregard the rules $X \rightarrow Y^t$ for a moment.
To find the minimum in $X_w \langle a,b \rangle$, we identify the $k =
O(\log(n/r))$ maximal nodes of the grammar tree that cover the range $[a..b]$ 
in the expansion of $X_w$. Let these nodes be $Y_1, Y_2, \ldots, Y_k$.
We then find the minimum of $m(Y_1)$, $d(Y_1)+m(Y_2)$, $d(Y_1)+d(Y_2)+m(Y_3)$, 
$\ldots$, in $O(k)$ time. Once the minimum is identified at $Y_s$, we obtain
the absolute value by extracting $\LCP[L[w-1]+l(Y_1)+\ldots+l(Y_{s-1})+
p(Y_s)]$.

Our grammar also has rules of the form $X \rightarrow Y^t$, and thus the
maximal coverage $Y_1,\ldots,Y_k$ may include a part of these rules, say
$Y^{t'}$ for some $1 \le t' < t$. We can then compute on the fly
$m(Y^{t'})$ as $m(Y)$ if $d(Y) \ge 0$, and $(t'-1)\cdot d(Y)+m(Y)$ otherwise. 
Similarly, $p(Y^{t'})$ is $p(Y)$ if $d(Y) \ge 0$ and $(t'-1)\cdot l(Y)+p(Y)$ 
otherwise.

Once we have the (up to) three minima from $X_x$, $\DLCP'[x+1..y-1]$, and
$X_y$, the position of the smallest of the three is $\RMQ(p,q)$. 

\paragraph{\bf Answering $\PSV/\NSV$ and $\PSV'/\NSV'$.}

These queries are solved analogously. Let us describe $\NSV'(p,d)$, since 
$\PSV'(p,d)$ is similar. Let $\DLCP[p..]$ be included
in the expansion of $\DLCP'[x..]$, which expands to $\DLCP[p'..]$ (for the
largest possible $p' \le p$), and let us subtract $\LCP[p'-1]=A[x-1]$ from $d$ 
to put it in relative form. We first
consider $X_x \langle p-p'+1,l(X_x) \rangle = X\langle a,b\rangle$, obtaining
the $O(\log(n/r))$ maximal nonterminals $Y_1, Y_2, \ldots, Y_k$ that cover 
$X\langle a,b\rangle$, and find the first $Y_s$ where $d(Y_1)+\ldots+d(Y_{s-1})
+m(Y_s) < d$. Then we subtract $d(Y_1)+\ldots+d(Y_{s-1})$ from $d$, add
$l(Y_1)+\ldots+l(Y_{s-1})$ to $p$, and continue recursively inside 
$Y_s$ to find the precise point where the cumulative differences fall below $d$.

The recursive traversal from $Y_s$ works as follows. If $Y_s \rightarrow Y_1
Y_2$, we first see if $m(Y_1) < d$. If so, we continue recursively on $Y_1$;
otherwise, we subtract $d(Y_1)$ from $d$, add $l(Y_1)$ to $p$, and continue 
recursively on $Y_2$. If, instead, the rule is $Y_s \rightarrow Y^t$, we 
proceed as follows. If $d(Y) \ge 0$, then the answer must be in the first copy 
of $Y$, thus we recursively continue on $Y$. If $d(Y) < 0$, instead, we must 
find the copy $t'$ into which we continue. This is the smallest $t'$ such
that $(t'-1) \cdot d(Y) + m(Y) < d$, that is, 
$t' = \max(1,2+\lfloor (d - m(Y))/d(Y) \rfloor)$. Thus we subtract 
$(t'-1)\cdot d(Y)$
from $d$, add $(t'-1)\cdot l(Y)$ to $p$, and continue with $Y$.
Finally, when we arrive at a terminal $X$, it holds that $m(X) < d$ and the
answer to the query is the current value of $p$. All of this process takes time
$O(\log(n/r))$, the height of the grammar.

It might be, however, that we traverse $Y_1, Y_2,\ldots,Y_k$, that is, the whole
$X_x \langle p-p'+1,l(X_x)\rangle$, and still do not find a value below $d$.
We then must find where we fall below (the current value of) $d$ inside
$\DLCP'[x+1..]$. Once this search identifies the leftmost position $\DLCP'[z]$ 
where the answer lies, we complete the search on $X_z \langle 1,l(X_z) \rangle$
as before, for $d \leftarrow d-A[z-1]+A[x]$.

The search problem can be regarded as follows: Given the array
$B[z] = A[z]+m(X_z)$, find the leftmost position $z > x$ such that $B[z] <
A[x]+d$. Navarro and Sadakane \cite[Sec.~5.1]{NS14} show that this query can
be converted into a weighted ancestor query on a tree: given nodes with
weights that decrease toward the root, the query gives a node $v$ and a
weight $w$ and seeks for its nearest ancestor with weight $<w$. In our case,
the tree has $O(r)$ nodes and the weights are $\LCP$ values, in the range
$[0..n-1]$.

Kopelowitz and Lewenstein \cite[Sec.~3.2]{KL07} show how this query can be 
%linear time det constr
solved in $O(r)$ space and the time of a predecessor query. Those predecessor
queries are done on universes of size $n$ where there can be arbitrarily few
elements. However, we can resort to binary search if there are $O(n/r)$
elements, within the allowed time $O(\log(n/r))$. Therefore, the predecessor
queries have to be implemented only on sets of $\Omega(n/r)$ elements.
By using the structure of Belazzougui and Navarro \cite[Thm.~14]{BN14}, the
predecessor time is $O(\log\log_w r)$. 
Therefore, we obtain time $t_{\SV'} = O(\log(n/r) + \log\log_w r)$.

This time can be reduced to $t_\SV = O(\log(n/r))$ for the simpler primitives
$\PSV/\NSV$ as follows: When $r$ is so large that $\log(n/r) < \log\log n$, 
that is, $r > n/\log n$, the allowed $\Theta(r\log(n/r) w)$ bits
of space are actually $\Omega(n\log\log n)$.
%already said
%\footnote{Since $r \log(n/r)\log n$ is asymptotically increasing with $r$, we 
%can substitute $r$ for its minimum value $n/\log n$ to obtain a lower bound.}
We are then entitled to use $O(n)$ bits of space, within which we can solve
queries $\PSV$ and $\NSV$ in $O(1)$ time \cite[Thm.~3]{FMN09}.

\subsection{Supporting operation $\mathit{Child}$}

Operation $\mathit{Child}(v,a)$ requires us to binary search the $O(\sigma)$
positions where the minimum occurs in $\LCP[v_l+1..v_r]$, and choose the
one that descends by letter $a$. Each check for $a$ takes $O(\log(n/r))$ time,
as explained.

To implement this operation efficiently, we will store for each nonterminal $X$
the number $n(X)$ of times $m(X)$ occurs inside the expansion of $X$. To do
the binary search on $\LCP[p..q]$ (with $p=v_l+1$ and $q=v_r$), we first compute
$\RMQ(p,q)$ as in the previous section, and then find the desired occurrence of
its relative version, $\mu = \LCP[\RMQ(p,q)]-\LCP[p-1]$, through 
$X_x \langle p-p'+1,l(X_x)\rangle$, $\DLCP'[x+1..y-1]$, and
$X_y \langle 1,l(X_y)+q-q' \rangle$. The values $p'$, $q'$, $x$, and $y$ are 
those we computed to obtain $\RMQ(p,q)$.

\paragraph{\bf Searching inside a nonterminal.}
To process $X_w \langle a,b \rangle$ we first determine how many 
occurrences of $\mu$ it contains. We start with a counter $c=0$ and scan again 
$Y_1, Y_2, \ldots, Y_k$. For each $Y_s$, if $d(Y_1)+\ldots+d(Y_{s-1})+m(Y_s) 
= \mu$, we add $c \leftarrow c+n(Y_s)$. To process $Y^{t'}$ in constant time
note that, if $\mu$ occurs in $Y^{t'}$, then it occurs only in the first copy 
of $Y$ if $d(Y)>0$, only in the last if $d(Y)<0$, and in every copy if 
$d(Y)=0$. Therefore, $n(Y^{t'}) = n(Y)$ if $d(Y) \not= 0$ and $t' \cdot n(Y)$
if $d(Y)=0$. 

After we compute $c$ in $O(\log(n/r))$ time,
we binary search the $c$ occurrences of $\mu$ in $X_w \langle a,b
\rangle$. For each of the $O(\log c)=O(\log\sigma)$ steps of this binary
search, we must find a specific occurrence of $\mu$, and then compute the 
corresponding letter to compare with $a$ and decide the direction of the search.
As said,
we can compute the corresponding letter in time $O(\log(n/r))$. We now show
how a specific occurrence of $\mu$ is found within the same time complexity.

\paragraph{\bf Finding a specific occurrence inside a nonterminal.}
Assume we want to find the $c'$th occurrence of $\mu$ in 
$X_x \langle p-p'+1,l(X_x)\rangle$. We traverse once again
$Y_1, Y_2, \ldots, Y_k$. For each $Y_s$, if $d(Y_1)+\ldots+d(Y_{s-1})+m(Y_s) 
= \mu$, we subtract $c' \leftarrow c'-n(Y_s)$. When the result is below 1, 
the occurrence is inside $Y_s$. We then add $l(Y_1)+\ldots+l(Y_{s-1})$ to
$p$, subtract $d(Y_1)+\ldots+d(Y_{s-1})$ from $\mu$, restore $c' \leftarrow
c'+n(Y_s)$, and recursively search 
for $\mu$ inside $Y_s$. 

Let $Y_s\rightarrow Y_1 Y_2$. If $m(Y_1) \not= \mu$, 
we continue on $Y_2$ with $p \leftarrow p+l(Y_1)$ and $\mu \leftarrow 
\mu-d(Y_1)$. If $m(Y_1) = \mu$ and $n(Y_1) \ge c'$, we continue on
$Y_1$. Otherwise, we continue on $Y_2$ with $p \leftarrow p+l(Y_1)$, 
$\mu \leftarrow \mu-d(Y_1)$ and $c' \leftarrow c'-n(Y_1)$. 

To process $Y^t$ in the quest for $c'$, we do as follows. If $d(Y)>0$, then
$\mu$ can only occur in the first copy of $Y$. Thus, if $m(Y) \not= \mu$,
we just skip $Y^t$ with $p \leftarrow p+ t\cdot l(Y)$ and 
$\mu \leftarrow \mu - t \cdot d(Y)$. If $m(Y)=\mu$, we
see if $n(Y) \ge c'$. If so, then we enter into $Y$; otherwise we skip
$Y^t$ with $p \leftarrow p+t\cdot l(Y)$, 
$\mu \leftarrow \mu - t \cdot d(Y)$ and $c' \leftarrow c'-n(Y)$.
The case where $d(Y) < 0$ is similar, except that when we enter into $Y$,
it is the last one of $Y^t$, and thus we set $p \leftarrow p+(t-1)\cdot l(Y)$
and $\mu \leftarrow \mu-(t-1)\cdot d(Y)$. 
Finally, if $d(Y)=0$, then the minimum of $Y$ appears many times.
If $m(Y) \not= \mu$, we skip $Y^t$ with $p \leftarrow p+t\cdot l(Y)$ and
$\mu \leftarrow \mu - t \cdot d(Y)$.
Otherwise, if $t \cdot n(Y) < c'$, we must also skip $Y^t$, updating $p$ and
$\mu$, and also $c' \leftarrow c' - t \cdot n(Y)$. Otherwise, we must enter into
the $t'$th occurrence of $Y$, where $t' = \lceil c'/n(Y)\rceil$, by continuing
on $Y$ with $p \leftarrow p+ (t'-1)\cdot l(Y)$, 
$\mu \leftarrow \mu - (t'-1)\cdot d(Y)$ and $c' \leftarrow c'-
(t'-1)\cdot n(Y)$.

Therefore, if the desired minimum is in $X_w \langle a,b \rangle$, we spot it
in $O(\log(n/r)\log\sigma)$ time.

\paragraph{\bf Searching the central area.}
If we do not find the desired letter inside 
$X_x \langle p-p'+1,l(X_x)\rangle$ or $X_y \langle 1,l(X_y)+q-q' \rangle$,
we must find it in $\DLCP'[x+1..y-1]$. Here we proceed differently.
From the computation of $\RMQ(p,q)$ we know if there are occurrences
of $\mu$ in $\DLCP'[x+1..y-1]$. If there are, then any minimum in this range
is an occurrence of $\mu$. We binary search those minima by using a 
representation that uses $O(r)$ bits on top of $M$ and 
finds an approximation to the median of the minima in constant time \cite{FH10}
(it might not be the median but its rank is a fraction between $1/16$ and
$15/16$ of the total). % builds in linear deterministic time
For each $\DLCP[z]$ that contains some occurrence of $\mu$, we obtain its 
leftmost position $L[z-1]+p(Y_z)$ and determine the associated letter, compare 
it with $a$ and determine if the binary
search on $\DLCP'[x+1..y-1]$ goes left or right. Since there are $O(\sigma)$
minima in $\DLCP'[x+1..y-1]$, the search also takes $O(\log(n/r)\log\sigma)$
time.

Once we have finally
determined that our letter must occur inside some $X_z$, we process it as
done on $X_w \langle a,b \rangle$ to determine the exact occurrence, if it
exists.

%!TEX root = paper.tex

\section{Experimental results} \label{sec:experiments}

We implemented our simplest scheme, that is, Theorem~\ref{thm:locating} using
$O(r)$ space, and compared it with the state of the art.

\subsection{Implementation}

We implemented the simpler version described by Bannai et al.~\cite{BGI18} of
the structure of Theorem \ref{thm:locating} (with $s=1$) using the
\texttt{sdsl} library~\cite{gbmp2014sea}.\footnote{\texttt{https://github.com/simongog/sdsl-lite}} 
For the run-length FM-index, we used the implementation described by Prezza~\cite[Thm. 28]{Pre16} (suffix array sampling excluded), taking $(1+\epsilon)r(\log(n/r)+2) + r\log\sigma$ bits of space (lower-order terms omitted for readability) for any constant $\epsilon > 0$ fixed at construction time and supporting $O(\log(n/r)+\log\sigma)$-time LF mapping. 
In our implementation, we chose $\epsilon = 0.5$.
This structure employs Huffman-compressed wavelet trees (\texttt{sdsl}'s \texttt{wt\_huff}) to represent run heads, as in our experiments they turned out to be comparable in size and faster than Golynski et al.'s structure~\cite{golynski2006rank}, which is implemented in \texttt{sdsl}'s \texttt{wt\_gmr}.

Our \texttt{locate} machinery is implemented as follows. 
We store one gap-encoded bitvector \texttt{First$[1..n]$} marking with a bit set the text positions that are the first in their BWT run (note that \texttt{First$[i]$} refers to \emph{text} position $i$, not BWT position). 
\texttt{First} is implemented using \texttt{sdsl}'s \texttt{sd\_vector}, takes overall $r(\log(n/r)+2)$ bits of space (lower-order terms omitted), and answers queries in $O(\log(n/r))$ time. 
We also store a vector \texttt{FirstToRun$[1..r]$} such that text position $\mathtt{First.select_1(i)}$ belongs to the \texttt{FirstToRun$[i]$}-th BWT run. 
\texttt{FirstToRun} is a packed integer vector stored in $r\log r$ bits. 
Finally, we explicitly store $r$ suffix array samples in a vector  \texttt{Samples$[1..r]$}: \texttt{Samples$[p]$} is the text position corresponding to the last letter in the $p$-th BWT run. 
\texttt{Samples} is also a simple packed vector, stored in $r\log n$ bits of space. 

Let $\SA[sp..ep]$ be the range of our query pattern. 
The run-length FM-index and vector \texttt{Samples} are sufficient to find the range $[sp..ep]$ and locate $SA[ep]$ using the simplified toe-hold 
lemma \cite{BGI18}. 
Moreover, for $0<i<n$ it holds that $\phi(i) = \mathtt{Samples[FirstToRun[First.rank_1(\mathit{i})]-1]} + \Delta$, where $\Delta = i - \mathtt{First.predecessor}(i)$ (assuming for simplicity that the first text position is marked in \texttt{First}; the general case can easily be handled with few more operations).
Note that $\phi$ is evaluated in just $O(\log(n/r))$ time. Notably, this time drops to $O(1)$ in the average case, that is, when bits set in \texttt{First} are uniformly distributed. This is because \texttt{sdsl}'s \texttt{sd\_vector} breaks the bitvector into $r$ equal-sized  buckets and solves queries inside each bucket (which in the average case contains just $O(1)$ bits set).
Occurrences $\SA[ep-1], \SA[ep-2], \dots, \SA[sp]$ are then retrieved as $\phi^k(\SA[ep])$, for $k=1,\dots, ep-sp$.

Overall, our index takes at most $((1+\epsilon)\log(n/r) + 2\log n + \log \sigma + 4+2\epsilon)\,r$ bits of space for any constant $\epsilon>0$ (lower-order terms omitted for readability) and, after counting, locates each pattern occurrence in $O(\log(n/r))$ time. 
%This space is $2r\log n + \epsilon\cdot r\log(n/r) + (4+2\epsilon)r$ bits larger than an optimal run-length BWT representation. 
Note that the space of our index essentially coincides with the information-theoretic minimum
needed for storing the run-length BWT and $2r$ text positions in plain format (which is $r\log(n/r)+r\log\sigma + 2r\log n$ bits); therefore it is close to the optimum, since our locate strategy requires storing $2r$ text positions. In the following, we refer to our index as \texttt{r-index}; the code is publicly available%
\footnote{\texttt{https://github.com/nicolaprezza/r-index}}.

\subsection{Experimental Setup}

We compared \texttt{r-index} with the state-of-the-art index for each 
compressibility measure: \texttt{lzi}%
\footnote{\texttt{https://github.com/migumar2/uiHRDC} \label{migumar}} 
\cite{KN13,CFMPN16} ($z$), 
\texttt{slp}\footref{migumar} \cite{CNfi10,CFMPN16} ($g$), 
\texttt{rlcsa}\footnote{\texttt{https://github.com/adamnovak/rlcsa}}
\cite{MNSVrecomb09,MNSV09} ($r$), and
\texttt{cdawg}\footnote{\texttt{https://github.com/mathieuraffinot/locate-cdawg}}
\cite{BCGPR15} ($e$). 
We also included \texttt{hyb}%
\footnote{\texttt{https://github.com/hferrada/HydridSelfIndex}}
\cite{FGHP13,FKP18}, which combines a Lempel-Ziv index with an FM-index,
with parameter $M=8$, which is optimal for our experiment.
We tested \texttt{rlcsa} using three suffix array sample rates per dataset: the rate $X$ resulting in the same size for \texttt{rlcsa} and \texttt{r-index}, plus rates $X/2$ and $X/4$.

We measured memory usage and \texttt{locate} times per occurrence of all indexes on 1000 patterns of length 8 extracted from four repetitive datasets, which are also published with our implementation: 
\begin{description}
\item[\texttt{DNA:}] an artificial dataset of 629{,}145 copies of a DNA sequence of length 1000 (Human genome) where each character was mutated with probability $10^{-3}$;
\item[\texttt{boost:}] a dataset consisting of concatenated versions of the \texttt{GitHub}'s \texttt{boost} library;
\item[\texttt{einstein:}] a dataset consisting of concatenated versions of \texttt{Wikipedia}'s English \texttt{Einstein} page;
\item[\texttt{world\_leaders:}] a collection of all pdf files of CIA World Leaders from 2003 to 2009 downloaded from the \texttt{Pizza\&Chili} corpus.
\end{description}

Table~\ref{tab:datasets} shows the main characteristics of the datasets: the
length $n$, the alphabet size $\sigma$, the number of runs $r$ in their BWT, 
the number $z$ of LZ77 phrases%
\footnote{Using code requested to the authors of an efficient LZ77 parser 
\cite{KKP13}.}, 
and the size of $g$ the grammar generated by Repair%
\footnote{Using the ``balanced'' version offered at 
\texttt{http://www.dcc.uchile.cl/gnavarro/repair.tgz}}. %c+2r
Note the varying degrees of repetitiveness: \texttt{boost} is the most
repetitive dataset, followed by \texttt{DNA} and \texttt{einstein}, which are
similar, and followed by the least repetitive one, \texttt{world\_leaders}.
It can be seen that $g \ge z$ by a factor of 1.3--2.8 and $r \ge g$ by a factor
of 1.0--1.8. Therefore, we could expect in general that the indexes based on 
grammars or on Lempel-Ziv parsing are smaller than \texttt{r-index}, but as we
see soon, the differences are not that large.

\begin{table}[t]
\begin{center}
\begin{tabular}{l@{~}|@{~}r@{~}|@{~}r@{~}|@{~}r@{~}|@{~}r@{~}|@{~}r}
Dataset & $n$~~~~~~~ & $\sigma$\,\,\, & $r$~~~~~~~~~~~ & $z$~~~~~~~~~ & $g$~~~~~~~~~~ \\
\hline
\texttt{DNA}            & 629{,}140{,}006 &  10 & 1{,}287{,}508 (0.065) & 551{,}237 (0.028) & 727{,}671 (0.037) \\
\texttt{boost}          & 629{,}145{,}600 &  96 &      62{,}025 (0.003) &  22{,}747 (0.001) &  63{,}480 (0.003) \\
\texttt{einstein}       & 629{,}145{,}600 & 194 &     958{,}671 (0.049) & 292{,}117 (0.015) & 631{,}239 (0.032) \\
\texttt{world\_leaders} &  46{,}968{,}181 &  89 &     573{,}487 (0.391) & 175{,}740 (0.120) & 507{,}525 (0.346) \\
\end{tabular}
\end{center}
\caption{The main characteristics of our dataset. 
The numbers in parentheses are rough approximations to the 
bits/symbol achievable by the associated compressors by using one 4-byte
integer per run, phrase, or right-hand-side grammar symbol.}
\label{tab:datasets}
\end{table}

Memory usage (Resident Set Size, RSS) was measured using \texttt{/usr/bin/time} between index loading time and query time. This choice was motivated by the fact that, due to the datasets' high repetitiveness, the number $occ$ of pattern occurrences was very large. This impacts sharply  on the working space of indexes such as \texttt{lzi} and \texttt{slp}, which report the occurrences in a recursive fashion. When considering this extra space, these indexes always use more space than the \texttt{r-index}, but we prefer to emphasize the relation between the index sizes and their associated compressibility measure.
The only existing implementation of \texttt{cdawg} works only on DNA files, so we tested it only on the \texttt{DNA} dataset.

\subsection{Results}

Figure \ref{fig:results} summarizes the results of our experiments. On
all datasets, the time per occurrence of \texttt{r-index} is 100--300
nanoseconds per occurrence, outperforming all the indexes based on Lempel-Ziv
or grammars by a factor of 10 to 100. These indexes are generally smaller, 
using 45\%--95\% (\texttt{lzi}), 80\%--105\% (\texttt{slp}), and 45\%--100\% 
(\texttt{hyb}) of the space of \texttt{r-index}, at the expense of being orders
of magnitude slower, as said: 20--100 (\texttt{lzi}), 8--50 (\texttt{slp}), and
7--11 (\texttt{hyb}) times. Further, \texttt{r-index} dominates all practical 
space-time tradeoffs of \texttt{rlcsa}: using the same space, \texttt{rlcsa} 
is 20--500 times slower than \texttt{rindex}, and letting it use 1.7--4.4 times
the space of \texttt{r-index}, it is still 5--100 times slower. The regular 
sampling mechanism of the FM-index is then completely outperformed. Finally,
\texttt{cdawg} is almost twice as fast as \texttt{r-index}, but it is 60 
times larger (indeed, larger than a classical FM-index), which leaves it out
of the competition on ``small'' indexes. 

\begin{figure}[t]
\includegraphics[width=\textwidth]{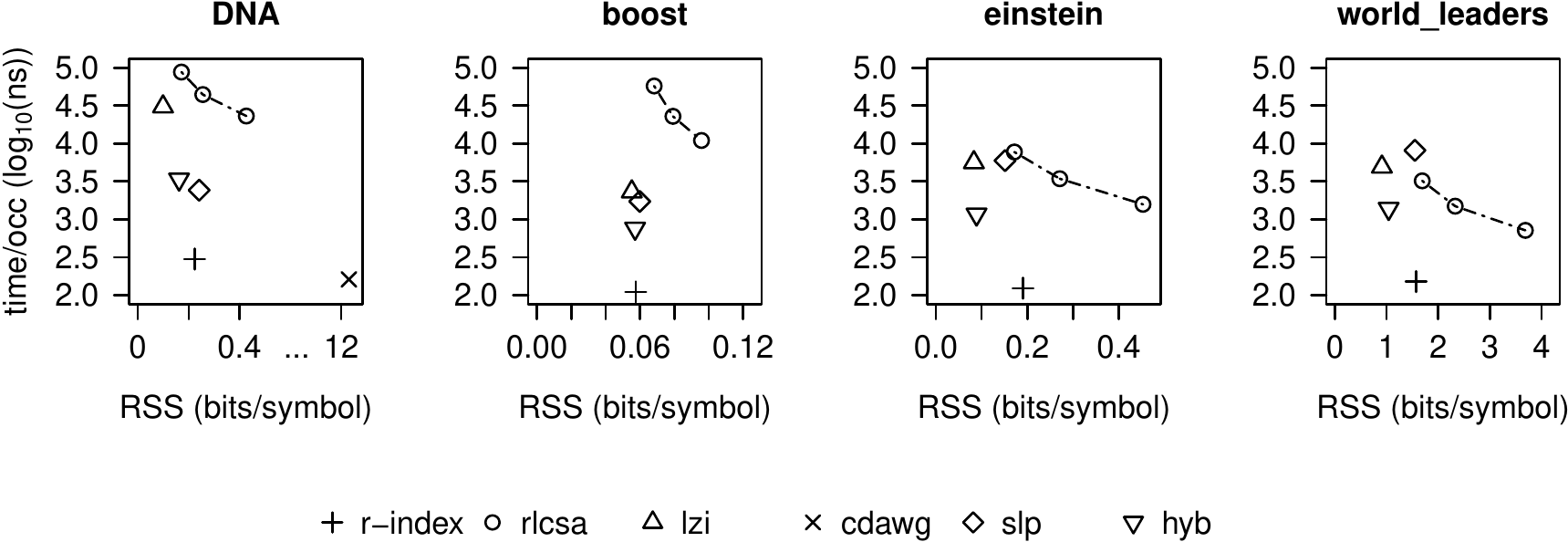}
\caption{Locate time per occurrence and working space (in bits per symbol) of the indexes. The $y$-scale measures nanoseconds per occurrence reported and is 
logarithmic.}
\label{fig:results}
\end{figure}

Comparing with the bits per symbol of Table~\ref{tab:datasets}, we note that
the space of \texttt{r-index} is 2--4 words per run, whereas \texttt{lzi} and
\texttt{hyb} use 3--6 words per Lempel-Ziv phrase and \texttt{slp} uses 4--6
words per symbol on the right-hand-side of a rule. The low space 
per run of \texttt{r-index} compared to the indexes based on $z$ or $g$ shrink
the space gap one could expect from comparing the measures $r$, $z$, and $g$.

\subsection{Scalability}

We finish with an experiment showing the space performance of the indexes on a 
real collection of Influenza nucleotide sequences from NCBI%
\footnote{\texttt{ftp://ftp.ncbi.nih.gov/genomes/INFLUENZA/influenza.fna.gz}, the
description is in the parent directory.}. 
It is formed by 641{,}444 sequences, of total size 0.95 GB after removing the
headers and newlines. We built the indexes on 100 prefixes of the dataset, 
whose sizes 
increased evenly from 1\% to 100\% of the sequences. As the prefixes grew, 
they became more repetitive; we measured how the bits per symbol used by the 
indexes decreased accordingly. As a repetition-insensitive variant, we also 
include a classical succinct FM-index ({\tt fm-index}), with a typical sampling
rate of $\lceil \lg n\rceil$ positions for locating, plain bitvectors for the
wavelet trees and for marking the sampled $\SA$ positions, and a $\rank$ 
implementation using 1.25 bits per input bit.

\begin{figure}[t]
\begin{center}
\includegraphics[width=0.7\textwidth]{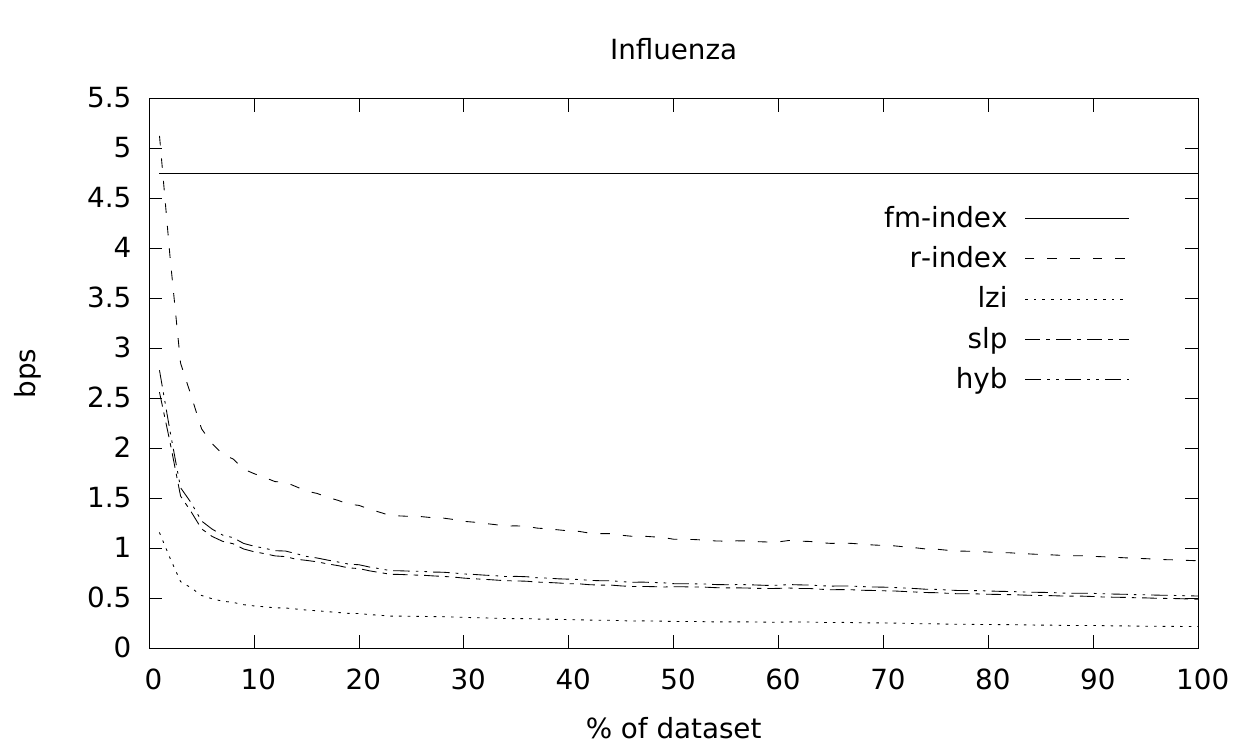}
\end{center}
\vspace*{-5mm}
\caption{Index sizes (in bits per symbol, bps) for increasing prefixes of a 
repetitive collection of genomic data.}
\label{fig:influenza}
\end{figure}

Figure~\ref{fig:influenza} shows the evolution of the index sizes. As we add
% extrapolados de influenza.fna22
%lzi: 8320 hybrid: 20000 r-index: 34144 slp: 19016
%rindex: 75.91 bits/r, lzi 73.57 bits/z, hyb 176.84 bits/z, slp 59.83 bits/g
more and more similar sequences, all the indexes (except the FM-index) decrease
in relative size (bps), as expected. On the complete collection, {\tt fm-index} 
still uses 4.75 bits per symbol (bps), whereas {\tt r-index} has decreased to 
0.88 bps (about 2.4 words per run), {\tt hyb} to 0.52 bps (about 5.5 words per 
phrase, 60\% of {\tt r-index}), {\tt slp} to 0.49 bps (about 1.9 words per 
symbol, 56\% of {\tt r-index}), and {\tt lzi} to 0.22 bps (about 
2.3 words per phrase, 25\% of {\tt r-index}). We remind that, in exchange, 
{\tt r-index} is 10--100 times faster than those indexes, and that it uses 
18\% of the space of the classic {\tt fm-index} (a factor that decreases
as the collection grows).

This collection, where the repetitiveness is not as high as in the previous 
datasets (in fact, it is close to that of {\tt world\_leaders}), shows that 
$r$ (and thus {\tt r-index}) is more sensitive than $g$ and $z$ to the decrease
in repetitiveness. In particular, $g$ and $z$ are always $O(n/\log_\sigma n)$,
and thus the related indexes always use $O(n\log\sigma)$ bits. Instead, $r$ can
be as large as $n$ \cite{Pre16}, so in the worst case {\tt r-index} can use 
$\Theta(n\log n)$ bits. Note, in particular, that the other indexes are below 
the 2 bps of the raw data after processing just 3\% of the collection; {\tt
r-index} breaks this barrier only after 8\%.

% !TEX root = paper.tex

\section{\textcolor{red}{Construction}} \label{sec:construction}

In this section we analyze the working space and time required to build all 
our data structures. Table~\ref{tab:constr} summarizes the results. The
working space does not count the space needed to read the text in online form,
right-to-left. Times are worst-case unless otherwise stated. Expected cases
hold with high probability (w.h.p.), which means over $1-1/n^c$ for any fixed
constant $c$.

\begin{table}[t]
\begin{tabular}{l@{~}|@{~}l@{~}|@{~}l}
Structure & Construction time & Construction space \\
\hline
Basic counting and locating (Lem.~\ref{lem:rlfm}) 
	& $O(n\log r)$ 	& $O(r)$ \\
~~~~~ or	& $O(n)$ 	& $O(n)$ \\
\hline
Fast locating (Thm.~\ref{thm:locating} $+$ Lem.~\ref{lemma: general locate}, $s=\log\log_w(n/r)$) 
	& $O(n(\log r + \log\log_w(n/r)))$ 	& $O(r\log\log_w(n/r))$  \\
~~~~~ or	& $O(n)$ 				& $O(n)$ \\
\hline
Optimal counting and locating (Thm.~\ref{thm:optimal}) 
	& $O(n(\log r + \log\log_w(n/r)))$ 	& $O(r\log\log_w(\sigma+n/r))$  \\
~~~~~ or	& $O(n+r(\log\log \sigma)^3)$ 	& $O(n)$ \\
\hline
RAM-optimal counting and locating (Thm.~\ref{thm:optimal packed}) 
	& $O(n(\log r + \log\log_w(n/r)))$ & $O(rw\log_\sigma\log_w(\sigma+n/r))$  \\
	& $~~~~+ O(rw^{1+\epsilon})$ exp. & \\
~~~~~ or 	& $O(n+rw^{1+\epsilon})$ exp.	& $O(n)$ \\
\hline
Text substrings (Thm.~\ref{thm:extract}) 
	& $O(n(\log r + \log\log_w(n/r)))$ 	& $O(r\log(n/r))$ \\
~~~~~ or	& $O(n)$ 				& $O(n)$ \\
\hline
Accessing $\SA$, $\ISA$, and $\LCP$ (Thm.~\ref{thm:dsa}, \ref{thm:disa} \& \ref{thm:dlcp}) 
	& $O(n(\log r+\log\log_w(n/r)))$ 	& $O(r\log(n/r))$ \\
~~~~~ or	& $O(n)$ 	& $O(n)$ \\
\hline
Optimal counting/locating, $O(r\log(n/r))$ space (Thm.~\ref{thm:optimal 2}) 
	& $O(n+r(\log\log\sigma)^2)$ exp.	& $O(r\log(n/r)\log n/\log\log n)$  \\
\hline
Suffix tree (Thm.~\ref{thm:stree}) 
				& $O(n+r\log\log_w r)$	& $O(n)$ \\
~~~~~ without operation {\em LAQ}$_S$ 	& $O(n)$ 		& $O(n)$ \\
~~~~~ with $O(Sort(n))$ I/Os &
			        $O(n(\log r+\log\log_w(n/r)))$        &
$O(B+r\log(n/r))$ \\
~~~~~ with $O(n/B+\log(n/r))$ I/Os, no {\em TDepth} \& {\em LAQ}$_T$ &
			        $O(n(\log r+\log\log_w(n/r)))$        &
$O(B+r\log(n/r))$ \\
\hline
%Optimal-time locating (Thm.~\ref{thm:optimal time}) 
%	& $O(n\log n)$ exp. 	& $O(n)$ \\
%~~~~~ correct w.h.p.\ only & $O(n(\log r+\log\log_w(n/r))$ & $O(r\log(n/r))$ \\
%~~~~~ correct w.h.p.\ only & $O(n)$  & $O(n)$ \\
%\hline
%Optimal-time locating, packed (Thm.~\ref{thm:optimal time packed}) 
%	& $O(n\log n+rw\log_\sigma(n/r)w^\epsilon)$ exp. 	& $O(n+rw\log_\sigma(n/r)w^\epsilon)$ \\
%~~~~~ correct w.h.p.\ only & $O(n+rw\log_\sigma(n/r)w^\epsilon)$  & $O(rw\log_\sigma(n/r)w^\epsilon)$ \\
%\hline
%Optimal-time counting (Thm.~\ref{thm:optimal time count}) 
%	& $O(n\log n)$ exp. 	& $O(n)$ \\
%~~~~~ correct w.h.p.\ only & $O(n)$  & $O(n)$ \\
%~~~~~ correct w.h.p.\ only \& $O(n/B+r\log(n/r))$ I/Os & $O(n(\log r+\log\log_w (n/r)))$  & $O(B+r\log(n/r))$ \\
%\hline
%Optimal-time counting, packed (Thm.~\ref{thm:optimal time count packed}) 
%	& $O(n\log n+rw\log_\sigma(n/r)w^\epsilon)$ exp. 	& $O(n+rw\log_\sigma(n/r)w^\epsilon)$ \\
%~~~~~ correct w.h.p.\ only & $O(n+rw\log_\sigma(n/r)w^\epsilon)$  & $O(n)$ \\
%~~~~~ correct w.h.p.\ only \& $O(n/B+r\log(n/r))$ I/Os &
%$O(n(\log r + \log\log_w(n/r))$  & $O(B+rw\log_\sigma(n/r)w^\epsilon)$ \\
% & ~~~~~~~ $+rw\log_\sigma(n/r)w^\epsilon)$  & \\
%\hline
\end{tabular}
\vspace{5mm}
\caption{Construction time and space for our different data structures, for
any constant $\epsilon>0$. All the expected times (``exp.'') hold w.h.p.\ as 
well. Variable $B$ is the block size in the external memory model, where
$Sort(n)$ denotes the I/O complexity of sorting $n$ integers.}
\label{tab:constr}
\end{table}

\subsection{Dictionaries and predecessor structures}
\label{sec:dictconstr}

A dictionary mapping $t$ keys from a universe of size $u$ to an interval
$[1..O(t)]$ can be 
implemented as a perfect hash function using $O(t)$ space and searching
in constant worst-case time. Such a function can be built in $O(t)$ space
and expected time \cite{FKS84}. A construction that takes $O(t)$ time w.h.p.\ 
\cite{Wil00} starts with a distributor hash function that maps the keys to an 
array of buckets $B[1..t]$. Since the largest bucket contains 
$O(\log t/\log\log t)$ keys w.h.p., we can build a fusion tree \cite{FW93} 
on each bucket, which requires linear space and construction time, and 
constant query time.
% Proof of whp in
%http://www.cs.toronto.edu/~toni/Courses/263-2013/handouts/hashing-uiuc.pdf

If we are interested in deterministic 
construction time, we can resort to the so-called deterministic 
dictionaries, which use $O(t)$ space and can be built in time 
$O(t(\log\log t)^2)$ \cite{Ruz08}.

A {\em minimum perfect hash function (mphf)} maps the keys to the range 
$[1..t]$. This is trivial using $O(t)$ space (we just store the mapped value),
but it is also possible to store a mphf within $O(t)$ bits, building it in 
$O(t)$ expected time and $O(t)$ space \cite{BBD09}. Such expected time holds 
w.h.p.\ as well if they use a distributor function towards $t'=O(t/\log t)$
buckets. For each bucket $B_i$, $i \in [1..t']$, they show that w.h.p.\ 
$O(\log t)$ trials are sufficient to find a perfect hash function $\sigma(i)$ 
for $B_i$, adding up to $O(t)$ time w.h.p. Further, the indexes $\sigma(i)$ 
found distribute geometrically (say, with a constant parameter $p$), 
and the construction also fails if their sum exceeds $\lambda\cdot t/p$ for 
some constant $\lambda$ of our choice. The probability of that event is
exponentially decreasing with $t'$ for any $\lambda>1$ \cite{Jan17}.
%their r is our t', their n is our t

A {\em monotone mphf (mmphf)}, in addition, preserves the order of the keys.
A mmphf can be stored in $O(t\log\log u)$ bits while answering in constant 
time. Its construction time and space is as for a mphf
 \cite[Sec.~3]{BBPV09} (see also \cite[Sec.~3]{BBPV11}).
Therefore, all the expected cases we mention related to building perfect hash
functions of any sort hold w.h.p.\ as well. Alternatively, their construction
time can turn into worst-case w.h.p.\ of being correct. %OJO unclear

Our predecessor structure \cite[Thm.~14]{BN14} requires $O(t)$ words and
answers in time $O(\log\log_w (u/t))$. Its reduced-space version
\cite[Sec.~A.1 and A.2]{BN14}, using $O(t\log(u/t))$ bits, does 
not use hashing. It is a structure of $O(\log\log_w(u/t))$ layers, each 
containing a bitvector of $O(t)$ bits. Its total worst-case construction time 
is $O(t\log\log_w (u/t))$, and requires $O(t)$ space.

Finally, note that if we can use $O(u)$ bits, then we can build a constant-time
predecessor structure in $O(u)$ time, by means of rank queries on a bitvector.

\subsection{Our basic structure} \label{sec:constrbasic}

The basic structures of Section~\ref{sec:rlfmi} can be built in $O(r)$ 
space. We start by using an $O(r)$-space construction of the run-length encoded
$\BWT$ that scans $T$ once, right to left, in $O(n\log r)$ time \cite{Pre16}
(see also Ohno et al.~\cite{OTIS17} and Kempa \cite{Kem19}). The text $T$ is 
not needed anymore from now on.

We then build the predecessor structure $E$ that enables the $\LF$-steps in 
time $O(\log\log_w(n/r))$. The construction takes $O(r\log\log_w(n/r))$ time
and $O(r)$ space. The positions $p$ that start or end $\BWT$ runs are easily
collected in $O(r)$ time from the run-length encoded $\BWT$. 

The structures to compute $\rank$ on $L'$ in time $O(\log\log_w\sigma)$ 
\cite{BN14} also use predecessor structures. These are organized in $r/\sigma$ 
chunks of size $\sigma$. Each chunk has $\sigma$ lists of positions in 
$[1..\sigma]$ of lengths $\ell_1,\ldots,\ell_\sigma$, which add up to $\sigma$. 
The predecessor structure for the $i$th list is then built over 
a sample of $\ell_i/\log_w \sigma$ elements, in time
$O((\ell_i/\log_w \sigma)\log\log_w(\sigma\log_w(\sigma)/\ell_i))$. Adding over 
all the lists, we obtain $O((\sigma/\log_w\sigma)\log\log_w\sigma) 
\subseteq O(\sigma)$. The total construction time of this structure is 
then $O(r)$.

In total, the basic structures can be built in $O(n\log r)$ time and $O(r)$
space. Of course, if we can use $O(n)$ construction space, then we easily obtain
$O(n)$ construction time, by building the suffix array in linear time and then 
computing the structures from it. In this case the predecessor structure $E$ is 
implemented as a bitvector, as explained, and $\LF$ operates in constant time.

\subsection{Fast locating} \label{sec:constrlocate}

Structure $E$ collects the starts of runs. In Section~\ref{sec:locate}
we build two extended versions that collect starts and ends of runs. The
first is a predecessor structure $R$ (Lemma~\ref{lem:find_one}), which 
organizes the $O(r)$ run starts and ends separated by their character, on a 
universe of size $\sigma n$. The second uses two predecessor structures
(Lemmas~\ref{lem:find_neighbours} and \ref{lemma: general locate}), called
$P^+$ and $P^-$ in Lemma~\ref{lemma: general locate}, which contain the $\BWT$ 
positions at distance at most $s$ from run borders.

To build both structures, we simulate a backward traversal of $T$ (using 
$\LF$-steps from the position of the symbol \$) to collect the text positions of
all the run starts and ends (for $R$), or all the elements at distance at most
$s$ from a run start or end (for $P^+$ and $P^-$). We use predecessor and
successor queries on $E$ (the latter are implemented without increasing the 
space of the predecessor structure) and accesses to $L'$ to 
determine whether the current text position must be stored, and where. The 
traversal alone takes time $O(n\log\log_w(n/r))$ for the $\LF$-steps.

The predecessor structure $R$ is built in $O(r)$ space and 
$O(r\log\log(\sigma n/r)) \subseteq O(n \log\log\sigma) \subseteq
O(n\log r)$ (since $\sigma \le r$) time. 
The structures $P^+$ and $P^-$ contain $O(rs)$ elements in a universe of
size $n$, and thus are built in $O(rs)$ space and time
$O(rs\log\log(n/(rs))) \subseteq O(n)$ (we index up to $rs$ elements but
never more than $n$).

Overall, the structure of Theorem~\ref{thm:locating}, enhanced as in
Lemma~\ref{lemma: general locate}, can be built in $O(rs)$ 
space and $O(n\log r + n\log\log_w(n/r))$ time. If we can use
$O(n)$ space for the construction, then the $\LF$-steps can be implemented in 
constant time and the traversal requires $O(n)$ time. In this case, the
structures $R$, $P^+$ and $P^-$ can also be built in $O(n)$ time, since the 
predecessor searches can be implemented with bitvectors. 

For the structure of Lemma~\ref{lemma: lcp} we follow the same procedure,
building the structures $P^+$ and $P^-$. The classical algorithm to build the
base $\LCP$ array \cite{KLAAP01} uses $O(n)$ time and space. Within this
space we can also build the predecessor structures in $O(n)$ time, as before.
Note that this structure is not needed for Theorem~\ref{thm:locating}, but in 
later structures. Using those, we will obtain a construction of $\LCP$ using 
less space (see Section~\ref{sec:constrsa}).

\subsection{Optimal counting and locating} \label{sec:constropt}

The first step of this construction is to build the compact trie that contains 
all the distinct substrings of length $s$ of $T$. All these lie around sampled 
text positions, so we can simulate a backward traversal of $T$ using $E$ and 
$L'$, as before, while maintaining a window of the last $s$ symbols seen. 
Whenever we hit a run start or end in $L$, we collect the next $s-1$ 
symbols as well, forming a substring of length $2s-1$, and from there we restart
the process, remembering the last $s$ symbols seen.\footnote{If we hit other
run starts or ends when collecting the $s-1$ additional symbols, we form a 
single longer text area including both text samples; we omit the details.}
This traversal costs $O(n\log\log_w(n/r))$ time as before.

The memory area where the edges of the compact trie will point is simply the
concatenation of all the areas of length $2s-1$ we obtained. We now collect the
$s$ substrings of length $s$ from each of these areas, and radix-sort the 
$O(rs)$ resulting strings of length $s$, in time $O(rs^2)$. After the strings
are sorted, if we remove duplicates (getting $\sigma^*$ distinct strings) and 
compute the longest common prefix of the consecutive strings, we easily build 
the compact trie structure in a single $O(\sigma^*)$-time pass. We then assign 
consecutive mapped values to the $\sigma^*$ leaves and also assign the values
$v_{\min}$ and $v_{\max}$ to the internal nodes. By recalling the suffix array
and text positions each string comes from, we can also assign the values $p$
and $\SA[p]$ (or $\SA^*[p]$) to the trie nodes.

To finish, we must create the perfect hash functions on the children of each 
trie node. There are $O(rs)$ children in total but each set stores at most 
$\sigma$ children, so the total deterministic time to create the dictionaries 
is $O(rs (\log\log \sigma)^2)$.
In total, we create the compact trie in time $O(n\log\log_w(n/r) + rs^2 +
rs (\log\log \sigma)^2)$ and space $O(rs)$.

The construction of the RLFM-index of $T^*$ can still be done within this 
space, without explicitly generating $T^*$, as follows. For each position 
$L[i]$, the $\BWT$ of $T$, we perform $s$ $\LF$-steps to obtain the metasymbol 
corresponding to $L^*[i]$, which we use to traverse the compact trie in order 
to find the mapped symbol $L^*[i]$. Since the values of $L^*$ are obtained in
increasing order, we can easily compress its runs on the fly, in $O(rs)$ space.
The $\BWT$ of $T^*$ is then obtained in time $O(ns\log\log_w (n/r))$.
We can improve the time by obtaining this $\BWT$ run by run instead of
symbol by symbol: We start from each run $L[x_1,y_1]$ and compute $x_2 = 
\LF(x_1)$. From it, we find the end $y_2$ 
of the run $x_2$ belongs in $L$. The new run is $L[x_2,y_2]$. We repeat the 
process $s$ times until obtaining $L[x_s,y_s]$. The next run of $L^*$ is then
$L^*[x_1,x_1+(y_s-x_s)]$. The computation of $y_k$ from $x_k$ can be done by
finding the predecessor of $x_k$ in $E$ and associating with each element in 
$E$ the length of the run it heads, which is known when building $E$. In this 
way, the cost to compute the $\BWT$ of $T^*$ decreases to 
$O(r^* \log\log_w(n/r)) \subseteq O(n\log\log_w(n/r))$.

From the $\BWT$, the other structures of the RLFM-index of $T^*$ are built as in
Section~\ref{sec:constrbasic}, in time $O(r^* \log\log_w(n/r^*)) \subseteq
O(n)$ and space $O(r^*) \subseteq O(rs)$. The array $C^*$ is also built in 
$O(r^*) \subseteq O(rs)$ time and $O(\sigma^*) \subseteq O(rs)$ space.

To finish, we need to build the structure of Lemma~\ref{lemma: general locate},
which as seen in Section~\ref{sec:constrlocate} is built in $O(rs)$ space and 
$O(n\log\log_w(n/r))$ time. With $s=\log\log_w(\sigma+n/r)$, the total 
construction time is upper bounded
by $O(n(\log\log_w(n/r)+(\log\log\sigma)^2))$ and the construction space by 
$O(rs)$ (we
limit $rs$ by $n$ because we never have more than $n$ symbols in the trie or
runs in $L^*$). When added to the $O(n\log r)$ time to build the $\BWT$ of $T$,
the total simplifies to $O(n(\log r + \log\log_w(n/r)))$ because $\sigma 
\le r$.

If we can use $O(n)$ space for the construction, then the $\LF$-steps can be 
implemented in constant time. We can generate $T^*$ explicitly and use 
linear-time and linear-space suffix array construction algorithms, so all the
structures are built in time $O(n)$. The compact trie can be built by pruning
at depth $s$ the suffix tree of $T$, which is built in $O(n)$ time. We still
need to build the perfect hash functions for the children, in deterministic time
$O(rs (\log\log\sigma)^2)$. When added to $O(n)$, the total simplifies to
$O(n+r(\log\log\sigma)^3)$.

The difference when building the RAM-optimal version is that the compact trie
must be changed by the structure of Navarro and Nekrich \cite[Sec.~2]{NN17}.
In their structure, they jump by $\log_\sigma n$ symbols, whereas we jump by
$w/\log\sigma$ symbols. Their perfect hash functions, involving $O(rs)$ nodes,
can be built in time $O(rs(\log\log(rs))^2)$, whereas their weak prefix search
structures \cite[Thm.~6]{BBPV18} are built in expected time $O(rs w^\epsilon)$
for any constant $\epsilon>0$. For the value of $s$ used in this case, the
time can be written as $O(r w^{1+\epsilon})$. The construction space stays in
$O(rs)$.

\subsection{Access to the text}

The structure of Theorem~\ref{thm:extract} can be built as follows. We 
sample the text positions of starts and ends of $\BWT$ runs. Each sampled
position induces a constant number of half-blocks at each of the $O(\log(n/r))$
levels (there are also $r$ blocks of level 0). For each block or half-block, 
we must find its primary occurrence. We first find all their rightmost $\BWT$
positions with an $\LF$-guided scan of $T$ of time $O(n\log\log_w(n/r))$,
after which we can read each block or half-block backwards in 
$O(\log\log_w(n/r))$ time per symbol. For each of them, we follow the method 
described in Lemma~\ref{lemma:primocc} to find its primary 
occurrence in $O(\log\log_w(\sigma+n/r))$ time per symbol, doing the backward
search as we extract its symbols backwards too. Since at level $l$ there are
$O(r)$ blocks or half-blocks of length $O(n/(r \cdot 2^{l-1}))$, the total
length of all the blocks and half-blocks adds up to $O(n)$, and the total time
to find the primary occurrences is $O(n\log\log_w(\sigma+n/r))$.

We also need to fill in the text at the leaves of the structure. This can
be done with an additional traversal of the $\BWT$, filling in the $T$
values (read from $L'$) 
at the required positions whenever we reach them in the traversal. 
The extra time for this operation is $O(n\log\log_w(n/r))$ (we use predecessor
and successor queries on $E$ to
determine when our $\BWT$ position is close enough to a sample so that the
current symbol of $L'$ must be recorded in the leaf associated with the 
sample).

Therefore, the structure of Theorem~\ref{thm:extract} is built in
$O(n\log\log_w(\sigma+n/r))$ time and $O(r\log(n/r))$ working space, once
the basic structure of Lemma~\ref{lem:rlfm} is built. 

In case of having $O(n)$ space for construction, we can replace predecessor
structures with $\rank$ queries on bitvectors of $n$ bits, but we still have
the $O(\log\log_w \sigma)$ time for $\rank$ on $L'$. Thus the total time is
$O(n\log\log_w \sigma)$. Although this is the most intuitive construction, we
will slightly improve it in Section~\ref{sec:constrsa}.

\subsection{Suffix array access and byproducts} \label{sec:constrsa}

The other structures of Section~\ref{sec:sa} give access to cells of the suffix array
($\SA$), its inverse ($\ISA$), and the longest common prefix array ($\LCP$).

The structure of Theorem~\ref{thm:dsa} is analogous to that of
Theorem~\ref{thm:extract}: it has $O(\log(n/r))$ levels and $O(r)$ blocks or
half-blocks of length $s_l = n/(r\cdot 2^{l-1})$ at each level $l$. However,
its domain is the suffix array cells and the way to find a primary occurrence 
of each block is different. At each level,
we start with any interval of length $s_l$ and compute $\LF$ on its left 
extreme. This leads to another interval of length $s_l$. We repeat the process 
until completing the cycle and returning to the initial interval. Along the way,
we collect all the intervals that correspond to blocks or half-blocks of this
level. Each time the current interval contains or immediately
follows a sampled $\BWT$ position in $E$, we make it the primary occurrence of 
all the blocks or half-blocks collected so far (all those must coincide with
the content of the current block or half-block), and reinitialize an empty set
of collected blocks. This process takes $O(n\log\log_w(n/r))$ time for a fixed
level. We can perform a single traversal for all the levels simultaneously, 
storing all the blocks in a dictionary using the left extreme as their search 
key. As we traverse the $\BWT$, we collect the blocks of all lengths starting
at the current position $p$. Further, we find the successor of $p-1$ in $E$ to 
determine the minimum length of the blocks that cover or follow the nearest 
sampled position, and all the sufficiently long collected blocks find their
primary occurrence starting at $p$. The queries on $E$ also amount to
$O(n\log\log_w(n/r))$ time.

This multi-level process requires a dictionary of all the $O(r\log(n/r))$ 
blocks and half-blocks. If we implement it as a predecessor structure, it takes
$O(r\log(n/r))$ space, it is constructed in $O(r\log(n/r)\log\log_w(n/r))$ 
time, and answers the $O(n)$ queries in time $O(n\log\log_w(n/r))$. The 
collected segments can be stored separated by length, and the $O(\log(n/r))$
active lengths be marked in a small bitvector, where we find the nonempty
sets over some length in constant time.

We also need to fill the $\DSA$ cells of the leaves of the structure. This can
be done with an additional traversal of the $\BWT$, filling in the $\SA$
values at the required positions whenever we reach them in the traversal. We
can then easily convert $\SA$ to $\DSA$ values in the leaves. This does not
add extra time or space, asymptotically.

The construction of the structures of Theorem~\ref{thm:disa} is analogous. This
time, the domain of the blocks and half-blocks are the text positions and,
instead of traversing with $\LF$, we must use $\phi$. This corresponds to
traversing the $\BWT$ right to left, keeping track of the
corresponding position in $T$. We can maintain the text position using
our basic structure of Lemma~\ref{lem:find_neighbours}. Then, if the current
text position is $i$, we can use the predecessor structures on $T$ to find the
first sampled position following $i-1$, to determine which collected blocks
have found their primary occurrence. We can similarly fill the
required values $\DISA$ by traversing the $\BWT$ right-to-left and writing
the appropriate $\ISA$ values. Therefore, we can build the structures within 
the same cost as Theorem~\ref{thm:dsa}.

In both cases, if we have $O(n)$ space available for construction, we can
build the structures in $O(n)$ time, since $\LF$ can be computed in constant
time and all the dictionaries and predecessor structures can be implemented 
with bitvectors.
We can also use these ideas to obtain a slightly faster construction for
the structures of Theorem~\ref{thm:extract}, which extract substrings of $T$.

\begin{lemma} \label{lem:T}
Let $T[i-1..i]$ be within a phrase. Then it holds that
$\phi(i-1)=\phi(i)-1$ and $T[i-1]=T[\phi(i)-1]$.
\end{lemma}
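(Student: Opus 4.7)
The plan is to adapt the proof of Lemma~\ref{lem:disa}, exchanging the roles of $\SA$ and $\ISA$ (and correspondingly of $\LF$ and $\phi$). I set $p=\ISA[i]$ and $q=\ISA[i-1]$; then $\SA[p]=i$, $\SA[q]=i-1$, and $\LF(p)=q$ since $\SA[\LF(p)]=\SA[p]-1=i-1=\SA[q]$. The crucial step is to deduce from the phrase hypothesis that $\BWT[p-1]=\BWT[p]$; once this is in hand, both conclusions follow by straightforward $\LF$ algebra.

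To extract that $\BWT$-run condition, I would unpack ``$T[i-1..i]$ within a phrase'' as saying that these two text characters lie interior to a common phrase, so in particular $T[i-1]$ is not itself a phrase start. By Definitions~\ref{def:phrase} and \ref{def: sampled position}, this is exactly the statement that $T[i-1]$ is not sampled, i.e., its $\BWT$ position --- which is precisely $\ISA[i]=p$, since $\BWT[\ISA[i]]=T[\SA[\ISA[i]]-1]=T[i-1]$ --- is neither the first nor the last position of its $\BWT$ run. In particular $\BWT[p-1]=\BWT[p]$, as required.

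From here the argument closes quickly. Since $\BWT[p-1]=\BWT[p]$, the $\LF$ formula yields $\LF(p-1)=\LF(p)-1=q-1$, and therefore
\[
\phi(i-1) \,=\, \SA[\ISA[i-1]-1] \,=\, \SA[q-1] \,=\, \SA[\LF(p-1)] \,=\, \SA[p-1]-1 \,=\, \phi(i)-1,
\]
proving the first equality. The second one falls out by reading the $\BWT$ on both sides of $\BWT[p-1]=\BWT[p]$:
\[
T[\phi(i)-1] \,=\, T[\SA[p-1]-1] \,=\, \BWT[p-1] \,=\, \BWT[p] \,=\, T[\SA[p]-1] \,=\, T[i-1].
\]
The main (and essentially the only) obstacle is the initial bookkeeping that translates ``within a phrase'' into the precise statement that $\ISA[i]$ lies in the interior of a $\BWT$ run; once this is done, the rest is the same chain of identities used in Lemmas~\ref{lem:dsa} and \ref{lem:disa}.
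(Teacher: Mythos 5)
Your proof is correct and follows essentially the same route as the paper, which simply observes that $\phi(i-1)=\phi(i)-1$ was already established in the proof of Lemma~\ref{lem:disa} and reads off $T[i-1]=\BWT[p]=\BWT[p-1]=T[\phi(i)-1]$ from the very same chain of identities. The only cosmetic difference is in how the phrase hypothesis is translated: the paper uses the $\ISA$-phrase convention of Section~\ref{sec:isa} (``$i$ is not a phrase beginning, hence $p=\ISA[i]$ is not the first position of its $\BWT$ run''), whereas you route through Definition~\ref{def:phrase} via ``$T[i-1]$ is not sampled''; both readings deliver the single fact the argument needs, namely $\BWT[p-1]=\BWT[p]$.
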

\begin{proof}
The fact that $\phi(i-1)=\phi(i)-1$ is already proved in Lemma~\ref{lem:disa}.
From that proof it also follows that $T[i-1] = BWT[p] = BWT[p-1] = T[j-1] =
T[\phi(i)-1]$.
\qed
\end{proof}

\begin{lemma} \label{lem:T2}
Let $T[i-1..i+s]$ be within a phrase, for some $1 < i \le n$ and
$0 \le s \le n-i$. Then there exists $j \not= i$ such that $T[j-1..j+s-1] =
T[i-1..i+s-1]$ and $[j-1..j+s]$ contains the first position of a phrase.
\end{lemma}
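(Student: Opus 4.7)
The plan is to mirror the structure of the proof of Lemma~\ref{lem:disa2}, using Lemma~\ref{lem:T} as the single-step analog of Lemma~\ref{lem:disa}. The target $j$ will be $\phi^k(i)$ for the smallest $k \ge 1$ at which the $\phi$-image of the window $[i-1..i+s]$ first meets a phrase boundary.

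First I would unfold a single application of Lemma~\ref{lem:T} to the whole window. Since $[i-1..i+s]$ is within a phrase, none of $i, i+1, \ldots, i+s$ is a phrase start, so Lemma~\ref{lem:T} applies to each of the $s+1$ pairs $[i+k-1..i+k]$ with $k = 0, 1, \ldots, s$. Iterating the identity $\phi(p-1) = \phi(p)-1$ gives $\phi(i+k) = \phi(i) + k$ for $k = 0, \ldots, s$, and the companion identity $T[p-1] = T[\phi(p)-1]$ yields
\[
    T[i-1+k] = T[\phi(i+k)-1] = T[\phi(i)-1+k] \qquad \text{for } k = 0, 1, \ldots, s.
\]
Writing $i' = \phi(i)$, this says exactly $T[i'-1..i'+s-1] = T[i-1..i+s-1]$, which is the equality we want for the candidate $j = i'$.

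Next I would case on whether $[i'-1..i'+s]$ contains a phrase start. If it does, we set $j = i'$ and are done: $j \ne i$ because the hypothesis rules out any phrase start in $[i-1..i+s]$, whereas $[j-1..j+s]$ contains one. Otherwise $[i'-1..i'+s]$ is itself within a phrase, and we repeat the argument: set $i'' = \phi(i')$, observe $T[i''-1..i''+s-1] = T[i'-1..i'+s-1] = T[i-1..i+s-1]$, and check the new window. Formally this is an induction on $k$: as long as none of the windows $[\phi^t(i)-1..\phi^t(i)+s]$ for $t = 0,1,\ldots,k-1$ contains a phrase start, each is within a phrase and we obtain $T[\phi^k(i)-1..\phi^k(i)+s-1] = T[i-1..i+s-1]$.

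The main (but short) obstacle is termination: one must argue that some $k \ge 1$ exists for which $[\phi^k(i)-1..\phi^k(i)+s]$ contains a phrase start. This follows because $\phi$ is a single cycle on $\{1,\ldots,n\}$ (the orbit $\ldots \to \SA[p] \to \SA[p-1] \to \ldots$ runs through all suffix array positions, with $\SA[1]$ wrapping to $\SA[n]$ via Definition~\ref{def:phi}, just as $\LF$ is a single cycle in Section~\ref{sec:bwt}); hence the iterates $i, \phi(i), \phi^2(i),\ldots$ visit every text position before returning to $i$. Since at least one phrase start exists (Definition~\ref{def:phrase} gives $r \ge 1$ of them), some iterate lands within distance $1$ of a phrase start, which is exactly the stopping condition. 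Taking the least such $k \ge 1$ produces the desired $j = \phi^k(i) \ne i$ with $T[j-1..j+s-1] = T[i-1..i+s-1]$ and $[j-1..j+s]$ containing the first position of a phrase. \qed
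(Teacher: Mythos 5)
Your proof is correct and follows essentially the same route as the paper's: iterate Lemma~\ref{lem:T} (the analogue of Lemma~\ref{lem:disa}) to propagate the equality $T[\phi(i)-1..\phi(i)+s-1]=T[i-1..i+s-1]$, stop at the first iterate whose window meets a phrase start, and invoke the single-cycle structure of $\phi$ for termination. You merely make explicit the chaining of the single-pair identity across the window and the choice of the least $k\ge 1$, details the paper leaves implicit by declaring the argument analogous to Lemma~\ref{lem:disa2}.
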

\begin{proof}
The proof is analogous to that of Lemma~\ref{lem:disa2}.
By Lemma~\ref{lem:T}, it holds that $T[i'-1..i'+s-1] = T[i-1..i+s-1]$, where
$i' = \phi(i)$. If $T[i'-1..i'+s]$ contains the first position of a
phrase, we
are done. Otherwise, we apply Lemma~\ref{lem:T} again on $[i'-1..i'+s]$, and
repeat until we find a range that contains the first position of a phrase.
This search eventually terminates because $\phi$ is a permutation with a 
single cycle.
\qed
\end{proof}

We can then find the primary occurrences for all the blocks in 
Theorem~\ref{thm:extract} analogously as for
$\DISA$ (Theorem~\ref{thm:disa}). We traverse $T$ with $\phi$ (i.e., we 
traverse the $\BWT$ right to left, using Lemma~\ref{lem:find_neighbours} to
compute $\phi$ each time). This time we index the blocks and half-blocks using
their right extreme, collecting all those that end at the current position $i$
of $T$. Then, at each position $i$, we use the predecessor structures on $T$ 
to find the nearest sampled position preceding $i+1$, to determine which 
collected blocks and half-blocks
have found their primary occurrence. We can similarly fill the
required values of $T$ with a final traversal of $\BWT$, accessing $L'$.
Therefore, we can build these structures within the same cost of 
Theorem~\ref{thm:disa}.

Finally, the construction for $\LCP$ access in Theorem~\ref{thm:dlcp} is a 
direct combination of Theorem~\ref{thm:dsa} (i.e., $\SA$) and 
Lemma~\ref{lemma: lcp} (i.e., {\em PLCP} extension, with $s=\log(n/r)$). In
Section~\ref{sec:constrlocate} we saw how to build the latter in $O(n)$ time
and space. Within $O(n)$ space, we can also build the structure of
Theorem~\ref{thm:dlcp} in $O(n)$ time. We can, however, build the structure 
of Lemma~\ref{lemma: lcp} within $O(r\log(n/r)+rs)$ space if we first build 
$\SA$, $\ISA$, and the extraction structure. The classical linear-time 
algorithm \cite{KLAAP01} is
as follows: we compare $T[\SA[2]..]$ with $T[\SA[1]..]$ until they differ;
the number $\ell$ of matching symbols is $\LCP[2]$. Now we jump to compute
$\LCP[\Psi(2)]$, where $\Psi(p)=\ISA[(\SA[p]\!\!\mod n)+1]$ is the inverse of 
$\LF$
\cite{GV06}. Note that $\LCP[\Psi(2)]=lcp(T[\SA[\Psi(2)]..],T[\SA[\Psi(2)-1]..])
=lcp(T[\SA[2]+1..],T[\SA[\Psi(2)-1]..])$ and, if $\ell>0$, this is at least 
$\ell-1$ because $T[\SA[2]+1..]$ already shares the first $\ell-1$ symbols 
with some lexicographically smaller suffix,
$T[\SA[1]+1..]$. Thus the comparison starts from the position $\ell$ onwards:
$\LCP[\Psi(2)]=\ell-1+lcp(T[\SA[\Psi(2)]+\ell-1..],T[\SA[\Psi(2)-1]+\ell-1..])$.
This process continues until the cycle $\Psi$ visits all the positions of 
$\LCP$.

We can simulate this algorithm, traversing the whole virtual array $\LCP[1..n]$ 
but writing only the $O(rs)$ cells that are at distance $s$ from a run border. 
We first build $P^+$, $P^-$, and $\LCP'$ as for Lemma~\ref{lemma: lcp}. We then
traverse $T$ backwards virtually, using $\LF$, in time $O(n\log\log_w(n/r))$, 
spotting the positions in $P^\pm = P^+ \cup P^-$. Say we find $p \in P^\pm$ and
the previous $p' \in P^\pm$ was found $d$ steps ago. This means that 
$p' = \Psi^d(p)$ 
is the next relevant suffix after $p$ along the $\LCP$ algorithm. We store 
$next[f(p)] = \langle p',d\rangle$, where $next$ is a table aligned with
$\LCP'$. Once this pass is complete, we simulate the algorithm starting at the 
last relevant $p$ value we found: we compute $\LCP[p] = \ell$ and store 
$\LCP'[f(p)]=\ell$. Then, if $next[f(p)] = \langle p',d\rangle$, we set
$p = p'$ and $\ell = \max(1,\ell-d)$. Along the process, we carry out $O(rs)$ 
string comparisons for a total of $O(n)$ symbols. Each string comparison takes
time $O(\log(n/r))$ in order to compute $\ISA$. We extract the desired 
substrings of $T$ by chunks of $\log(n/r)$ symbols, so that comparing $\ell$ 
symbols costs $O(\ell + \log(n/r))$. Overall, the traversal takes time 
$O(n+rs\log(n/r))$,
plus the $O(n\log\log_w(n/r))$ time to compute $next$. Added to the 
$O(n\log r + n\log\log_w(n/r))$ time needed in Section~\ref{sec:constrlocate} 
to build the sampling structures, we have a total time of 
$O(n\log r + n\log\log_w(n/r) + rs\log(n/r))$, within $O(r\log(n/r)+rs)$ space. 
For $s=\log(n/r)$, as required in Theorem~\ref{thm:dlcp}, the space is
$O(r\log(n/r))$ and the time is in $O(n\log r + n\log\log_w(n/r))$, because
$O(rs\log(n/r)) = O(r\log^2(n/r)) \subseteq O(n)$.

Finally, to obtain optimal counting and locating in space $O(r\log(n/r))$, we
only need to care about the case $r \ge n/\log n$, so the allowed space 
becomes $\Omega(n\log\log n)$ bits. In this case we use an $O(n)$-bit compressed
suffix tree enriched with the structures of Belazzougui and Navarro 
\cite[Lem.~6]{BN13}. This requires, essentially, the suffix tree topology 
represented with parentheses, edge lengths (capped to $O(\log\log n)$ bits), 
and mmphfs on the first letters of the edges towards the nodes' children. The 
parentheses and edge lengths are obtained directly left-to-right, with a 
sequential pass over $\LCP$ \cite{KLAAP01,Sad07}. If we have $O(n\log n)$ bits
for the construction (which can be written as $O(r\log(n/r)\log n/\log\log n)$
space), the first letters are obtained directly from the suffix array
and the text, all in $O(n)$ time. The construction of the mmphfs on (overall) 
$O(n)$ elements can be done in $O(n)$ expected time. We need, in addition, the
structures to extract substrings of $T$ and entries of $\SA$, and a compact
trie on the distinct strings of length $\log(n/r)$ in $T$. With $O(n)$ space,
the other structures of Section~\ref{sec:constropt} are built in
$O(n+r(\log\log\sigma)^2)$ expected time.

\subsection{Suffix tree}
\label{sec:streeconstr}

The suffix tree needs the compressed representations of $\SA$, $\ISA$, and
$\LCP$. While these can be built in $O(r\log(n/r))$ space, the suffix tree
construction will be dominated by the $O(n)$ space needed
to build the RLCFG on $\DLCP$ in Lemmas~\ref{lem:gnp18} and 
\ref{lem:dlcp-grammar}. Therefore, we build $\SA$, $\ISA$, and $\LCP$ in
$O(n)$ time and space. 

Starting from the plain array $\DLCP[1..n]$,
the RLCFG is built in $O(\log(n/r))$ passes of the 
$O(n)$-time algorithm of Je\.z \cite{Jez15}. This includes identifying the
repeated pairs, which can also be done in $O(n)$ time via radix sort.
The total time is also $O(n)$, because the lengths of the strings compressed 
in the consecutive passes decrease exponentially. 

All the fields $l$, $d$, $p$, $m$, $n$, etc.\ stored for the nonterminals are 
easily computed in $O(r\log(n/r)) \subseteq O(n)$ time, that
is, $O(1)$ per nonterminal.  The arrays $L$, $A$, and $M$ are computed in 
$O(r)$ time and space.  The structure RMQ$_M$ is built in $O(r)$ time and bits 
\cite{FH11}. Finally, the structures used for solving $\PSV'$ and $\NSV'$ 
queries on $\DLCP'$ (construction of the tree for the weighted level-ancestor
queries \cite{FH11}, supporting the queries themselves \cite{KL07}, and the 
simplification
for $\PSV/\NSV$ \cite{FMN09}), as well as the approximate median of the minima 
\cite{FH10}, are built in $O(r)$ time and space, as shown by their authors. 

The construction of the predecessor data structures for the weighted
level-ancestor queries requires creating several structures with $O(r)$
elements in total, on universes of size $n$, having at least $n/r$ elements in
each structure. The total construction time is then $O(r\log\log_w r)$.
Note that this predecessor structure is needed only for $\PSV'/\NSV'$, not for
$\PSV/\NSV$, and thus it can be omitted unless we need the operation
{\em LAQ}$_S$.

In addition, the suffix tree requires the construction of the compressed
representation of $\PTDE$ \cite{FMN09}. This is easily done in $O(n)$ space 
and time by traversing a classical suffix tree.

We note that, with $O(n/B+\log(n/r))$ I/Os (where $B$ is the external memory
block size), we can build most of the suffix tree in main memory space $O(B
+ r\log(n/r))$. The main bottleneck is the 
algorithm of Je\.z \cite{Jez15}. The algorithm starts with two sequential passes
on $\DLCP$, first identifying runs of equal cells (to collapse them into one
symbol using a rule of the form $X \rightarrow Y^t$) and second collecting all
the distinct pairs of consecutive symbols (to create some rules of the form $X
\rightarrow YZ$). Both kinds of rules will add up to $O(r)$ per pass, so the 
distinct pairs can be stored in a balanced tree in main memory using $O(r)$ 
space. Once the pairs to replace are defined (in $O(r)$ time), the algorithm
traverses the text once again, doing the replacements. The new array is of
length at most $(3/4)n$; repeating this process $O(\log(n/r))$ times will
yield an array of size $O(r)$, and then we can finish. By streaming the
successively smaller versions of the array to external memory, we obtain the
promised I/Os and main memory space. The computation time is dominated by the
cost of building the structures $\SA$, $\ISA$, and $\LCP$ in $O(r\log(n/r))$ 
space: $O(n(\log r + \log\log_w(n/r))$. The balanced tree operations add
another $O(n\log r)$ time to this complexity.

The other obstacle is the construction of $\PTDE$. This can be done in
$O(Sort(n))$ I/Os, $O(n)$ computation, and $O(r)$ additional main memory space 
by emulating the linear-time algorithm to build the suffix tree topology from 
the $\LCP$
array \cite{KLAAP01}. This algorithm traverses $\LCP$ left to right, and
maintains a stack of the internal nodes in the rightmost path of the suffix 
tree known up to now, each with its string depth (the stack is easily 
maintained on disk with $O(n/B)$ I/Os). Each new $\LCP[p]$ cell 
represents a new suffix tree leaf. For each such leaf, we pop nodes from the 
stack until we find a node whose string depth is $\le \LCP[p]$. The sequence
of stack sizes is the array $\TDE$. We write those $\TDE$ entries to disk
as they are generated, left to right, in the format $\langle \TDE[p],\SA[p]
\rangle$. Once this array is generated on disk, we sort it by the second
component, and then the sequence of first components is array $\PTDE$. This
array is then read from disk left to right, as we simultaneously fill the
run-length compressed bitvector that represents it in $O(r)$ space 
\cite{FMN09}. The left-to-right traversal of $\LCP$ and $\SA$ is done in
$O(n)$ time by accessing their compressed representation by chunks of 
$\log(n/r)$ cells, using
Theorem~\ref{thm:dsa} and Lemma~\ref{lemma: lcp} with $s=\log(n/r)$.

\section{Conclusions}
\label{sec:conclusion}

We have closed the long-standing problem of efficiently locating the occurrences
of a pattern in a text using an index whose space is bounded by the number of
equal-letter runs in the Burrows-Wheeler transform (BWT) of the text. 
The $occ$ occurrences of a pattern $P[1..m]$ in a text $T[1..n]$ over alphabet 
$[1..\sigma]$ whose BWT has $r$ runs can be counted in time
$O(m\log\log_w(\sigma+n/r))$ and then located in 
$O(occ\log\log_w(n/r))$ time, on a $w$-bit RAM machine, using an $O(r)$-space 
index. Using space $O(r\log\log_w(\sigma+n/r))$, the counting and
locating times are reduced to $O(m)$ and $O(occ)$, respectively, which is
optimal in the general setting. Further, using 
$O(rw\log_\sigma\log_w(\sigma+n/r))$ space we can also obtain optimal 
time in the packed setting, replacing $O(m)$ by $O(\lceil m\log(\sigma)/w\rceil)$ in the counting time. Our findings also include $O(r\log(n/r))$-space structures 
to access consecutive entries of the text, suffix array, inverse suffix array, 
and longest common prefix array, in optimal time plus a per-query penalty of
$O(\log(n/r))$. We upgraded those
structures to a full-fledged compressed suffix tree working in $O(r\log(n/r))$ 
space and carrying out most navigation operations in time $O(\log(n/r))$. All
the structures can be built in times ranging from $O(n)$ worst-case to
$O(nw^{1+\epsilon})$ expected time and $O(n)$ space, and many can be built 
within the same asymptotic space of the final solution plus a single pass over 
the text.

The number of runs in the BWT is an important measure of the compressibility
of highly repetitive text collections, which can be compressed 
by orders of magnitude by exploiting the repetitiveness. While the 
first index of this type \cite{MNSVrecomb09,MNSV09}
managed to exploit the BWT runs, it was not able to locate 
occurrences efficiently. This gave rise to many other indexes based on other
measures, like the size of a Lempel-Ziv parse \cite{LZ76}, the size of a 
context-free grammar \cite{KY00}, the size of the smallest compact automaton
recognizing the text substrings \cite{BBHMCE87},
etc. While the complexities are not always comparable \cite{GNPlatin18},
the experimental results show that our proof-of-concept implementation 
outperforms all the space-efficient alternatives by one or two orders 
of magnitude in locating time.

This work triggered several other lines of research. From the idea of cutting
the text into phrases defined by BWT run ends, we showed that a run-length 
context-free grammar (RLCFG) of size $O(r\log(n/r))$ can be built on the text 
by using locally consistent parsing \cite{Jez15}. 
This was generalized to a RLCFG built on top of any bidirectional macro scheme 
(BMS) \cite{SS82}, which allowed us to prove bounds on the Lempel-Ziv 
approximation to the optimal BMS, as well as several other related bounds 
between compressibility measures \cite{GNPlatin18,NP18.2}. Also, the idea that 
at least one occurrence of any text substring must cross a phrase boundary led 
Kempa and Prezza \cite{KP18} to the concept of {\em string attractor}, a set of
$\gamma$ text positions with such a property. They prove that string attractors
subsume all the other measures of repetitiveness (i.e., $\gamma\le\min(r,z,g)$),
and design universal data structures of size $O(\gamma\log(n/\gamma))$ for 
accessing the compressed text, analogous to ours. Navarro and Prezza then 
extend these ideas to the 
first self-index on attractors \cite{NP18}, of size $O(\gamma\log(n/\gamma))
\subseteq O(r\log(n/r))$, yet they do not obtain our optimal query times.

On the other hand, some questions remain open, in particular regarding 
the operations that can be supported within $O(r)$ space. We have shown that 
this space is not only sufficient to represent the text, but also to 
efficiently count and locate pattern occurrences. We required, however,
$O(r\log(n/r))$ space to provide random access to the text. This raises
the question of whether efficient random access is possible within $O(r)$ space.
For example, recalling Table~\ref{tab:related}, random access in sublinear time
is possible within $O(g)$ space ($g$ being the size of the smallest grammar) 
but it has only been achieved in $O(z\log(n/z))$ space ($z \le g$ being the 
size of the Lempel-Ziv parse); recall that $r$ is incomparable with $g$ and $z$.
On the other hand, random access is possible within $O(\gamma\log(n/\gamma))$ 
space for any attractor of size $\gamma$, as explained.
A more specific question, but still intriguing, is whether we can
provide random access to the suffix array of the text in $O(r)$ space: note 
that we can return the cells {\em that result from a pattern search} within
this space, but accessing an arbitrary cell requires $O(r\log(n/r))$ space,
and this translates to the size required by a suffix tree. On the other hand,
it seems unlikely that one can provide suffix array or tree functionality 
within space related to $g$, $z$, or $\gamma$, since these measures are not 
related to the structure of the suffix array: this is likely to be a specific 
advantage of measure $r$.

Finally, we are working on converting our index into an actual software tool 
for handling large repetitive text collections, and in particular integrating
it into widely used bioinformatic software. This entails some further 
algorithmic challenges. One is to devise practical algorithms for building the 
BWT of very large repetitive datasets within space bounded by the 
repetitiveness. While our results in Section~\ref{sec:construction} are at a
theoretical stage, recent work by Boucher et al.~\cite{BGKM18} may be relevant.
Offering efficient techniques to insert new sequences in an existing
index are also important in a practical context; there is also some progress
in this direction~\cite{BGI18}. Another important aspect is, as explained in 
Section~\ref{sec:experiments}, making the
index less sensitive to lower repetitiveness scenarios, as it could be the
case of indexing short sequences (e.g., sets of reads) or metagenomic
collections. We are working on a hybrid with the classic FM-index to handle in
different ways the areas with higher and lower repetitiveness. Finally,
extending our index to enable full suffix tree functionality will require, 
despite our theoretical achievements in Section~\ref{sec:stree}, a significant
amount of algorithm engineering to obtain good practical space figures.

\section*{Acknowledgements}

We thank Ben Langmead for his pointers to the prevalence of the FM-index in
bioinformatics software, and Meg Gagie for checking our grammar.

\bibliographystyle{plain}
\bibliography{paper}

\end{document}